\newif\ifprocs
\procstrue
\procsfalse 

\ifprocs
\documentclass[a4paper,USenglish,cleveref,autoref,thm-restate]{lipics-v2021}
\else
\documentclass{article}
\usepackage[utf8]{inputenc}
\usepackage{fullpage}
\fi

\usepackage{setspace}
\usepackage{graphicx}

\usepackage{url} 

\usepackage{soul}

\usepackage{enumitem}
\usepackage{microtype}
\usepackage{lineno}

\usepackage{amsthm,amsmath,amssymb}
\usepackage{thmtools}
\usepackage{thm-restate}
\usepackage{mathtools}
\usepackage{csquotes}
\MakeOuterQuote{"}

\usepackage{xcolor}
\ifprocs
\hypersetup{colorlinks=true,allcolors=blue}
\else
\usepackage[colorlinks=true,allcolors=blue]{hyperref}
\usepackage{xspace}
\usepackage{xfrac}
\fi

\usepackage{color}
\usepackage[ruled,vlined]{algorithm2e}
\usepackage{tabularx}
\usepackage{pdflscape,latexsym,bm,array,caption,textcomp}

\newcolumntype{e}{>{\centering\arraybackslash}p{10.6em}}
\newcolumntype{w}{>{\centering\arraybackslash}p{15.5em}}
\newcolumntype{t}{>{\centering\arraybackslash}p{14.5em}}

\ifprocs
\newtheorem*{theorem*}{Theorem}

\newtheorem{conjecture}{Conjecture}

\else
\newtheorem{theorem}{Theorem}[section]
\newtheorem{corollary}[theorem]{Corollary}
\newtheorem{lemma}[theorem]{Lemma}
\newtheorem{claim}[theorem]{Claim}
\newtheorem{conjecture}{Conjecture}
\newtheorem{proposition}[theorem]{Proposition}
\theoremstyle{definition}
\newtheorem{definition}{Definition}[section]
\newtheorem{observation}[theorem]{Observation}
\fi

\theoremstyle{remark}
\newtheorem{remark}{Remark}[section]

\newcommand{\Samp}{\textsc{SAMP}}
\newcommand{\PRSamp}{\textsc{PR-SAMP}}

\newcommand{\Dual}{\textsc{DUAL}}
\newcommand{\Eval}{\textsc{EVAL}}
\newcommand{\SEval}{\textsc{SET-EVAL}}
\newcommand{\CO}{\textsc{COND}}
\newcommand{\CD}{\textsc{COND-EVAL}}
\newcommand{\CE}{\textsc{COND-PR}}
\newcommand{\CEval}{\textsc{CEVAL}}
\newcommand{\supp}{\textsc{supp}}
\newcommand{\GHD}{\textsc{Gap-Hamming}}
\newcommand{\rk}{\textsc{rank}}
\newcommand{\blk}{\textsc{blk}}
\newcommand{\E}{\mathbb{E}}

\newcommand{\PCo}{\textsc{PAIRCOND}}
\newcommand{\ksmedits}[1]{{#1}}

\title{Support Size Estimation: The Power of Conditioning}

\ifprocs

\authorrunning{J. Open Access and J.\,R. Public} 

\else
\author{Diptarka Chakraborty\footnote{National University of Singapore, Singapore. Supported in part by NUS ODPRT Grant, WBS No. R-252-000-A94-133. Email: diptarka@comp.nus.edu.sg} \and Gunjan Kumar \footnote{National University of Singapore, Singapore.  Email: dcsgunj@nus.edu.sg} \and Kuldeep S. Meel\footnote{National University of Singapore, Singapore. Supported in part by  National Research Foundation Singapore under its NRF Fellowship Programme[NRF-NRFFAI1-2019-0004 ] , Ministry of Education Singapore Tier 2 grant MOE-T2EP20121-0011, and Ministry of Education Singapore Tier 1 Grant [R-252-000-B59-114 ].   Email: meel@comp.nus.edu.sg}}
  
\fi

\ifprocs
\Copyright{Jane Open Access and Joan R. Public} 

\ccsdesc[500]{Theory of computation~Streaming, sublinear and near linear time algorithms} 

\keywords{Support-size estimation, Distribution testing, Conditional sampling, Lower bound} 








\EventEditors{John Q. Open and Joan R. Access}
\EventNoEds{2}
\EventLongTitle{42nd Conference on Very Important Topics (CVIT 2016)}
\EventShortTitle{ICALP 2022}
\EventAcronym{CVIT}
\EventYear{2016}
\EventDate{December 24--27, 2016}
\EventLocation{Little Whinging, United Kingdom}
\EventLogo{}
\SeriesVolume{42}
\ArticleNo{23}
\else
\fi


\usepackage{multirow}
\ifprocs
\bibliographystyle{plainurl}
\else
\bibliographystyle{alphaurl}
\fi

\usepackage[colorinlistoftodos,prependcaption,textsize=tiny]{todonotes}
\usepackage{makecell}
\date{}
\begin{document}
\pagenumbering{gobble}
\maketitle

\begin{abstract}
We consider the problem of estimating the support size of a distribution $D$. Our investigations are pursued through the lens of distribution testing and seek to understand the power of conditional sampling (denoted as {\CO}), wherein one is allowed to query the given distribution conditioned on an arbitrary subset $S$. The primary contribution of this work is to introduce a new approach to lower bounds for the {\CO} model that relies on using  powerful tools from information theory and communication complexity. 

Our approach allows us to obtain surprisingly strong lower bounds for the {\CO} model and its extensions.
 
 \begin{itemize}
 
 \item We bridges the longstanding gap between  the upper ($O(\log \log n + \frac{1}{\epsilon^2})$) and the lower bound $\Omega(\sqrt{\log \log n})$ for {\CO} model by providing a nearly matching lower bound. Surprisingly, we show that even if we get to know the actual probabilities along with {\CO} samples, still $\Omega(\log \log n + \frac{1}{\epsilon^2 \log (1/\epsilon)})$ queries are necessary.
    
\item We obtain the first non-trivial lower bound for {\CO} equipped with an additional oracle that reveals the conditional probabilities of the samples (to the best of our knowledge, this subsumes all of the models previously studied): in particular, we demonstrate that  $\Omega(\log \log \log n + \frac{1}{\epsilon^2 \log (1/\epsilon)})$ queries are necessary.
    
%
 \end{itemize}
\end{abstract}
\newpage
\pagenumbering{arabic}

\section{Introduction}\label{sec:introduction}

We consider the problem of estimating the support size of a distribution $D$ over a domain $\Omega$ (of size $n$), which is defined as follows:
\begin{align*}
    \supp(D) := \{x \mid D(x) > 0\}.
\end{align*}
We are interested in $(\epsilon,\delta)$-approximation\footnote{We want to estimate $|\supp(D)|$ by $\hat{s}$ such that $\frac{|\supp(D)|}{(1+\epsilon)} \le \hat{s} \le (1+\epsilon)  |\supp(D)|$ with the success probability  at least $1-\delta$. This version is also referred to as $(1+\epsilon)$-multiplicative factor estimation. Another interesting version to consider is the additive $\epsilon n$-estimation (where $n$ denotes the size of the domain) which asks to output a $\hat{s}$ such that $|\supp(D)| - \epsilon n \le \hat{s} \le  |\supp(D)| + \epsilon n$ with the success probability  at least $1-\delta$. Unless otherwise stated explicitly, we consider the multiplicative variant.} of the size of $\supp (D)$ (i.e., $|\supp(D)|$). For simplicity in exposition, throughout this paper, we consider $\delta$ to be a small constant (more specifically, 1/3, which can be reduced to any arbitrary small constant.). The support size estimation is a fundamental problem in data science and finds a myriad of applications ranging from database management, biology, ecology, genetics, linguistics, neuroscience, and physics (see~\cite{valiant2011estimating} and the references therein). Naturally, the distribution is not specified explicitly, and therefore, the complexity of the problem depends on the queries that one is allowed to the distribution. As such, the primary objective is to minimize the number of queries (aka \emph{query complexity}).  

Along with support size estimation, several other properties of distributions have attracted investigations over the past three decades (see~\cite{canonne2020survey}). As such, several query models have been considered by the research community. The simplest model {\Samp} only allows drawing independent and identically distributed samples from $D$. Valiant and Valiant~\cite{valiant2011estimating} showed that to get an estimation up to an additive factor of $\epsilon n$ (for any $\epsilon > 0$), $O(n/\epsilon^2 \log n)$ samples suffice, which was subsequently improved to $O(\frac{n}{\log n}\log^2 (1/\epsilon))$ by Wu and Yang~\cite{wu2019chebyshev}. Further, Wu and Yang proved that $\Omega(\frac{n}{\log n}\log^2 (1/\epsilon))$ samples are also necessary to get an estimate up to an additive error of $\epsilon n$. A natural extension to {\Samp} is called \emph{probability-revealing sample} or {\PRSamp}, due to Onak and Sun~\cite{onak2018probability}, wherein instead of just returning an independent sample $x$ from $D$ (as in {\Samp}), the oracle provides a pair $(x, D(x))$ (i.e., a sample along with the probability assigned on it by $D$). Onak and Sun showed that to estimate the support size up to an additive error of $\epsilon n$, $\Theta(1/\epsilon^2)$ samples are necessary and sufficient in the {\PRSamp} model. The same upper bound for the {\PRSamp} model was also implicit in the work by Canonne and Rubinfeld~\cite{canonne2014aggregate}.

As we seek to explore more powerful models than {\PRSamp}, a model of interest is {\Dual}~\cite{batu2005complexity, guha2009sublinear, canonne2014aggregate} wherein we have access to two oracles: One is {\Samp} that provides a sample from $D$, and another is {\Eval} that given any $x\in \Omega$, outputs the value of $D(x)$. In the {\Dual} model, for any $\epsilon_1,\epsilon_2 \in (0,1]$, distinguishing between whether the support size of $D$ is at most $\epsilon_1 n$ or at least $\epsilon_2 n$ requires $\Theta(1/(\epsilon_2 -\epsilon_1)^2)$ queries~\cite{canonne2014aggregate}. An extension of {\Eval} is {\CEval} wherein for totally ordered domains, given $x$, {\CEval} outputs $\sum_{y \preccurlyeq x} D(x)$. Similarly, \textsc{CDUAL} is an extension of {\Dual} where we have access to oracles {\Samp} and {\CEval}. Caferov et al.~\cite{caferov2015optimal} showed that $\Omega(\frac{1}{\epsilon^2})$  queries are needed  in the \textsc{CDUAL} model to estimate the support size up to an additive factor of $\epsilon n$. However, to the best of our knowledge, no non-trivial result is known for the support size estimation problem with $(1+\epsilon)$ multiplicative error in the above models.

While {\Samp}, {\PRSamp}, and {\Dual} are natural models, they are limiting in theory and practice as they fail to capture several scenarios wherein one is allowed more powerful access to the distribution under consideration. Accordingly, Chakraborty \emph{et al.}~\cite{chakraborty2016power} and Canonne \emph{et al.}~\cite{canonne2014testing} initiated the study of a more general sampling model {\CO}, where we are allowed to draw samples conditioning on any arbitrary subsets of the domain $\Omega$. More specifically, the sampling oracle takes a subset $S \subseteq \Omega$ chosen by the algorithm and returns an element $x\in S$ with probability $D(x)/D(S)$ if $D(S)>0$. The models proposed by Chakraborty et al. and Canonne et al. differ in their behavior for the case when $D(S) = 0$.  
The model proposed by Chakraborty \emph{et al.}~\cite{chakraborty2016power} allows the  oracle to return a uniformly random element from $S$ when $D(S)=0$. On the other hand, the {\CO} model defined in Canonne \emph{et al.}~\cite{canonne2014testing} assumes that the oracle (and hence the algorithm) returns "failure" and terminates if $D(S)=0$\footnote{
Note that analyzing each step of the algorithm makes it possible to determine the set,  conditioning on which caused the algorithm to output "failure" and terminate. The rest of the algorithm can then execute with the information that $D(S) = 0$ for the above set $S$. Therefore, for the simplicity of exposition, we will assume that the {\CO} model defined in Canonne \emph{et al.} returns "failure" when $D(S)=0$ but the algorithm does not terminate.}. Note that the {\CO} model of Canonne \emph{et al.} is more powerful than that of Chakraborty \emph{et al.} since, when $D(S) = 0$, in the former case,  we get to know that $D(S) = 0$ whereas in the latter case, we get a uniformly random element of $S$. 

The relative power of the {\CO} model of Canonne {\em et al.} over that of Chakraborty {\em et al.} is also exhibited in the context of support size estimation. Acharya, Canonne, and Kamath~\cite{acharya2015chasm} designed an algorithm with query complexity $\Tilde{O}(\log \log n / \epsilon^3)$ in the {\CO} model of Chakraborty \emph{et al.} to estimate the support size up to $(1+\epsilon)$ multiplicative factor under the assumption that the probability of each element is at least $\Omega(1/n)$. They also note that the assumption of a lower bound on the probability of each element is required for their techniques to work. Surprisingly, a result of Falahatgar et al.~\cite{falahatgar2016estimating} implies that $O(\log \log n + \frac{1}{\epsilon^2})$ queries are sufficient  for {\CO} model of Canonne \emph{et al.} for any arbitrary probability distribution, i.e., there is no requirement for the assumption of lower bound on the probability of each element\footnote{They considered oracle access in which, given any $S \subseteq \Omega$, it can be determined if $S \cap \supp(D) = \emptyset$ or not. Clearly, the {\CO} model of Canonne \emph{et al.} generalizes this model. It is worth noting that in the {\Samp} model, to get any meaningful upper bound, we need the assumption that each element has a probability at least $1/n$; however, such an assumption is not needed in the {\CO} model or its extensions.}. On the lower bound side, we only know that at least $\Omega(\sqrt{\log \log n})$ {\CO} queries are necessary~\cite{chakraborty2016power}.

The two models were introduced in the context of uniformity testing, wherein the choice of how to handle the case of $D(S)=0$ did not make any significant differences. We would like to emphasize that Canonne et al.'s model is more powerful than that of Chakraborty et al., and thus any lower bound shown in the first one also provides the same in the latter one. Moreover, the model of Canonne et al. closely approximates the behavior of modern probabilistic programming systems~\cite{GHNR14}. Therefore, throughout this paper, we consider the {\CO} model of Canonne et al.

Since its introduction, the {\CO} model has attained significant attention both in theory and practice. From a theoretical perspective, various other distribution testing problems have been studied under the {\CO} model~\cite{falahatgar2015faster, kamath2019anaconda, Narayanan21} and its variant like \emph{subcube conditioning model}~\cite{bhattacharyya2018property, canonne2021random, chen2021learning}. Apart from that, the {\CO} model and its variants find real-world applications in the areas like formal methods and machine learning (e.g.,~\cite{chakraborty2019testing, meel2020testing, GJM22}). Also, the modern probabilistic programming systems extend classical programs with the addition of sampling and \emph{observe}, where the semantics of the observe match that of Canonne et al.'s {\CO} model~\cite{GHNR14}.

It is worth remarking that the {\CO} model is incomparable with {\PRSamp}. Therefore, it is quite natural, both from a theory and practical perspective, to consider a sampling model that inherits power from both {\CO} and {\PRSamp}. We consider a model where we are allowed to condition on any arbitrary subset $S\subseteq \Omega$, and if $D(S)>0$, we receive a sample $x \in S$ with probability $D(x)/D(S)$ (as in {\CO}) along with the probability assigned on it by $D$ (i.e., $D(x)$); "failure" otherwise. We refer to this model as \emph{probability-revealing conditional sample}\footnote{The name is motivated from the {\PRSamp} model~\cite{onak2018probability}.} or in short {\CE}. To the best of our knowledge, Golia, Juba, and Meel~\cite{GJM22} were the first ones to initiate the study of the {\CE} model. Their work focused on the multiplicative estimation of entropy on the {\CE} model. Golia et al. were primarily motivated to investigate the {\CE} model upon the observation that the usage of the model counter and a sampler can simulate the {\CE} model wherein a circuit specifies the distribution. Also, implicit in their study is that the availability of model counter and samplers~\cite{DM22} allows one to simulate generalization of {\CE} model wherein for a given input $D$ and $S$ in addition to $D(x)$ for a sampled item $x \in S$, the oracle also returns the value of $D(x)/D(S)$ (the conditional probability of $x$ given $S$). We refer to this model as  {\em conditional sampling evaluation model}\footnote{The name is motivated from the standard \emph{evaluation} model {\Eval}~\cite{rubinfeld2009testing} where given any $x\in \Omega$, we get the value of the probability density function of $D$ at $x$.}, or in short {\CD}. To the best of our knowledge, the {\CD} subsumes all the previously studied variants of the {\CO} model (see Figure~\ref{fig:models}, and Section~\ref{sec:power-SEval} where we provide a few examples showcasing the power of {\CD}). 
 
 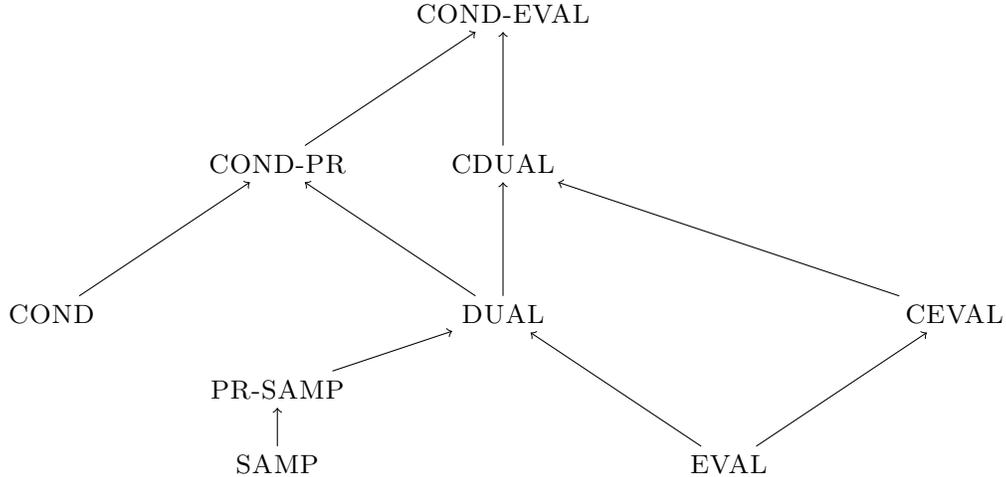
\begin{figure}[!htb]
 \centering
	\begin{tikzpicture}
	\node[] (samp) at (0,0) {{\Samp}};
	\node[] (samp-pr) at (0,1) {{\PRSamp}};

	\node[] (eval) at (6,0) {{\Eval}};
	\node[] (dual) at (3,2) {{\Dual}};
	\node[] (cond) at (-3,2) {{\CO}};
	
	\node[] (ceval) at (9,2) {{\CEval}};
	\node[] (cdual) at (3,4) {\textsc{CDUAL}};
	\node[] (cond-pr) at (0,4) {{\CE}};
	\node[] (cond-eval) at (3,6) {{\CD}};

\draw[->] (eval) -- (dual);
\draw[->] (eval)--(ceval);
\draw[->] (dual)--(cond-pr);
\draw[->] (cond)--(cond-pr);
\draw[->] (dual)--(cdual);
\draw[->] (ceval)--(cdual);
\draw[->] (cond-pr)--(cond-eval);
\draw[->] (cdual)--(cond-eval);

\draw[->] (samp)--(samp-pr);
\draw[->] (samp-pr)--(dual);

	\end{tikzpicture}
	\caption{Relative power of different models: An edge $u \to v$ means the model $v$ is more powerful than the model $u$.}
	\label{fig:models}
\end{figure}



 \ksmedits{
To summarize, there has been a long line of research that has relied on the usage of the {\CO} model and its variants, resulting in significant improvements in the query complexity for several problems in distribution testing. While there has been a multitude of techniques for obtaining upper bounds for the {\CO} model and its variants, such has not been the case for lower bounds. In particular, the  prior techniques developed in the context of support size estimation for {\CO} model have primarily relied on the observation that an algorithm $\mathcal{A}$ that makes $q$ {\CO} queries can be simulated by a decision tree of $O(q 2^{q^2})$ nodes. Accordingly, the foregoing observation allows one to obtain $\Omega(\sqrt{\log \log n})$ lower bound for {\CO}, which leaves open a major gap with respect to the upper bound of $O(\log \log n + \frac{1}{\epsilon^2})$.  The situation is even more dire when considering models that augment {\CO} with powerful oracles such as {\Eval} since the approach based on decision tree fails due to additional information supplied by oracles such as {\Eval}, and accordingly, no non-trivial lower bounds are known for models such as  \textsc{CDUAL}, {\CD}, and the like. Therefore, there is a desperate need for new lower bound techniques  to fully understand the power of {\CO} model and its natural extensions. 

 
 \subsection{Our Contribution}
 
One of our primary contributions is to provide a seemingly new approach to proving lower bounds in the {\CO} model and its more powerful variants. Our approach is based on information theory and reductions to problems in communication complexity. We note that the communication complexity-based approaches to lower bounds have been explored in prior work; such approaches are only limited to weaker models such as {\Samp} and {\PCo} models~\cite{blais2019distribution,blais2012property}. While we demonstrate the application of our approach in the context of support size estimation for different variants of {\CO}, we believe our approach is of general interest and can be applied to other distribution testing problems.


For ease of exposition, here we situate the discussion in the context of the most general model, {\CD}.
One of the inherent difficulties in proving any non-trivial lower bound for the {\CD} model arises from the fact that the different sets for conditioning can overlap in an arbitrary manner (and unlike in {\PCo} model, these sets are of arbitrary size) and further be chosen in an adaptive way. 
The adaptivity and arbitrary size of sets make it 
 extremely difficult to upper bound the conditional entropy at any step of the algorithm. Furthermore, the power of revealing the probability mass (on any set) by a {\CD} query risks licking "a lot of" information which makes it even more challenging. The key departure from earlier work is the choice of an infinite family of distributions, so the range of outcomes of an algorithm is continuous and so cannot be encoded with any finite (or even infinite) length message. To this end, we rely on Fano's inequality, a fundamental tool in information theory, to show lower bounds for statistical estimation. In order to apply Fano's inequality, we need to upper bound the information gain at every step of the algorithm. Our approach proceeds by relying on a restricted model of conditioning where the queried sets are {\em laminar}, i.e., either they do not intersect or are subsets/supersets of each other. Accordingly, we first obtain lower bounds for the restricted model and lift to the lower bounds for the {\CD} model.  


} 

 Our approach is compelling enough to provide non-trivial lower bounds in the most potent {\CD} model, for which no lower bound was known before. However, before providing the usefulness of the general framework, let us demonstrate how a special instantiation of our approach can be applied to obtain strong lower bounds in the context of support size estimation for the {\CO} model and its (simpler) variants. 
 Our first result 
 bridges the long-standing gap between the upper bound of $O(\log \log n + \frac{1}{\epsilon^2})$ and the lower bound of $\Omega(\sqrt{\log \log n})$ in case of {\CO} model.  In particular, we obtain an $\Omega(\log \log n + \frac{1}{\epsilon^2 \log (1/\epsilon)})$ lower bound on the query complexity in the {\CE} model, which in turn provides the same lower bound for the  {\PRSamp}, {\Dual},  and {\CO} model. 

\begin{restatable}{theorem}{condevalLB}
\label{thm:cond-eval-LB}
Every algorithm that, given {\CE} access to a distribution $D$ on $[n]$ and $\epsilon \in (0,1]$, estimates the support size $|\supp(D)|$ within a multiplicative $(1+\epsilon)$-factor with probability at least $\frac{2}{3}$, must make $\Omega(\log \log n + \frac{1}{\epsilon^2 \log (1/\epsilon)})$ queries to the {\CE} oracle.
\end{restatable}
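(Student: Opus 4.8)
The target bound is a sum of two terms, and since $a+b=\Theta(\max\{a,b\})$ for nonnegative $a,b$, it suffices to establish each term on its own: an $\Omega(\log\log n)$ bound that already holds when $\epsilon$ is a fixed constant (say $\epsilon=\tfrac1{10}$, and then, since a $(1+\epsilon)$-estimate is a $(1+\tfrac1{10})$-estimate for all smaller $\epsilon$, for every $\epsilon\le\tfrac1{10}$), and an $\Omega\!\big(\tfrac1{\epsilon^2\log(1/\epsilon)}\big)$ bound on a domain of size $\mathrm{poly}(1/\epsilon)$, padded up to $[n]$. Both follow one template. The plan is to put a prior on an \emph{infinite} family $\{D_\theta\}$ of distributions, indexed by a real parameter $\theta$, whose support sizes are spread over a wide range, to hand the estimator a random $D_\theta$, and to invoke Fano's inequality: the transcript $\Pi$ of a $q$-query algorithm cannot carry enough information about $\theta$ to recover $|\supp(D_\theta)|$ within a $(1+\epsilon)$ factor with probability $\ge\tfrac23$ unless $q$ is large. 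The engine is a bound on the \emph{per-query information gain} $I(\theta;A_t\mid\Pi_{<t})$, where $A_t$ is the answer — sampled element together with its revealed probability — to the $t$-th \CE\ query. Because a \CE\ query may condition on an arbitrary, adaptively chosen set $S$, and because the revealed probability is a real number that could in principle pin $\theta$ down exactly, controlling this quantity is the heart of the matter; I would do it by first restricting to \emph{laminar} algorithms (conditioning sets pairwise nested or disjoint), proving the bound there, and then arguing that for these structured families an arbitrary query behaves exactly like conditioning on the laminar closure of $S$ relative to the family's block structure, so the restriction is without loss of generality.

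For the $\Omega(\log\log n)$ term I would design $\{D_\theta\}$ so that, as $\theta$ varies, the support size sweeps a polynomial range of $n$ — an interval of width $\Theta(\log n)$ on the logarithmic scale, hence $\Theta(\log n)$ constant-factor windows, i.e.\ $\Theta(\log\log n)$ bits, to resolve one — while the probability \emph{values} occurring in $D_\theta$ are deliberately decoupled from the support size, so that a sample-with-probability reveals only a coarse ``which layer'' bit and no adaptive conditioning strategy (in particular the random-subset / doubling-search emptiness test underlying the known $O(\log\log n+1/\epsilon^2)$ upper bound) can advance the algorithm's knowledge of $\theta$ by more than $O(1)$ bits per query. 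Summing the per-query bounds gives $I(\theta;\Pi)=O(q)$, while Fano demands $I(\theta;\Pi)=\Omega(\log\log n)$ for success, hence $q=\Omega(\log\log n)$.

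For the $\Omega\!\big(\tfrac1{\epsilon^2\log(1/\epsilon)}\big)$ term I would reduce from \GHD\ on $N=\Theta(1/\epsilon^2)$ bits, which has randomized communication complexity $\Omega(N)$. From an instance $(x,y)$ I would build a distribution $D_{x,y}$ on a universe of size $\mathrm{poly}(1/\epsilon)$ whose support size — up to a $(1+\epsilon)$ multiplicative window — is a fixed affine function of the Hamming distance of $x$ and $y$, so that the \GHD\ gap becomes a $(1+\epsilon)$ gap in support size. The crucial point is that two players sharing public randomness, each tracking the common transcript of the estimator, can simulate a single \CE\ query on $D_{x,y}$ — again reducing its adaptive conditioning to the laminar case — while exchanging only $O(\log(1/\epsilon))$ bits, since the sampled index and the revealed probability each live in a set of size $\mathrm{poly}(1/\epsilon)$. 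A $q$-query estimator therefore yields an $O(q\log(1/\epsilon))$-bit protocol for \GHD, so $q=\Omega\!\big(\tfrac1{\epsilon^2\log(1/\epsilon)}\big)$.

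Since \CE\ dominates \PRSamp, \Dual\ and \CO\ (Figure~\ref{fig:models}), the same bound descends to those models for free, giving the stated consequences. The step I expect to fight hardest with is the per-query information bound: obtaining a clean $O(1)$-type estimate in the simultaneous presence of the probability-revealing oracle and unrestricted adaptive conditioning is precisely what forces the ``infinite family'' device — making the transcript's outcome space continuous, so that no finite decision-tree encoding (the basis of the earlier $\Omega(\sqrt{\log\log n})$ bound) exists — and the reduction to laminar conditioning to be combined carefully; and the laminar-to-general lifting, though conceptually routine for the structured families above, must still be checked with care since conditioning sets can overlap arbitrarily.
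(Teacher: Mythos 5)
Your Gap-Hamming half is essentially the paper's argument: reduce from $\GHD$ with gap $g=\Theta(\epsilon n)$ on a domain where $\log n = \Theta(\log(1/\epsilon))$, simulate each {\CE} query with $O(\log(1/\epsilon))$ bits of shared-randomness communication, and invoke the $\Omega((n/g)^2)$ bound. That part is sound. The gap is in your $\Omega(\log\log n)$ half, where you import the Fano / infinite-family / laminar machinery. Two problems. First, your claim that the laminar restriction is ``without loss of generality'' for an adaptive {\CE} algorithm is unsubstantiated and, as far as I can see, false in the form you need it: to replace a query on an arbitrary $S$ by queries on the laminar atoms of $S$ relative to the previous sets, you must recombine the atoms' samples with weights $D(S_i)/D(S)$, which a {\CE} oracle does not reveal (only a {\CD}/{\SEval} oracle does). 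Even in the stronger {\CD} model, where this simulation is available, it costs $2^t$ laminar queries for $t$ general ones, which is exactly why the paper's Fano argument there yields only $\Omega(\log\log\log n)$. Running your plan would therefore either stall at the laminar reduction or, at best, deliver $\Omega(\log\log\log n)$ for {\CE} — strictly weaker than the theorem. Second, you leave the hard family $\{D_\theta\}$ and the per-query $O(1)$ information bound (in the presence of a revealed real-valued probability and unrestricted adaptive conditioning) entirely unspecified; in the paper this step requires the Dirichlet construction and explicit beta-KL computations, and only goes through under laminarity.

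The paper's actual proof of the $\Omega(\log\log n)$ term for {\CE} avoids all of this. It uses a \emph{finite} family $D_x$, $x\in[\log n]$, with $D_x(j)=2^{-j}$ on $[2^x-1]$ and $2^{-(2^x-1)}$ on $\{2^x\}$, and a one-way communication / source-coding argument: Alice runs the algorithm on $D_x$ and encodes the transcript. The exponential decay forces the \emph{rank} of the sampled element inside any conditioning set $A_\ell$ to have $O(1)$ expectation (no laminarity needed — this holds for arbitrary $S$), and the revealed probability takes one of only three values determined by the sample, so the whole $t$-query transcript compresses to $O(t)$ bits with constant probability; Shannon's source coding theorem then forces $t=\Omega(\log\log n)$. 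Your worry that a finite family admits a finite decision-tree encoding is the reason the paper switches to an infinite (Dirichlet) family for {\CD}, where the oracle reveals $D(S)$ exactly — but for {\CE} the finite family suffices precisely because the revealed probabilities are decoupled from $x$ by construction. If you want to salvage your write-up, replace the Fano route for this half with an encoding argument of this type, or else prove a laminar reduction for {\CE} with only polynomial blowup, which I do not believe is available.
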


Recall the best-known upper bound for the support size estimation in {\CO}
is $O(\log \log n + \frac{1}{\varepsilon^2})$ and therefore, the above theorem achieves a near-matching lower bound in the context of {\CO} and {\CE}. It is rather surprising that the combination of {\PRSamp} and {\CO} does not yield power in the context of the support size estimation. It is worth mentioning that we already know of a lower bound of $\Omega(1/\epsilon^2)$ for the {\PRSamp} and {\Dual} model due to~\cite{canonne2014aggregate}. Hence, our result along with~\cite{canonne2014aggregate} provides a lower bound of $\Omega(\log \log n + \frac{1}{\epsilon^2})$ for these two models (i.e., we can get rid of the annoying $\log (1/\epsilon)$ factor from the lower bound term of~\autoref{thm:cond-eval-LB}).

\ksmedits{Our primary result is to establish the first non-trivial lower bound in the context of {\CD},  
 which in turn provides the first-known lower bound for many other previously studied models, such as \textsc{CDUAL}.
 }

\begin{restatable}{theorem}{celowerbound}
\label{thm:CDlowerboundfull}
Any  algorithm that, given {\CD} access to a distribution $D$ on $[n]$ approximates the support size $|\supp(D)|$ within a multiplicative $(1+\epsilon)$-factor with  probability at least $2/3$,  must make $\Omega(\log \log \log n +\frac{1}{\epsilon^2 \log (1/\epsilon) })$  queries.
\end{restatable}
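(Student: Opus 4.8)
The plan is to prove two separate lower bounds, $\Omega(\log\log\log n)$ and $\Omega\bigl(\tfrac{1}{\epsilon^{2}\log(1/\epsilon)}\bigr)$, and combine them by taking their maximum. Both follow the information-theoretic template sketched above. The common first step is a \emph{laminar reduction}: on the structured hard instances we will build, an arbitrary, adaptively chosen, possibly overlapping family of conditioning sets is no more powerful than a laminar one, because the instances are assembled from nested blocks and a {\CD} answer---a sampled point $x$ together with $D(x)$ and $D(x)/D(S)$---depends on the queried set $S$ essentially only through the blocks it (almost) contains. This lets us pass to a restricted \emph{laminar {\CD}} model in which per-step information accounting becomes tractable. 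A second departure from the decision-tree arguments that gave $\Omega(\sqrt{\log\log n})$ for {\CO} is that we use an \emph{infinite} family of distributions---the probability values themselves are drawn from continuous laws---so the transcript (which contains those real numbers) cannot be compressed and a counting/encoding argument is hopeless; instead we invoke Fano's inequality, which only requires the information gained per laminar {\CD} query to be small.

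For the $\Omega(\log\log\log n)$ term, fix $\epsilon$ to be a small constant and construct a family $\{D_{j}\}_{j=1}^{M}$ with $M=\Theta(\log\log n)$, where $|\supp(D_{j})|$ grows doubly-exponentially in $j$---roughly $2^{2^{j}}$---so that all $M$ support sizes fit inside $[n]$ and are pairwise separated by more than a $(1+\epsilon)^{2}$ factor; hence a multiplicative support-size estimator must identify $j$. Each $D_{j}$ is really a distribution over distributions: its mass is spread over $\Theta(j)$ geometric scales whose probability levels are randomized from continuous laws and arranged hierarchically, so that conditioned on any laminar set $S$ not already ``zoomed in'' on the deepest active scale, the tuple $(x, D_{j}(x), D_{j}(x)/D_{j}(S))$ is statistically (nearly) independent of $j$---to learn anything about $j$ one must first resolve the coarser scales, and the continuity prevents reading $j$ off a discrete fingerprint in $D_{j}(x)$. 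Thus each laminar {\CD} query contributes only $O(1)$ bits of information about $j$, and Fano's inequality over the $M$ hypotheses yields $q=\Omega(\log M)=\Omega(\log\log\log n)$. The reason only $\Theta(\log\log n)$ hypotheses are usable---rather than the $\Theta(\log n)$ one affords against plain {\CO}---is exactly that killing the {\Eval}-style leakage of $D(S)$ forces the support, and hence the domain, to be as large as $2^{2^{j}}$ at index $j$; this is where the extra logarithm relative to the $\Omega(\log\log n)$ bound of \autoref{thm:cond-eval-LB} is lost.

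For the $\Omega\bigl(\tfrac{1}{\epsilon^{2}\log(1/\epsilon)}\bigr)$ term, reduce from {\GHD} on strings of length $m=\Theta(1/\epsilon^{2})$, whose randomized communication complexity is $\Omega(m)$. From Alice's string $a$ and Bob's string $b$ build a distribution $D_{a,b}$ on $[n]$ out of $m$ blocks so that $|\supp(D_{a,b})|$ is determined up to a $(1+\epsilon)$ factor by the Hamming distance between $a$ and $b$; a multiplicative estimator then decides the {\GHD} instance. Because the construction uses only $O(1/\epsilon)$ distinct probability values and a block structure polynomial in $1/\epsilon$, the laminar reduction lets Alice and Bob answer any laminar {\CD} query with $O(\log(1/\epsilon))$ bits of communication (exchange which block the sample lands in, plus the relevant $O(\log(1/\epsilon))$-bit probabilities). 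Hence a $q$-query {\CD} algorithm gives an $O(q\log(1/\epsilon))$-bit {\GHD} protocol, forcing $q\log(1/\epsilon)=\Omega(1/\epsilon^{2})$.

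The main obstacle is clearly the $\Omega(\log\log\log n)$ term. A single {\CD} query reveals the entire mass $D(S)$ of the queried set, which in any naive family reveals $|\supp(D)|$ almost at once; the real work is to design the hierarchical family so that this leakage is neutralized at every one of the $\Theta(\log\log\log n)$ scales while the domain stays only triply-exponentially large in the number of scales, and then to make the ``$O(1)$ bits about $j$ per laminar query'' bound rigorous under adaptivity---most delicately, bounding the probability that the algorithm ever manages to zoom in on a deep scale and charging those queries. Verifying the laminar reduction---i.e.\ that unbounded overlapping sets are useless on these instances, which is false for generic distributions and hinges on the block structure---is the other step that requires genuine care.
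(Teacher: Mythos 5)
Your overall template --- randomizing the probability values with continuous laws so that the transcript cannot be compressed, bounding the per-query information gain for \emph{laminar} queries, invoking Fano's inequality, and a separate \GHD{} reduction for the $\epsilon$-dependence --- is the template the paper follows, and your \GHD{} part is essentially the paper's argument. (The paper works with length-$n$ strings and gap $g=\Theta(\epsilon n)$, pays $O(\log n)$ bits to simulate each {\CD} query exactly, and then sets $\epsilon=n^{-\alpha}$ so that $\log n=\Theta(\log(1/\epsilon))$. Your claimed $O(\log(1/\epsilon))$ bits per query is not obviously achievable for an arbitrary set $S$, since reporting the exact value of $D(S)$ requires aggregating $|S\cap B_i|$ over all blocks; the clean fix is the paper's, namely choosing $\epsilon$ polynomially related to $n$.)

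The genuine gap is in the $\Omega(\log\log\log n)$ term, and it sits exactly where you say "genuine care" is needed: your laminar reduction is a different, and much stronger, statement than what is available. You assert that on the hard instances, arbitrary overlapping adaptive conditioning sets are \emph{no more powerful} than laminar ones, at no cost in query count. This is precisely the step the paper cannot carry out: the aggregation/independence property of the Dirichlet randomization (\autoref{lem:independir}), which is what makes the per-query information gain $O(1)$, genuinely requires the queried sets to be nested or disjoint, and for two crossing sets there is no tractable conditional law. The paper instead uses a \emph{generic} simulation (\autoref{lem:laminar}, \autoref{lem:laminarappendix}): a $t$-query algorithm is replaced by a laminar one querying all $2^{t}$ atoms of the Venn diagram, and it is this exponential blowup --- not a shrinking of the hypothesis class --- that produces the third logarithm ($\log n$ hypotheses with supports $2^{x}$ give $\Omega(\log\log n)$ for laminar algorithms, hence $\Omega(\log\log\log n)$ in general). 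Your alternative accounting ($M=\Theta(\log\log n)$ hypotheses with doubly-exponential supports and $O(1)$ bits per \emph{arbitrary} query) lands on the same exponent only if both the lossless laminar reduction and the doubly-exponential construction work, and neither is substantiated; note that if a lossless laminar reduction held for such instances, it would immediately yield $\Omega(\log\log n)$ and settle \autoref{conj:cdlowerbound}, so one should be suspicious of it. To close the gap, adopt the paper's route: supports $2^{x}$ for $x\in[\log n]$, Dirichlet parameters $\alpha_j$ decaying geometrically, reduction of each conditional law to a Beta distribution via the aggregation property, explicit $O(1)$ bounds on KL divergences between Beta laws (plus the separate handling of the sampled element and its revealed probability for full {\CD}), and the $2^{t}$ laminar simulation at the end.
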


Since {\CD} subsumes {\CE}, we have  the upper bound of $O(\log \log n +\frac{1}{ \epsilon^2})$, and the ensuing gap leaves open an interesting question. \ksmedits{It is worth remarking that it is not hard to see (by extending the proof of~\autoref{thm:cond-eval-LB}) that the upper bound is nearly tight if one were to replace {\CD} with approximate-{\CD} wherein, for a given $D$ and $S$, the oracle essentially provides an estimate of $D(S)$ up to a small multiplicative error} (see the discussion at the end of Section~\ref{sec:cd-LB}). We conjecture that the upper bound is tight for {\CD}. 

\begin{conjecture}\label{conj:cdlowerbound}
Any  algorithm that, given {\CD} access to a distribution $D$ on $[n]$ approximates the support size $|\supp(D)|$ within a multiplicative $(1+\epsilon)$-factor with  probability at least $2/3$,  must make $\Omega(\log \log n +\frac{1}{\epsilon^2 \log (1/\epsilon) })$  queries.
\end{conjecture}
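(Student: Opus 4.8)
The plan is to extend the Fano-inequality framework behind \autoref{thm:cond-eval-LB} and \autoref{thm:CDlowerboundfull} so that the full $\log\log n$ term survives the extra power of the {\CD} oracle, rather than degrading to $\log\log\log n$. Observe first that the additive $\frac{1}{\epsilon^2\log(1/\epsilon)}$ term already appears in both theorems and is insensitive to the conditional-probability oracle: it comes from a family of distributions supported on a tiny domain on which revealing $D(S)$ conveys nothing new. So the entire difficulty is concentrated in the $\log\log n$ summand, and we may as well fix $\epsilon$ to be a small constant. As in the proof of \autoref{thm:CDlowerboundfull}, I would first invoke the reduction from {\CD} to the \emph{laminar} conditioning model, so that from now on every queried set is either disjoint from, or nested inside, every previously queried set; this is what makes the per-step information bookkeeping tractable.

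The core object is a hard family $\{D_\ell\}$ indexed by a hidden level $\ell$ drawn uniformly from $\{1,\dots,L\}$ with $L=\Theta(\log n)$, where $D_\ell$ has support size $\Theta(2^\ell)$, so that $H(\ell)=\Theta(\log L)=\Theta(\log\log n)$ and any $(1+\epsilon)$-multiplicative estimate of $\supp(D_\ell)$ (for constant $\epsilon$) determines $\ell$. The key departure from the construction used for the $\log\log\log n$ bound must be that $D_\ell$ is built from a \emph{continuous, scale-invariant} probability profile: the support of $D_\ell$ is partitioned into buckets whose relative sizes and whose randomly drawn probability values have, after the appropriate rescaling, a distribution that does not depend on $\ell$ — only the number of buckets does. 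The intended consequence is that a single observation $(x, D(x), D(x)/D(S))$ returned on a conditioning set $S$ has a law that is statistically almost independent of $\ell$: the marginal of $D(x)$ is level-free by construction, and $D(S)$ — which the {\CD} oracle reveals and which is exactly the information beyond {\CE} — concentrates, for a laminar $S$, around a quantity determined by $|S\cap\supp(D_\ell)|$ together with the global profile, i.e., around something the algorithm can already predict from its current transcript.

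Concretely I would proceed as follows. (1) Formalize the family and verify scale-invariance of the profile, including a union bound ensuring that with high probability over the random profile all relevant partial sums $D(S)$ concentrate. (2) Set up the transcript $\Pi_t$ after $t$ laminar queries and prove the per-query bound $I(\ell;\Pi_t)-I(\ell;\Pi_{t-1})=O(1)$: condition on $\Pi_{t-1}$, couple the outcome of the $t$-th query under $D_\ell$ with its outcome under a level-free ``reference'' distribution, and bound the KL divergence between the coupled outcomes by $O(1)$ using the concentration from (1); laminarity is used to argue that a new set either lies inside an already-sampled region, where the conditional law is pinned down, or is fresh, where the scale-invariant profile makes $D(S)$ carry only $O(1)$ bits. (3) Chain these: $I(\ell;\Pi_q)=O(q)$, so by Fano, recovering $\ell$ with probability $\ge 2/3$ forces $q=\Omega(H(\ell))=\Omega(\log\log n)$. (4) Finally, take a product of this construction with the constant-support family responsible for the $\frac{1}{\epsilon^2\log(1/\epsilon)}$ bound to obtain the stated sum, exactly as in \autoref{thm:cond-eval-LB}.

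The main obstacle — and the reason this is currently only a conjecture — is step (2) in the presence of \emph{adaptive} conditioning: an algorithm may choose $S$ to be, say, the complement of the union of all support elements it has already seen, so that $D(S)$ is precisely the ``unexplored mass''; if that residual mass were a non-trivial function of $\ell$, a single {\CD} query could leak $\Theta(\log\log n)$ bits and collapse the bound to $\Theta(\log\log\log n)$, which is exactly what happens for the family used in \autoref{thm:CDlowerboundfull}. Making the proposal work therefore hinges on a profile that is self-similar enough that, conditioned on any realizable sample of support elements, the residual unexplored mass has a distribution that is almost level-independent, while the support size still exhibits a genuine factor-$2$ gap across consecutive levels. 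Reconciling these two demands — enough scale-invariance to neutralize $D(S)$-leakage, yet enough structure that the support size itself is what the algorithm is forced to pin down — is the delicate point; a clean way to do it, perhaps via a reduction to a communication problem in which one player is given only a bounded-precision estimate of $D(S)$ and this is shown to be worth $O(1)$ bits per round (mirroring the approximate-{\CD} remark above), would, I expect, resolve the conjecture.
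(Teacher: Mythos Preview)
This statement is a \emph{conjecture}, not a theorem; the paper does not prove it and explicitly leaves it open. So there is no ``paper's own proof'' to compare against, and your proposal should be read as an attack plan on an open problem.

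That said, your plan has a structural gap that, as written, prevents it from yielding anything beyond what the paper already proves. You begin by invoking the reduction from general {\CD} queries to laminar queries, and then carry out the Fano/per-query-information argument in the laminar model. But that reduction (\autoref{lem:laminar} and its {\CD} analogue \autoref{lem:laminarappendix}) has an \emph{exponential} blowup: a $t$-query algorithm becomes a $2^t$-query laminar algorithm. Consequently, even a perfect execution of your steps (1)--(3) showing $q=\Omega(\log\log n)$ for the laminar simulation only yields $2^t=\Omega(\log\log n)$, i.e., $t=\Omega(\log\log\log n)$ for the original algorithm --- which is exactly \autoref{thm:CDlowerboundfull}. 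In fact, the paper already establishes the $\Omega(\log\log n)$ bound for laminar {\CD} queries (Section~\ref{sec:cd-LB}) using the Dirichlet family; the laminar reduction is precisely where the extra logarithm is lost, and the paper identifies this explicitly as the bottleneck.

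So the obstacle you describe in your final paragraph is mislocated. The difficulty is not bounding the per-query information by $O(1)$ in the laminar model --- that is done --- but rather either (a) bounding it by $O(1)$ directly for \emph{arbitrary} overlapping adaptive sets, where the Dirichlet independence property (\autoref{lem:independir}) no longer gives a clean conditional law, or (b) replacing the exponential laminar simulation by one with at most polynomial overhead. Your ``scale-invariant profile'' idea might conceivably help with (a), but then you must abandon the laminar reduction entirely and confront non-nested query sets head-on; your current write-up does not do this. Your closing suggestion --- reducing to a communication problem where only an $O(1)$-bit approximation of $D(S)$ is available per round --- is closer to the mark, and indeed the paper notes that the $\Omega(\log\log n)$ bound does go through for approximate-{\CD}; turning that observation into a proof for exact {\CD} is exactly the open problem.
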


 The validity of the above conjecture would establish the significant power of the {\CO} model in the context of support size estimation as {\CD} and {\CE}, despite being augmented with powerful oracles in addition to conditioning, do not yield better algorithms.

\subsection{Technical Overview}
\label{sec:tec}

\paragraph*{Lower bound for {\CE}. }
We start with deriving the lower bound of~\autoref{thm:cond-eval-LB}. It consists of two parts -- an $\Omega(\log \log n)$ lower bound for a multiplicative $4/3$-factor estimation algorithm and an $\Omega(\frac{1}{\epsilon^2 \log \frac{1}{\epsilon}})$ lower bound for an additive $\epsilon n$-factor algorithm. 
We first show an $\Omega(\log \log n)$ lower bound for a multiplicative $4/3$-factor algorithm (\autoref{lemma:loglogn}). For that purpose, we consider an \emph{integer-guessing} game between Alice and Bob, where Alice uniformly at random chooses an integer $x \in [\log n]$. Then sends a message (binary string) to Bob. Upon receiving the message, Bob's task is to guess $x$ correctly (with high probability). Since the entropy of the chosen integer is $\log \log n$, by Shannon's source coding theorem, the length of the message, on average, must be $\Omega(\log \log n)$. We show that if there exists an algorithm $T$ that makes $t$ {\CE} queries then it suffices for Alice to send a message of length $O(t)$, and hence $t = \Omega(\log \log n)$. 

To show the same, for each $x \in [\log n]$, Alice considers a distribution $D_x$ with support $[2^x]$. The probability $D_x(j)$ of  an element $j \in [2^x]$ in $D_x$ decreases exponentially as $j$ increases. Alice runs the algorithm $T$ on the distribution $D_x$ and would like to send an encoding of the run (i.e., the sampled element along with its probability for each step). Using a trivial encoding, even to send a sampled element, requires $\Theta(\log n)$ bits, which is already more than the Shannon entropy and thus would not give any lower bound. So Alice needs to use a slightly clever encoding. Roughly speaking, since the probabilities are exponentially decreasing, the conditional sampling from any set $S \subseteq [n]$ returns an element from the first "few" smallest elements of $S$ (with high probability). Thus even though $|S|$ can be large, the sampled element (at each step) can be specified by only constantly many bits. Alice sends this encoding of the sampled element (along with its probability value which can again be encoded with constantly many bits due to the construction of $D_x$'s) at each of $t$ steps to Bob. Hence Bob knows the complete run and thus can determine the index $x$ using the algorithm $T$. We provide the detailed proof in Section~\ref{sec:lowerbound}.

Next, we turn our attention to showing the dependency of $\epsilon$ in the lower bound. We show an $\Omega\Big(\frac{1}{\epsilon^2 \log \frac{1}{\epsilon}}\Big)$ lower bound for an additive $\epsilon n$-factor algorithm (\autoref{lem:lbepsilon}). Since a multiplicative $(1+\epsilon)$-factor algorithm also provides an additive $\epsilon n$-factor estimation, the above lower bound also works for $(1+\epsilon)$-factor algorithms. We prove this bound by showing a reduction from a well-studied communication complexity problem, namely the \emph{Gap-Hamming distance} problem (see Section~\ref{sec:additiveLB}), and then applying the known lower bound for the Gap-Hamming distance~\cite{chakrabarti2012optimal}. 
\ksmedits{The proof argument (and the dependence on $\epsilon$) also holds true for {\CD} model, and therefore, the detailed proof is presented in Section~\ref{sec:additiveLB}.

}

\paragraph*{Lower bound for {\CD}. }
The approach used to get an $\Omega(\log \log n)$ lower bound in the {\CE} model cannot be used to show the same for the {\CD} model. One of the powers of {\CD} model (over the {\CE}) comes from its ability to compute $D(S):=\sum_{j \in S} D(j)$ for any set $S \subseteq [n]$. 


Recall the hard instance $D_x$'s used in the $\Omega(\log  \log n)$ lower bound proof for the {\CE} model. Let $X^* := \{2^{x} \mid x \in [\log n]\}$. It is easy to verify that $D_{x}(X^*) \neq D_{x'}(X^*)$ for any $x \neq x' \in [\log n]$. So, the value of $x$ (and hence the support of $D_x$) can be determined using only \ksmedits{one} {\CD} query with the set $X^*$. Thus our lower bound argument fails in this model. For the sake of intuition about how we overcome the above issue, we want to point out that the above argument does not fail if we have an approximate {\CD} query instead of {\CD} query, i.e., if the oracle gives the estimate of $D(S)$ up to a small additive error say $\frac{3}{2^{n^{0.1}}}$. This is because for $x,x' \in [\frac{\log n}{10}, \log n]$, we have $|D_{x}(X^*) - D_{x'}(X^*)| \le \frac{2}{2^{n^{0.1}}}$. Hence, given that $x,x' \in [\frac{\log n}{10}, \log n]$, we can not distinguish between $x$ and $x'$ as the estimate could be the same for both $x$ and $x'$.

To mitigate the above issue (for {\CD}), we construct a new set of hard distributions. Our objective is that for any set $X^*$, if  the value of $D_x(X^*)$ is in $(0,1)$, then value of $D_x(X^*)$ should not give information about $x$. One plausible approach could be to replace each distribution $D_x$ with a finite set of distributions such that for any set $S\subseteq [n]$, there are many distributions in the instance with the same value of $D(S)$. Unfortunately, we do not know how to get such a set of distributions preserving other useful properties needed for our proof. Our key high-level idea is to replace the distribution $D_x$ (for each $x \in [\log n]$) with a distribution over an infinite number of distributions. This way, the value of $D_{x}(X^*)$ cannot be used to determine the value of chosen $x$ (as there can be infinite values of $x$ having the same value $D(X^*)$). However, one immediate issue that arises is if our instance has an infinite domain (here distributions), then how do we even get a distribution over an infinite space? Further, like before, we still want the probabilities to exponentially decrease so that the sampled element is always among the first few smallest elements of the conditioning set $S$. Fortunately, in statistics and compositional data analysis, there has been a study of probability distributions on the set of all (infinite) distributions. Based on requirements, our choice is the well-studied Dirichlet distribution that satisfies a strong independence property (\autoref{lem:independir}) which is necessary for our analysis. 

A Dirichlet distribution on support $[K]$ with parameters $\alpha_1,\cdots,\alpha_K > 0$ has a probability density function given by $f(p_1,\dots,p_K) =  \frac{\prod_{i \in [K]} p_i^{\alpha_i -1}}{B(\alpha_1,\cdots,\alpha_K)}$ where $\{p_i\}_{i \in K}$ belongs to the standard $K-1$ simplex, i.e., $\sum_{i \in [K]} p_i = 1$ and $p_i \ge 0$ for all $i \in [K]$ and $B(\alpha_1,\cdots,\alpha_K)$ is a  normalizing constant. When $\alpha_1 = \cdots =\alpha_K = 1$, the Dirichlet distribution is just the uniform distribution on $K-1$ simplex. The higher the value of parameter $\alpha_i$, the higher the (expected) value of $p_i$. Since we want the probability value to be exponentially decreasing, we set the values of $\alpha_1,\cdots,\alpha_n$ exponentially decreasing. Then for each index $x$ chosen uniformly at random from  $[\log n]$, we sample a distribution $D_x$ with support $[2^x]$ from the Dirichlet distribution with parameters $\alpha_1,\cdots,\alpha_{2^i}$. By the standard Yao's principle, it suffices to show high error probability of any deterministic algorithm that correctly estimates the support size (and hence determines the index $x$) of the distribution sampled as above. Note that the entropy of the index is still $\log \log n$, but the previous communication framework (between Alice and Bob) will not  be useful here. This is because the range of the outcomes of the algorithm (the actual and the conditional probabilities) is continuous and so cannot be encoded with any finite (or even infinite) length message. Instead, we apply Fano's inequality, a tool from information theory. 

Roughly speaking, we show that the information gain (about the index) by the query's outcome at every step is $O(1)$. Since the initial entropy of the index is $\log \log n$, at least $\log \log n$ steps of the algorithm are needed. The main technical challenge is to upper bound the information gain at every step. It is particularly challenging as it requires calculating the explicit density function (for the outcome) corresponding to each index. These density functions are conditioned on the previous outcomes and thus change at every step. Further, the set queried by the algorithms can be adaptive, which makes our task even more difficult. To ease our analysis, we first assume that the queried sets by the algorithm are \emph{laminar}, i.e., either they do not intersect or are subsets/supersets of each other. Our $\Omega(\log \log n)$ lower bound holds for {\CD} model for all the algorithms satisfying this laminar condition. It is not hard to observe that any algorithm that makes $t$ general queries can be simulated by an algorithm that queries the laminar family of sets and makes at most $2^t$ queries. This observation gives us $\Omega(\log \log \log n)$ lower bound for the general case. We believe that $\Omega(\log \log n)$ is the correct lower bound for the general case, but (perhaps it is an artifact of our analysis that) the laminar structure is necessary for applying the independence properties of Dirichlet distribution which leads to only an $\Omega(\log \log \log n)$ lower bound. 
\ksmedits{For the sake of simplicity, we first prove the lower bound for a weaker model called, {\SEval}, (in Section~\ref{sec:seval-LB}), and then extend to the general {\CD} model (in Section~\ref{sec:cd-LB}). We refer to the oracle that, given any $S \subseteq [n]$, just outputs the value of $D(S)$, as {\SEval} oracle. Since using a {\CD} query we can simulate a {\SEval} query, the {\CD} model is at least as powerful as {\SEval}. 
We now describe the proof of the lower bound for the {\SEval} model in more detail.
}

For an index $x$ chosen uniformly at random from  $[\log n]$, we sample a distribution $D_x$ with support $[2^x]$ from the Dirichlet distribution with parameters $\alpha_1,\cdots,\alpha_{2^x}$ ($\alpha_j = \frac{1}{2^j}$ for all $j \in [x]$). By the standard Yao's principle, it suffices for us to show a high error probability of any deterministic algorithm that correctly estimates the support size (and hence determines the index $x$) of the distribution sampled as above. Let $T$ be any such deterministic algorithm that queries a laminar family of sets. We need to show that $T$ must make $\Omega(\log \log n)$ queries. 
The main technical ingredient in the proof is to show that the information gain (about the index $x$) by the outcome of any query of the algorithm is $O(1)$ (\autoref{clm:boundonIstep}). This implies that the total information gain is $O(t)$, where $t$ is the number of queries (\autoref{clm:boundonIstep}). Then Fano's inequality immediately implies that $t \ge \Omega(\log \log n)$.  Let the $i$-th query ($i \in [t]$) be denoted by set $A_i$ and the outcome of the $i$-th query (sum of probabilities of elements in the set $A_i$) be denoted by $Z_i$.
The information gain (about the index $x$) by the $i$-th query (for any $i \in [t]$) is the conditional mutual information $I(X;Z_i|Z^{i-1})$ where $Z_i$ is the random variable denoting the  outcome of $i$-th query, $Z^{i-1} = (Z_1,\dots,Z_{i-1})$ is the random variable denoting the vector of previous outcomes and $X$  is the random variable denoting the uniformly chosen index $x$ from $[\log n]$. 
By definition, the conditional mutual information $I(A;B|C)$ for random variables $A,B,C$ is equal to the expectation (over $C$) of the KL divergence between the joint distribution $Q_{(A,B)|C}$ and the product distribution $Q_{A|C} \times  Q_{B|C}$. The technical difficulty is to upper bound the conditional mutual information  $I(X; Z_i|Z^{i-1})$ by $O(1)$. This is particularly challenging since the joint and the product distributions are not explicitly given, and queries are adaptive. To overcome this difficulty,  at any step $i \in [t]$ (after $(i-1)$-th query), we first partition the $[\log n]$ into four groups denoted by $L^i_0, L^i_1, L^i_2, L^i_3$ such that for the first three groups, the outcome of the algorithm (i.e., $Z_i$) is (deterministically) determined by the previous outcomes of the algorithm (i.e., $Z^{i-1} =(Z_1,\dots,Z^{i-1}))$ whereas for any $x$ in the fourth group $L^i_3$, the outcome is not fixed (given previous outcomes) but comes from a distribution (which we show to be also Dirichlet). Let this distribution be denoted by $Q_x$ for $x \in L^i_3$. 
 The information gain by the $i$-th query (\autoref{upperboundonI}) is $\log 3$ (because there are three groups for which outcome is deterministically determined) plus the information gain corresponding to the last group $L^i_3$. This term can be upper bounded by the maximum KL divergence between distributions $Q_x$ and $Q_{x'}$ for any $x,x' \in L^i_3$. Thus our goal is to show that KL divergence between $Q_x$ and $Q_{x'}$ for any $x,x' \in L^i_3$  is  $O(1)$. Using the independence property of Dirichlet distributions (\autoref{lem:independir}) and the laminar structure of query sets,  we show that the KL divergence between the distributions $Q_x$ and $Q_{x'}$ is equal to the KL divergence between two beta distributions with different parameters (beta distributions are a special case of Dirichlet distributions). The explicit formula for KL divergence between two beta distributions is well-known (e.g., see~\cite{johnson1995continuous}), and we use this formula to upper bound the KL divergence by $O(1)$.
\begin{remark}
Technically, the above lower bound for {\SEval} (and  {\CD}) does not hold if it is promised that the probabilities in the given distribution are rational numbers. This is because, in the lower bound instances above, the probability of an element can be any arbitrary real number in $(0,1)$. However, we can use the following standard fact to show that the lower bound holds even with the rational probabilities.

A Polya urn is an urn containing $\alpha_i$ balls of color $i$, for each $i \in [K]$. The urn evolves at each discrete time step -- a ball is sampled uniformly at random. The ball's color is observed, and two balls of the observed color are returned to the urn. Let $X_{i,m}$ be the number of balls of color $i$ (for each $i \in [K]$) added after $m$ time steps. Clearly $D_m = (\frac{X_{1,m}}{m},\cdots,\frac{X_{K,m}}{m})$ is a probability distribution over $[K]$. It can be shown that the distributions $D_1,\cdots,D_m$ converges to a Dirichlet distribution with parameters $\alpha_1,\dots,\alpha_K$ when $m$ tends to $\infty$~\cite{blackwell1973ferguson}. 

Instead of using Dirichlet distribution in our lower bound proof, we can use the distribution $D_m$ for sufficiently large $m$. Since for any $m$, the probabilities in $D_m$ are rational numbers, we can get the lower bound even when the probabilities are rationals.
\end{remark}

\paragraph*{The power of {\CD}. }


We further demonstrate the power of the {\CD} model by showing an algorithm with constant query complexity for a number of distribution testing problems for which there are strong lower bounds known for the {\CE} and {\CO} model. Our first example is the well-studied \emph{Equivalence testing} problem. Here, given two distributions $D$ and $D'$, the goal is to accept if $D = D'$ and reject if their \emph{total variation distance} $||D-D'||_{TV} = \sum_{i \in [n]}|D(i)-D'(i)| > \epsilon$ (both with high probability). It is known that $\Omega(\sqrt{\log \log n})$ queries are necessary in the {\CO} model~\cite{acharya2015chasm}. On the other hand, $\Omega(1/\epsilon)$ queries are required in the {\CE} model. (Consider a uniform distribution $U$ on $[n]$. Now randomly choose $i,j \in [n]$. We modify $U$ to construct another distribution $U'$ by setting $U'(i) = 2/n$, $U'(j) = 0$, and no changes in the probability mass of other elements. Note that $\epsilon = ||U-U'||_{TV} = 2/n$. It is easy to see that $\Omega(n) = \Omega(\frac{1}{\epsilon})$ queries are required to distinguish $U$ and $U'$ in the {\CE} model.) We show that  Equivalence testing can be done in just two {\CD} queries. The above upper bound result extends to another unrelated problem for the {\CD} model -- the problem of testing if the given distribution is \emph{$m$-grained}, i.e., the probability of each element is an integer multiple of $1/m$. 
Finally, we show that the multiplicative $(1+\epsilon)$-approximation of square of the $L_2$ norm $(\sum_{j \in [n]} D(j)^2)$ of a distribution $D$ can be computed using $O(\frac{1}{\epsilon^2})$ queries of {\CD}. To the best of our knowledge, this problem has been studied previously only in the {\Samp} model~\cite{goldreich2011testing}, wherein it was shown that $\Omega(\frac{\sqrt{n}}{\epsilon^2})$ queries are required. We refer the readers to Section~\ref{sec:power-SEval} for the details of the above upper bound results.

\paragraph*{The power of bounded-set conditioning. }We further study the support size estimation problem when we allow  {\Samp} oracle access and conditioning  on sets of size at most $k$ (see Section~\ref{sec:bounded-cond}). We show a lower bound of $\Omega(n/k)$ (\autoref{thm:boundedsetLB}) and an upper bound of $O(\frac{n \log \log n}{k})$ (\autoref{thm:boundedsetUB}) for constant factor approximation in this model. The upper bound holds for the {\CO} oracle model, while our lower bound holds for the stronger {\CD} oracle model.

Both the upper and lower bounds are not difficult to establish. Falahatgar et al.~\cite{falahatgar2016estimating} showed that $O(\log \log n)$ queries are sufficient (with no restriction on the size of the set for conditioning) to get a constant approximation  for oracle access which, given any $S \subseteq [n]$, returns whether $S \cap \supp(D) = \emptyset$ or not. Oracle access to a set of size $s$ can be simulated by $s/k$ oracle access when conditioning on at most $k$-sized sets is allowed. This gives an upper bound of  $O(\frac{n \log \log n}{k})$. Interestingly, the hard instance for the bounded-set conditioning to estimate the support size is when the support size is constant. We formally prove our lower bound in Section~\ref{sec:bounded-cond}.


\subsection{Conclusion}
\label{sec:conclusion}
We investigate the power of conditioning for estimating the support size up to a  multiplicative $(1+ \epsilon)$-factor. Till date, there is a gap between the upper bound of $O(\log \log n + 1/\epsilon^2)$ and the lower bound of $\Omega(\sqrt{\log \log n})$ in the standard {\CO} model. In this paper, we close this gap by providing a lower bound of $\Omega(\log \log n + \frac{1}{\epsilon^2 \log \frac{1}{\epsilon}})$. We actually show the lower bound in even a more powerful model, namely {\CE}, where in addition to the conditioning, one is also allowed to get the actual probability of the sampled elements (i.e., a combination of {\CO} and {\PRSamp}). In the dependency of $\epsilon$, there is a small gap of $\log (1/\epsilon)$ factor, and we want to leave the problem of removing this factor from the lower bound term as an open problem.

It is quite surprising that the combination of {\CO} and {\PRSamp} does not yield more power compared to only the {\CO} model in the context of the support size estimation. We thus continue our investigation by appending the algorithms with an even more powerful oracle that could also get the conditional probabilities of the sampled elements (not just the actual probabilities). We call this model {\CD}. This model turns out to be more powerful in the context of several other important distribution testing problems, as demonstrated in Section~\ref{sec:power-SEval}. For the support size estimation, we show a lower bound of $\Omega(\log \log \log n + \frac{1}{\epsilon^2 \log \frac{1}{\epsilon}})$ in this {\CD} model. On the technical side, this paper introduces many new ideas, such as using continuous distribution (Dirichlet distribution) for constructing hard instances and applying information theory and communication complexity tools to conditional sampling models. We hope that such techniques could be useful for showing non-trivial lower bounds for other distribution testing problems as well. 

For the support size estimation problem in the {\CD} model, currently, we only know of an $O(\log \log n)$ upper bound, whereas we could only show a lower bound of $\Omega(\log \log \log n)$. We would like to pose the problem of closing this gap as an interesting open problem.

\section{Preliminaries}
\label{sec:prelims}
\paragraph*{Notations. }We use the notation $[n]$ to denote the set of integers $\{1,2,\cdots,n\}$. For any probability distribution $D$ defined over $[n]$, for any $i \in [n]$, let $D(i)$ denote the probability of choosing $i$ when sampling according to $D$. For any subset $S \subseteq [n]$, we use $D(S)$ to denote the probability mass assigned on $S$ by the distribution $D$, i.e., $D(S) := \sum_{i \in S}D(i)$.

\paragraph{Different access models. }
Let $D$ be a distribution over $[n]$. Below we formally define the query models that we consider in this paper.
\begin{definition}[{\CO} Query Model]
A \emph{conditional} (in short, {\CO}) oracle for $D$ takes as input a set $S \subseteq [n]$, and if $D(S) > 0$, returns an element $j \in S$ with probability $D(j)/D(S)$. If $D(S) = 0$, then the oracle returns "failure".
\end{definition}

\begin{definition}[{\CE} Query Model]
A \emph{probability-revealing conditional sampling} (in short, {\CE}) oracle for $D$ takes as input a set $S \subseteq [n]$, and if $D(S) > 0$, returns a pair $(j,D(j))$  (where $j \in S$) with probability $D(j)/D(S)$. If $D(S) = 0$, then the oracle returns "failure".
\end{definition}

\begin{definition}[{\CD} Query Model]
A \emph{conditional evaluation} (in short, {\CD}) oracle for $D$ takes as input a set $S \subseteq [n]$, and if $D(S) > 0$, returns a tuple $(j,D(j),D(j)/D(S))$  (where $j \in S$) with probability $D(j)/D(S)$. If $D(S) = 0$, then the oracle returns "failure".
\end{definition}

\begin{definition}[{\SEval} Query Model]
A \emph{set evaluation} (in short, {\SEval}) oracle for $D$ takes as input a set $S \subseteq [n]$, and returns the value $D(S)$.
\end{definition}

It is straightforward to observe that the {\CD} is at least as powerful as the {\CE} oracle which in turn is at least as powerful as the {\CO} oracle. Further, the {\CD} oracle is at least as powerful as the {\SEval} oracle.\footnote{Since on input $S$, the {\CD} oracle returns a tuple $(j,D(j),D(j)/D(S))$ where $j \in S$, one can compute the value of $D(S)$ whenever $D(S) > 0$; otherwise (when $D(S) = 0$) the {\CD} oracle returns "failure", from which one can infer that $D(S) = 0$.}

\paragraph{Shannon entropy and source coding theorem. }
The \emph{entropy} of a discrete random variable $X$ taking values in $\mathcal{X}$ is defined as $H(X) := \sum_{x \in \mathcal{X}} p(x) \log \frac{1}{p(x)}$ where $p(x) = \Pr[X =x]$. 

The seminal work of Shannon~\cite{Sha48} establishes a connection between the entropy and the expected length of an optimal code that encodes a random variable.
\begin{theorem}[Shannon's Source Coding Theorem~\cite{Sha48}]
\label{thm:source-coding}
Let $X$ be a discrete random variable over domain ${\cal X}$. Then for every uniquely decodable code $C :{\cal X} \to \{0,1\}^*$,  $\mathbb{E}(|C(X)|) \ge H(X)$. Moreover, there exists a uniquely decodable code $C:{\cal X} \to \{0,1\}^*$ such that $\mathbb{E}(|C(X)|) \le H(X)+1$.
\end{theorem}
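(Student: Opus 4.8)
The plan is to prove the two inequalities separately, using the Kraft--McMillan inequality as the common tool in both directions.

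\textbf{The converse (lower) bound.} First I would establish the Kraft--McMillan inequality: if $C:\mathcal{X}\to\{0,1\}^*$ is uniquely decodable and $\ell_x := |C(x)|$, then $\sum_{x\in\mathcal{X}} 2^{-\ell_x}\le 1$. I would prove this by the standard power trick: for every integer $k\ge 1$,
\[
\Bigl(\sum_{x} 2^{-\ell_x}\Bigr)^{k} \;=\; \sum_{(x_1,\dots,x_k)} 2^{-(\ell_{x_1}+\cdots+\ell_{x_k})} \;=\; \sum_{m\ge 1} N_k(m)\, 2^{-m},
\]
where $N_k(m)$ counts the $k$-tuples $(x_1,\dots,x_k)$ whose concatenated encoding $C(x_1)\cdots C(x_k)$ has length exactly $m$. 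Unique decodability forces distinct $k$-tuples to give distinct binary strings, so $N_k(m)\le 2^m$; hence the sum is at most $k\,\ell_{\max}$ with $\ell_{\max}=\max_x \ell_x$, and taking $k$-th roots as $k\to\infty$ yields $\sum_x 2^{-\ell_x}\le 1$. Writing $p_x := \Pr[X=x]$, $S := \sum_x 2^{-\ell_x}\le 1$, and $q_x := 2^{-\ell_x}/S$ (a probability distribution on $\mathcal{X}$), we have $\ell_x = \log(1/q_x) + \log(1/S)$ with $\log(1/S)\ge 0$, so
\[
\mathbb{E}(|C(X)|) \;=\; \sum_x p_x \ell_x \;=\; \sum_x p_x\log\tfrac{1}{q_x} + \log\tfrac1S \;\ge\; \sum_x p_x\log\tfrac{1}{q_x} \;\ge\; \sum_x p_x\log\tfrac{1}{p_x} \;=\; H(X),
\]
where the last step is non-negativity of relative entropy, $\sum_x p_x\log(p_x/q_x)\ge 0$, itself Jensen's inequality for the concave function $\log$.

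\textbf{The achievability (upper) bound.} Here I would exhibit the Shannon code. Restricting attention to the $x$ with $p_x>0$ (zero-probability symbols affect neither $\mathbb{E}|C(X)|$ nor $H(X)$ and can be appended afterwards), set $\ell_x := \lceil \log(1/p_x)\rceil$. These lengths satisfy Kraft's inequality, $\sum_x 2^{-\ell_x} \le \sum_x 2^{-\log(1/p_x)} = \sum_x p_x \le 1$, so by the easy converse to Kraft --- greedily assigning binary strings of the prescribed non-decreasing lengths --- there is a prefix-free, hence uniquely decodable, code $C$ realizing exactly these lengths. Then $\mathbb{E}(|C(X)|) = \sum_x p_x\lceil\log(1/p_x)\rceil \le \sum_x p_x\bigl(\log(1/p_x)+1\bigr) = H(X)+1$.

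\textbf{Main obstacle.} Essentially all the content sits in the Kraft--McMillan inequality for \emph{uniquely decodable} (not merely prefix) codes; everything else is the Gibbs/Jensen inequality and a ceiling estimate. One minor point deserving a remark is the case $|\mathcal{X}|=\infty$: the power-sum identity remains valid termwise, and for the Shannon code the upper bound is vacuous when $H(X)=\infty$, while if $H(X)<\infty$ the construction and the estimate go through by monotone convergence.
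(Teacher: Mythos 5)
The paper states this theorem as a classical result imported from Shannon's 1948 paper and gives no proof of its own, so there is nothing in the text to compare against; your argument is the standard and correct one (Kraft--McMillan via the power trick, Gibbs' inequality for the converse, and the Shannon code $\ell_x=\lceil\log(1/p_x)\rceil$ for achievability). The only point worth tightening is the infinite-alphabet case of Kraft--McMillan: rather than invoking the power-sum identity ``termwise'' (where $\ell_{\max}=\infty$ breaks the bound $k\,\ell_{\max}$), apply the finite-alphabet argument to each finite subset of $\mathcal{X}$ and take the supremum.
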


\section{Lower Bound in {\CE} Model}
\label{sec:lowerbound}
In this section, we show a lower bound of  $\Omega(\log \log n + \frac{1}{\epsilon^2 \log \frac{1}{\epsilon}})$ {\CE} queries, for any  algorithm that estimates the support size within a multiplicative $(1 + \epsilon)$-factor with high probability. Since {\CE} is more powerful than {\CO}, our lower bound  holds for {\CO} as well. Till date, a lower bound of only $\Omega(\sqrt{\log \log n})$ is known for {\CO} queries~\cite{chakraborty2016power}. We want to emphasize that the lower bound construction in~\cite{chakraborty2016power} is quite complicated and their framework cannot give better than $\Omega(\sqrt{\log \log n})$ lower bound. This is because their lower bound is derived from the fact that the number of leaves  of a decision tree  representing a $q$-query tester is  $O(2^{q^2})$ (note, this bound is tight), which must be at least $ \Omega(\log n)$, implying the $\Omega(\sqrt{\log \log n})$ lower bound.


\condevalLB*

We prove the above theorem by first showing that $\Omega(\log \log n)$ queries are necessary to estimate the support size up to a multiplicative $4/3$-factor. 
Next, we show that to achieve an additive $\epsilon n$-approximation, $\Omega(\frac{1}{\epsilon^2 \log (1/\epsilon)})$ queries are required. Since multiplicative $(1 + \epsilon)$-approximation of the support size also implies an $\epsilon n$-additive approximation of the same, we get a lower bound of $\Omega(\frac{1}{\epsilon^2 \log (1/\epsilon)})$ on multiplicative $(1+ \epsilon)$-approximation algorithms as well.
It turns out $\Omega(\frac{1}{\epsilon^2 \log (1/\epsilon)})$ bound holds for the stronger model of  {\CD} query model (and thus also for the {\CE} model). Therefore, we defer the proof of lower bound for $\epsilon$-dependence to Section~\ref{sec:additiveLB} for the sake of clarity of exposition.


\begin{lemma}
\label{lemma:loglogn}
Any algorithm that, given {\CE} access to a distribution $D$ on $[n]$, estimates $|Supp(D)|$ within a multiplicative $\frac{4}{3}$-factor with probability at least $\frac{9}{10}$, must make $\Omega(\log \log n )$ queries to the {\CE} oracle.
\end{lemma}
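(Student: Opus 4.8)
The plan is to prove Lemma~\ref{lemma:loglogn} via a communication-complexity argument, exactly along the lines sketched in the Technical Overview. Set up a one-way \emph{integer-guessing} game: Alice holds an integer $x$ drawn uniformly from $[\log n]$ (so the Shannon entropy of $x$ is $\log\log n$), she sends a single message to Bob, and Bob must output $x$ with probability at least $9/10$. By Shannon's source coding theorem (\autoref{thm:source-coding}) combined with Fano's inequality, any such protocol with success probability $9/10$ must use an expected message length $\Omega(\log\log n)$. Hence it suffices to show that a $t$-query {\CE} support-size tester (with multiplicative factor $4/3$) yields a protocol for the game with message length $O(t)$; this forces $t = \Omega(\log\log n)$.

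The hard instances are the distributions $D_x$ for $x\in[\log n]$: $D_x$ is supported on $[2^x]$, and $D_x(j)$ decays geometrically in $j$, say $D_x(j) \propto 2^{-j}$ (normalized over $j\in[2^x]$), so that $|\supp(D_x)| = 2^x$ and these support sizes are pairwise separated by a factor of $2 > 4/3$; thus a correct $4/3$-estimator determines $x$. Given her input $x$, Alice privately simulates the (WLOG deterministic, by fixing the tester's randomness to a good string via an averaging/Yao-type argument — or simply keeping it randomized and having both parties share the random string) tester $T$ on $D_x$, step by step. At step $i$ the tester queries some set $S_i\subseteq[n]$; since $D_x$ is supported on $[2^x]$ and the masses decay geometrically, the conditional sample drawn from $S_i$ lies, with probability $1-o(1)$ (indeed failing only with probability exponentially small), among the $O(\log(1/\text{error budget})) = O(1)$ smallest elements of $S_i\cap\supp(D_x)$. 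So Alice can encode the returned sample by its \emph{rank} among the elements of $S_i$ that are still "live" — a number in $\{1,2,\dots,O(1)\}$ — using $O(1)$ bits, rather than the naive $\Theta(\log n)$ bits. Likewise the revealed probability value $D_x(j)$ is of the form (normalizing constant)$\cdot 2^{-j}$ with $j = O(1)$ plus lower-order correction, so it too is determined by $O(1)$ bits (or, with a slightly more careful construction, can be made to take only $O(1)$ distinct values over the relevant range). Concatenating these $O(1)$-bit encodings over all $t$ steps gives a message of length $O(t)$ that, together with Bob's knowledge of $T$ and the family $\{D_x\}$, lets Bob replay the entire run of $T$ and read off its output, hence $x$.

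To make this precise there are a few routine points to nail down. First, a union bound over the $\le t$ steps shows the "sampled element is among the $O(1)$ smallest live elements" event holds throughout with probability $\ge 1 - t\cdot 2^{-\Omega(1)}$; since $t$ is at most polylogarithmic in $n$ in the regime of interest (if $t$ is larger the lemma is trivially true), this probability is $1-o(1)$, and combined with $T$'s own success probability $9/10$ Bob still succeeds with probability bounded away from $1/2$ — good enough to invoke Fano. Second, one must handle the $D(S_i)=0$ ("failure") responses: these also carry only $O(1)$ bits (indeed one bit) of information and are encoded trivially. Third, one should check that Bob, replaying the tester, can reconstruct from each rank-plus-probability pair the \emph{actual} element of $[n]$ — this is where it matters that the query set $S_i$ is itself computable by Bob from the transcript so far (true, since $T$ is deterministic given the shared randomness and the answers so far), so Bob knows the list of live elements of $S_i$ and can decode the rank.

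The main obstacle, and the step deserving the most care, is the encoding/decoding argument: proving rigorously that the conditional sample from an \emph{arbitrary} set $S_i$ — which may be huge and may interleave support and non-support elements in a complicated way — concentrates on $O(1)$ of its smallest support-elements, \emph{and} that this bound on the "rank" is robust enough to survive a union bound over all $t$ adaptive steps. The geometric decay of $D_x$ is exactly what drives this: for any set $S$, $\Pr_{j\sim D_x(\cdot\mid S)}[\text{rank of }j\text{ in }S\cap\supp(D_x) > r]$ is at most the tail $\sum_{\ell>r}2^{-j_\ell}\big/\sum_{\ell\ge1}2^{-j_\ell} \le 2^{-r+1}$ where $j_1<j_2<\cdots$ enumerate $S\cap\supp(D_x)$, so choosing the cutoff $r = \Theta(\log t)$ makes the per-step failure $1/\mathrm{poly}(t)$ while keeping the encoding length $O(\log\log t)$ per step — or one can afford a fixed constant $r$ at the cost of a slightly weaker but still adequate union bound; either way the total message length is $O(t\log\log t)$ or $O(t)$, and both yield $t = \Omega(\log\log n / \log\log\log n)$ or $t = \Omega(\log\log n)$ respectively, the latter being what is claimed (so the constant-$r$ route, tuned carefully, is the one to push through).
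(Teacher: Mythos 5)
Your overall plan is the same as the paper's: reduce from a one-way integer-guessing game (whose communication complexity is $\Omega(\log\log n)$ by Shannon's source coding theorem), use the geometrically decaying distributions $D_x$ supported on $[2^x]$, and have Alice encode each oracle answer by the \emph{rank} of the sampled element in the queried set. However, there is a genuine gap at the step you yourself flag as the delicate one: controlling the total encoding length. You propose to argue that \emph{every} sampled element has rank at most some cutoff $r$, via a union bound over the $t$ steps. With a constant cutoff $r$ the per-step failure probability is a constant ($\approx 2^{-r+1}$), so the union bound gives failure probability $\Theta(t)$, which is vacuous precisely in the regime $t = \omega(1)$ that matters; your remark that this is fine "since $t$ is at most polylogarithmic in $n$" does not rescue it, because $t\cdot 2^{-\Omega(1)}\to\infty$ as $t$ grows. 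With the cutoff $r=\Theta(\log t)$ the union bound works but each block costs $\Theta(\log\log t)$ bits, giving message length $O(t\log\log t)$ and only $t=\Omega(\log\log n/\log\log\log n)$. You assert the constant-$r$ route can be "tuned carefully," but as stated it cannot: no constant cutoff survives a union bound over $t$ adaptive steps.

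The paper's resolution is different and is the actual content of the proof: one does not need every rank to be $O(1)$, only the \emph{sum} $\sum_{\ell}\lceil\log X_\ell\rceil$ of the block lengths to be $O(t)$, i.e.\ the \emph{product} $X_1\cdots X_t$ of the ranks to be $2^{O(t)}$. Since $\E[X_\ell]\le 6$ and $\E[X_\ell^2]\le 23$ for every query set (by the geometric decay), one gets $\E[X_1\cdots X_t]\le 6^t$ and $\mathrm{Var}(X_1\cdots X_t)\le 23^t$, and Chebyshev yields $X_1\cdots X_t\le 23^t$ with probability $5/6$; the variable-length blocks are then made parseable by a self-delimiting (interleaved-bit) encoding, for a total of $O(t)$ bits. (An expectation-plus-Markov argument on $\sum_\ell\log X_\ell$ would also do.) Two smaller points: with your normalization $D_x(j)\propto 2^{-j}$ the normalizing constant depends on $x$, so the revealed probability value itself leaks $x$ and cannot be reconstructed by Bob from $O(1)$ bits — the paper avoids this by letting the last atom absorb the leftover mass so every nonzero probability is exactly $1/2^{j}$ or $1/2^{j-1}$; and since the blocks are necessarily variable-length, you need an explicit prefix-free or delimited encoding, not plain concatenation.
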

We devote the rest of this section in proving the above lemma. 
First, consider the following simple \emph{integer-guessing game} between Alice and Bob: Alice chooses an integer $ x \in [\log n]$ uniformly at random. Bob's task is to guess the integer $x$.

Let us consider the randomized one-way communication complexity of the above problem. More specifically, only Alice can send a message to Bob (but Bob cannot send anything to Alice). Alice and Bob also have access to public randomness. Then the question is how many bits Alice needs to send to solve the above problem with probability at least $2/3$. The one-way communication complexity is defined to be the minimum number of bits that is communicated in any (randomized) protocol that solves the integer-guessing game.

By a simple application of Shannon's source coding theorem (\autoref{thm:source-coding}), we show the following.

\begin{claim}
\label{clm:intguess}
The randomized one-way communication complexity of the above integer-guessing game is $\Omega(\log \log n)$.
\end{claim}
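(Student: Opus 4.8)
The plan is to reduce the one-way communication problem to pure source coding via Shannon's theorem (\autoref{thm:source-coding}). First I would observe that in the integer-guessing game, Alice holds a random variable $X$ distributed uniformly on $[\log n]$, so its Shannon entropy is $H(X) = \log \log n$. Consider any randomized one-way protocol $\Pi$ in which Alice sends Bob a message $M = M(X, R)$ (where $R$ is the public randomness) and Bob outputs a guess $\hat{X} = \hat{X}(M, R)$ that equals $X$ with probability at least $2/3$. By averaging, we may fix the public randomness $R = r$ to a value for which the protocol still succeeds with probability at least $2/3$ over the choice of $X$; this only fixes notation and loses nothing, since a fixed-randomness protocol with message length $\ell$ exists whenever the randomized one costs $\ell$ in the worst case (and we are lower-bounding the expected cost, which is even more favorable). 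After fixing $r$, Alice's map $x \mapsto M(x,r)$ together with Bob's decoder essentially forms an encoding of $X$.

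The key step is to handle the fact that Bob only recovers $X$ with probability $2/3$, not exactly: a $2/3$-correct protocol is not literally a uniquely decodable code. I would get around this in one of two standard ways. The cleanest is to invoke Fano's inequality together with the data-processing inequality: since $X \to M \to \hat{X}$ is a Markov chain and $\Pr[\hat{X} \neq X] \le 1/3$, Fano gives $\log \log n = H(X) \le H(X \mid \hat{X}) + $ (a term) $\le h(1/3) + (1/3)\log(\log n - 1) + 1$, which is a contradiction unless $\log\log n = O(1)$; but more to the point, $I(X; M) \ge I(X;\hat X) \ge H(X) - h(1/3) - (1/3)\log\log n = \Omega(\log\log n)$, and since $\E|M| \ge H(M) \ge I(X;M)$ (the first inequality again by Shannon's source coding theorem applied to $M$, the second because conditioning reduces entropy), we conclude $\E|M| = \Omega(\log\log n)$. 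Alternatively, and perhaps more in the spirit of the technical overview which explicitly invokes \autoref{thm:source-coding}: amplify the success probability of $\Pi$ to, say, $1 - 1/(100\log n)$ by running $O(\log\log n)$ independent copies in parallel and having Bob take the majority vote; this multiplies the message length by only an $O(\log\log n)$ factor. With this amplified protocol, Bob recovers $X$ exactly with probability $\ge 1 - 1/(100 \log n)$; define a code $C$ on $[\log n]$ that, on input $x$, outputs Alice's (amplified) message when $x$ is "decodable" and a short explicit fallback prefix-tagged encoding of $x$ (costing $O(\log\log n)$ bits) otherwise. This $C$ is uniquely decodable, its expected length is at most $O(\log\log n)$ times Alice's original expected length plus $(1/(100\log n)) \cdot O(\log\log n) = o(1)$, and by Shannon's source coding theorem $\E|C(X)| \ge H(X) = \log\log n$, so Alice's original expected message length is $\Omega(\log\log n / \log\log n)$ — which is not quite enough, so I would prefer the Fano route above to avoid the amplification overhead entirely.

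The main obstacle is precisely this gap between an \emph{exactly} decodable code (to which Shannon's theorem applies directly) and a \emph{mostly correct} protocol, and the cleanest resolution is the information-theoretic one: bound $\E|M| \ge H(M) \ge I(M;X) \ge I(\hat X;X) \ge \Omega(\log\log n)$, where the middle inequality is $H(M) \ge H(M) - H(M\mid X) = I(M;X)$, the next is the data-processing inequality along $X \to M \to \hat X$, and the last is Fano's inequality applied to the fact that $\hat X$ predicts the uniform-on-$[\log n]$ variable $X$ with error at most $1/3$. This yields the claimed $\Omega(\log\log n)$ bound on the one-way communication complexity. I would remark that the bound is tight up to constants (Alice can just send $\lceil \log\log n\rceil$ bits naming $x$), and that this matches what is needed downstream: in \autoref{lemma:loglogn} one argues that a $t$-query {\CE} tester yields a protocol with $O(t)$-bit messages, so $t = \Omega(\log\log n)$.
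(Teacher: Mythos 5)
Your proof is correct, but it takes a genuinely different route from the paper's. The paper proves \autoref{clm:intguess} purely via Shannon's source coding theorem, using a fallback-encoding trick: given the shared randomness $R$, it defines $Enc(X)$ to be Alice's $t$-bit message when the protocol succeeds and a trivial $(\log\log n+1)$-bit encoding of $X$ otherwise, so that $\E[|Enc(X)|]\le t+\tfrac13(\log\log n+1)$, while $\E[|Enc(X)|]\ge H(X\mid R)=\log\log n$; the $\tfrac13$ failure probability is exactly what makes the fallback term affordable and forces $t=\Omega(\log\log n)$. You instead resolve the "mostly correct is not a code" obstacle with Fano's inequality plus data processing: $\E|M|\ge H(M)\ge I(X;M)\ge I(X;\hat X)\ge H(X)-h(1/3)-\tfrac13\log\log n=\Omega(\log\log n)$. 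Both arguments are sound. Your route is cleaner in that it sidesteps the unique-decodability bookkeeping of the fallback code (the paper's $Enc$ implicitly needs a flag bit to be uniquely decodable), and it reuses the same tool (\autoref{thm:fano}) the paper deploys later for the {\CD} bound; you are also right to discard your amplification variant, which loses the $\log\log n$ factor. Two small caveats: the step $\E|M|\ge H(M)$ for variable-length messages itself requires the message set to be prefix-free or uniquely decodable (otherwise one loses an additive $O(\log\E|M|)$ term, harmless here but worth stating), and the paper's encoding argument adapts more directly to the way the claim is actually invoked in \autoref{lemma:loglogn}, where Alice's message is only guaranteed to be short with constant probability; with your route one must first truncate long transcripts to convert the bound on $\E|M|$ into a bound on $t$.
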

\begin{proof}
Consider a one-way randomized protocol $P$ for the above integer-guessing game, where with probability at least $2/3$, Alice sends at most $t$ bits to Bob and Bob succeeds in guessing Alice's chosen integer. Let $X$ be the random variable denoting the integer chosen by Alice. Let the random variable $R$ denote the public randomness shared between Alice and Bob, to be used by the randomized communication protocol $P$. 

Now define the following encoding function $Enc$ to encode $X$ given $R$: If using $R$ the protocol $P$ succeeds (i.e., Bob outputs Alice's chosen integer correctly), then encode the $t$ bits sent by Alice (according to the protocol $P$); else encode $X$ using a trivial encoding (using at most $\log \log n + 1$ bits). Then $\E[|Enc(X)|] \le t + 1/3 \cdot (\log \log n + 1)$ (since the protocol $P$ succeeds with probability at least $2/3$ while Alice sending at most $t$ bits to Bob). Since $X$ and $R$ are independent, the entropy of $X \mid R$ is $H(X \mid R) = H(X) = \log \log n$. Now it follows from Shannon's source coding theorem (\autoref{thm:source-coding}),
\begin{align*}
    &\E[|Enc(X)|] \ge H(X\mid R)\\
    \Rightarrow & t + 1/3 \cdot (\log \log n + 1) \ge \log \log n
\end{align*}
which in turn implies $t \ge \Omega(\log \log n)$.
\end{proof}

Next, we provide a reduction from the above integer-guessing game to the problem of estimating the support size up to $4/3$-multiplicative factor. We start with constructing a hard distribution for the support size estimation. 

For each $i \in [\log n]$, we construct a distribution $D_i$ on $[n]$ with support size $2^i$ as follows: Set
\begin{align*}
   D_i(j) =
\begin{cases}
1/2^j &\text{ if } j\in [2^i-1]\\
1/2^{2^i - 1} &\text{ if } j=2^i\\
0&\text{ otherwise}.
\end{cases} 
\end{align*}

 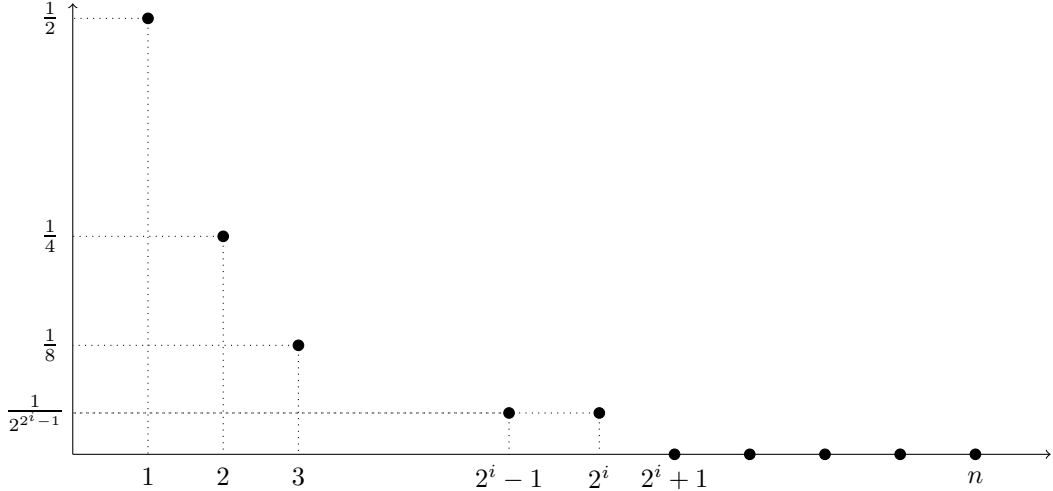
\begin{figure}[!htb]
\begin{tikzpicture}

\draw[->] (-1,0) -- (12,0);
\draw[->] (-1,0) -- (-1,6);

\node[] () at (7,-0.3) {$2^i + 1$};
\node[] (f) at (11,-0.3) {$n$};
\node[]  (g) at (6,-0.3) {$2^i$};
\node[] (h) at (4.8,-0.3) {$2^i -1$};
\node[] (i) at (0,-0.3) {$1$};
\node[] (j) at (1,-0.3) {$2$};
\node[] (k) at (2,-0.3) {$3$};

\node[] (l) at (-1.3,5.8) {$\frac{1}{2}$};
\node[] (m) at (-1.3,2.9) {$\frac{1}{4}$};
\node[] (n) at (-1.3,1.45) {$\frac{1}{8}$};
\node[] (o) at (-1.5,0.55) {$\frac{1}{2^{2^i - 1}}$};

\filldraw[black] (0,5.8) circle (2pt) node[] (a){};
\filldraw[black] (1,2.9) circle (2pt) node[] (b){};
\filldraw[black] (2,1.45) circle (2pt) node[] (c){};
\filldraw[black] (4.8,0.55) circle (2pt) node[](d){};
\filldraw[black] (6,0.55) circle (2pt) node[](e){};
\filldraw[black] (7,0) circle (2pt) node[](){};

\filldraw[black] (8,0) circle (2pt) node[](){};
\filldraw[black] (9,0) circle (2pt) node[](){};
\filldraw[black] (10,0) circle (2pt) node[](){};

\filldraw[black] (11,0) circle (2pt) node[](){};

\draw[dotted] (a) -- (0,0);
\draw[dotted] (b) -- (1,0);
\draw[dotted] (c) -- (2,0);
\draw[dotted] (d) -- (4.8,0);
\draw[dotted] (e) -- (6,0);

\draw[dotted] (a) -- (-1,5.8);
\draw[dotted] (b) -- (-1,2.9);
\draw[dotted] (c) -- (-1,1.45);
\draw[dotted] (d) -- (-1,0.55);
\draw[dotted] (e) -- (-1,0.55);

\end{tikzpicture}
	\caption{The distribution $D_i$}
	\label{fig:distribution}
\end{figure}

Given $S \subseteq [n]$ and $j \in S$, if $j$ is the $k$-th largest element in $S$ then we define rank of $j$ in $S$, denoted by $\rk_S(j)$, to be $k$. E.g., let $s_1<s_2<\cdots<s_c$ be the elements in $S$. Then the rank of $s_r$ (for $r \in [c]$) in $S$ is $\rk_S(s_r)=r$. 

For the sake of contradiction, we assume that there is a randomized algorithm $T$ that, with probability at least $9/10$, estimates the support size of any distribution on $[n]$ within a multiplicative $4/3$-factor by making at most $t=t(n)$ {\CE} queries. Next, we describe a one-way communication protocol between Alice and Bob to solve the integer-guessing game defined previously. First recall, by the construction, $|\supp(D_{i+1})| = 2 |\supp(D_i)|$, for any $i \in [\log n - 1]$. Now, since $T$ estimates the support size of any distribution within $4/3$-factor, from the output of the algorithm $T$, it is straightforward to retrieve $i$ if the input distribution is $D_i$ (for any $i \in [\log n]$) with probability at least $9/10$. Let us now describe a strategy used by Alice and Bob to solve the integer-guessing game using one-way communication.

Alice and Bob share a public random string $R$ (to be used by the randomized algorithm $T$). Recall, in the first step of the integer guessing game, Alice chooses $x \in [\log n]$ uniformly at random. Then Alice wants to construct a message on receiving which Bob can guess $x$ correctly with high probability. To construct the message, Alice considers the distribution $D_x$ (as defined earlier). Then she runs the algorithm $T$ (using the random string $R$) with {\CE} oracle queries to $D_x$. Given $R$, the algorithm $T$ can be viewed as a decision tree $\mathcal{T}_R$ in a straightforward way. Note, each leaf node of this decision tree is labelled by some $i \in [\log n]$. Since $T$ places exactly $t$ queries, the height of the decision tree $\mathcal{T}_R$ is also $t$. At any step $\ell$, let $A_\ell$ denote the set to condition on, and $(z_\ell, p_\ell)$ denote the output of {\CE} oracle after placing a query to $D_x$ condition on the set $A_\ell$. Thus $z_\ell \in A_\ell$ and $p_\ell = D_x(z_\ell)$. Note, at any step $\ell$, the decision tree $\mathcal{T}_R$ chooses the set $A_\ell$ based on $(A_1,z_1,p_1),\cdots,(A_{\ell-1},z_{\ell-1},p_{\ell-1})$. The root to leaf path in $\mathcal{T}_R$ is specified by the sequence $(A_1,z_1,p_1),\cdots,(A_{t},z_{t},p_{t})$. Then Alice constructs a "short" message so that Bob can retrieve the above sequence from her message. 

For an $\ell \in [t]$, let $X_\ell$ denote $\rk_{A_\ell}(z_\ell)$. Instead of directly encoding $z_\ell$, Alice encodes $X_\ell$. Observe, by the construction of the distributions $D_i$'s, for any $\ell\in [t]$, the value of $p_\ell$ can be either $1/2^{z_\ell}$, or $1/2^{z_\ell-1}$, or $0$. Alice uses two bits to encode these three values -- say $00$ for the value $0$, $01$ for the value $1/2^{z_\ell}$, and $10$ for the value $1/2^{z_\ell-1}$. Formally, each pair $(z_\ell,p_\ell)$, Alice uses the encoding $\blk(\ell)$ as follows: First, encode $X_\ell$ using $\lceil \log X_\ell \rceil$ bits. Then append that with two bits to encode the value of $p_\ell$ (as mentioned above). We refer to each $\blk(\ell)$ as \emph{block}. Note, $|\blk(\ell)| = \lceil \log X_\ell \rceil + 2$ .

Ideally, Alice would like to send the string $\blk(1)\circ \blk(2)\circ \cdots \blk(\ell)$ (let $\circ$ denote the concatenation operation). However, note, the length of $\blk(\ell)$ is different for different $\ell$'s. To make sure Bob can decode each encoding block correctly, Alice only uses every alternate bits of her message to encode $\blk(\ell)$'s and sets remaining bits as 1 or 0 depending on whether it is the end of an encoding block or not. More specifically, suppose $\blk(\ell) = b^\ell_1\dots b^\ell_{|\blk(\ell)|}$. Then Alice sends the following message
\[
M_R(x) = b^1_1 \thinspace 0 \thinspace b^1_2  \thinspace 0 \thinspace \dots 0 \thinspace b^1_{|\blk(1)|} \thinspace  1 \thinspace b^2_1 \thinspace 0 \thinspace b^2_2 \thinspace 0  \dots 0 \thinspace b^2_{|\blk(2)|} 1 \cdots  \thinspace  b^t_1 \thinspace 0 \thinspace b^t_2 \thinspace 0 \dots 0 \thinspace b^t_{|\blk(t)|}\thinspace  1
\]
to Bob. Note, an even digit is $0$ unless it marks the end of $\blk(\ell)$ for some $\ell \in [t]$. Total number of bits sent by Alice is 
\begin{equation}
\label{eq:encodinglen}
    |M_R(x)| = 2 \sum_{\ell \in [t]}|\blk(\ell)| = 2 \sum_{\ell \in [t]}(\lceil \log X_j \rceil + 2)  \le 6t + 2 \log (X_1 \cdots X_t).
\end{equation}

Let us now describe how Bob uses the message $M_R(x)$ to decode back the sequence $(A_1,z_1,p_1),\cdots,(A_{t},z_{t},p_{t})$. Recall, the random string $R$ is shared between both Alice and Bob. Thus Bob knows the decision tree $\mathcal{T}_R$ and so the set $A_1$. Now, Bob starts traversing the decision tree using the message $M_R(x)$. He finds the first even bit set to 1 and then discards all the even bits up to that part of the message $M_R(x)$ to retrieve $\blk(1)$. By the construction of $\blk(1)$, the last two bits denotes whether $p_1$ is $0$, or $1/2^{z_1}$, or $1/2^{z_1-1}$. The remaining bits represent $\rk_{A_1}(z_1)$. Once Bob gets the rank of $z_1$ in $A_1$, he retrieves $z_1$ (since he knows the set $A_1$), and then he retrieves the value of $p_1$. Using $(z_1,p_1)$, he traverses the decision tree and reaches the next internal node and gets to know the set $A_2$. Then he extracts $\blk(2)$ from the message $M_R(x)$ as before and proceeds. At the end, he decodes back the whole sequence $(A_1,z_1,p_1),\cdots,(A_{t},z_{t},p_{t})$, and thus reaches the desired leaf node in the decision tree $\mathcal{T}_R$. This completes the description of the strategy used by Alice and Bob.

Now, if the algorithm $T$ using the random string $R$ outputs an estimate of $|\supp(D_x)|$ up to $4/3$-factor, Bob also gets to know that estimated value by following the above described one-way protocol. Observe, since by the construction, $|\supp(D_{i+1})| = 2 |\supp(D_i)|$, for any $i \in [\log n - 1]$, from the estimated value Bob can easily compute $x$ correctly. Hence, the above one-way communication protocol solves the integer-guessing game with probability at least $9/10$.
 
 It only remains to argue about the communication cost (the message length sent by Alice) of the above one-way protocol. For that purpose, we claim the following.
\begin{claim}
\label{clm:rankbound}
With probability at least $5/6$, $X_1X_2\cdots X_t \le 23^t$.
\end{claim}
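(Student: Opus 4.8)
The plan is to show that the product $X_1 X_2 \cdots X_t$ of the ranks of the sampled elements is small with high probability, by controlling each factor individually and then taking a union bound. The key observation is that in the distribution $D_x$, the probabilities decay geometrically: $D_x(j) = 1/2^j$ for $j \in [2^x - 1]$. Hence, when we condition on any set $A_\ell$ with elements $s_1 < s_2 < \cdots < s_c$, the conditional probability that the sampled element $z_\ell$ has rank exactly $k$ (i.e.\ equals $s_k$) is $D_x(s_k)/D_x(A_\ell) \le D_x(s_k)/D_x(s_1) \le 1/2^{s_k - s_1} \le 1/2^{k-1}$, using that $s_k \ge s_1 + (k-1)$. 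So $\Pr[X_\ell \ge k] \le \sum_{r \ge k} 1/2^{r-1} = 1/2^{k-2}$, i.e.\ $X_\ell$ is stochastically dominated by a geometric-type random variable with exponential tails, regardless of the (adaptively chosen) set $A_\ell$ and regardless of the outcomes of previous queries.

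First I would make this tail bound precise, being careful about the edge case $j = 2^x$ where $D_x(2^x) = 1/2^{2^x - 1} = D_x(2^x - 1)$ (two elements share the same probability mass) — this only helps, or at worst changes constants, since it just means the smallest element of $A_\ell$ might carry mass $1/2^{s_1}$ or $1/2^{s_1 - 1}$ and the ratio argument goes through with an extra factor of $2$. Then I would argue that $\E[X_\ell \mid \text{past}] = O(1)$ and, more usefully, that $\E[\log X_\ell \mid \text{past}]$ or even $\E[X_\ell \mid \text{past}]$ is bounded by a constant uniformly. Concretely, $\E[X_\ell \mid \text{anything}] = \sum_{k \ge 1} \Pr[X_\ell \ge k] \le \sum_{k \ge 1} 2^{-(k-2)} = 8$ (a crude bound; one can do better but $8$ suffices), and this holds conditionally on the entire history. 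Therefore $\E[X_1 X_2 \cdots X_t] = \E[X_1 \E[X_2 \mid X_1] \cdots] \le 8^t$ by iterating the tower property, since each conditional expectation is bounded by the constant $8$ no matter what the earlier $X$'s were.

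Finally I would apply Markov's inequality: $\Pr[X_1 \cdots X_t > 6 \cdot 8^t] \le \E[X_1 \cdots X_t]/(6 \cdot 8^t) \le 1/6$, giving $X_1 \cdots X_t \le 6 \cdot 8^t \le 23^t$ with probability at least $5/6$ (here $6 \cdot 8^t \le 23^t$ holds for all $t \ge 1$ since $6 \le 23/8 \cdot 23^{t-1}$ trivially, and even $48 \le 23^2 = 529$). The constant $23$ in the statement is generous and leaves room for a looser tail estimate if one wants to avoid optimizing the per-step expectation bound; I would pick whatever clean constant makes the bookkeeping painless. The main obstacle — and the only subtlety worth dwelling on — is making sure the per-step tail bound genuinely holds \emph{conditionally} on the adaptively-chosen set $A_\ell$ and all prior outcomes, so that the tower-property iteration is valid; this is exactly where the geometric decay of $D_x$ is doing the work, and it is robust precisely because the bound $\Pr[X_\ell \ge k \mid \text{history}] \le 2^{-(k-2)}$ depends on nothing about $A_\ell$ except that its elements are distinct integers.
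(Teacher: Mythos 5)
Your argument is correct and lands on the same core idea as the paper's proof --- the geometric decay of $D_x$ forces the rank of the sampled element to have $O(1)$ moments per step, uniformly over the conditioning set --- but you finish differently. The paper bounds $\E[X_\ell] < 6$ and $\E[X_\ell^2] \le 23$, multiplies these across steps asserting that $X_1,\dots,X_t$ are independent, and applies Chebyshev; you instead use only first moments, iterate the tower property, and apply Markov. Your route is both more elementary (no second moments needed) and, importantly, more careful on the one genuinely delicate point: since the sets $A_\ell$ are chosen adaptively, the $X_\ell$ are \emph{not} literally independent, and the paper's product-of-expectations step is really justified by exactly the observation you make explicit, namely that $\E[X_\ell \mid \text{history}] \le C$ holds uniformly so the tower property gives $\E[X_1\cdots X_t] \le C^t$. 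You also handle the edge case $j = 2^x$ correctly (it costs at most a factor of $2$ in the per-element bound). One small arithmetic slip: $6\cdot 8^t \le 23^t$ fails at $t=1$ (it reads $48 \le 23$); this is harmless since for $t=1$ the claim follows immediately from your tail bound $\Pr[X_1 \ge 24] \le 2^{-21} < 1/6$, or you could use the sharper per-step bound $\E[X_\ell\mid\text{history}]\le 4$ that your own tail estimate actually yields and still treat $t=1$ separately --- but it should be patched.
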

We defer the proof of the above claim to the end. Let us now provide an upper bound on the message length by assuming the above claim. It implies from~\autoref{eq:encodinglen} that with probability at least $5/6$, $|M_R(x)| \le 16 t$. Hence, with probability at least $5/6 \cdot 9/10 = 3/4$, using the above one-way communication protocol Alice needs to send at most $16 t$ bits and Bob outputs Alice's chosen integer $x$ correctly. It now follows from~\autoref{clm:intguess}, $t \ge \Omega(\log \log n)$ which concludes the proof of~\autoref{lemma:loglogn}.

Now it only remains to prove~\autoref{clm:rankbound}.
\begin{proof}[Proof of~\autoref{clm:rankbound}]
Recall, for each $\ell \in [t]$, $X_\ell$ is a random variable denoting $\rk_{A_\ell}(z_\ell)$. We now bound its expectation and variance. Further recall, $z_\ell$ was sampled from the distribution $D_x$ condition on the set $A_\ell \subseteq [n]$. Suppose $|A_\ell| = m$. Let $a_1 < a_2 <\cdots a_m$ be the elements of $A_\ell$ in the order. Let $q_j$ denote the conditional probability of $a_j$. Let us first provide an upper bound on $q_j$ for any $3 \le j \le m$. Observe, by the construction, $D_x(a_j) \ge 2 D_x(a_{j+1})$ for all $j\in [m-2]$, and $D_x(a_{m-1}) \ge D_x(a_m)$. Thus, we have $D_x(a_j) \le \frac{D_x(a_1)}{2^{j-2}}$ for all $3 \le j \le m$. So we get $q_j = \frac{D_x(a_j)}{D_x(A_\ell)} = \frac{D_x(a_j)}{\sum_{k \in [m]}D_x(a_k)} \le \frac{1}{2^{j-2}}$, for any $3 \le j \le m$.

Now, we have $\E[X_\ell] = \sum_{j\in [m]} j \cdot q_j \le 
q_1 + q_2 + \sum_{j \ge 3} j \cdot \frac{1}{2^{j-2}} < 6$. Also $\E[X_\ell^2] = \sum_{j \in [m]}j^2 \cdot q_j \le 1 + 4 q_2 + \sum_{j\ge 3}\frac{j^2}{2^{j-2}}  \le 23$. 

Consider the random variable $Z:=X_1X_2\cdots X_t$. Since $X_1,\cdots,X_t$ are independent, we have $\E[Z] = \E[X_1]\cdots \E[X_t] \le 6^t$ and $Var(Z) = \E[X_1^2] \cdots \E[X_t^2]  - (\E[X_1])^2 \cdots (\E[X_t])^2 \le 23^t$. By Chebyshev inequality, for any $r$, we have $\Pr[Z \le \E[Z] + r] \ge 1 - Var(Z)/r^2 \ge  1-23^t/r^2$. Set $r = \sqrt{6\cdot23^t}$. Then we get $Z \le 6^t+\sqrt{6\cdot 23^t} < 23^t$ with probability at least $5/6$.
\end{proof}


\section{Lower bound for {\CD}}\label{sec:L1lowerbound}
In this section, we provide a lower bound on the query complexity of the support size estimation problem in the {\CD} query model. Recall, for a given distribution $D$, a {\CD} query with a set $S \subseteq [n]$ outputs an element $j \in S$ sampled as per its conditional probability $D(j)/D(S)$ (like {\CO}), its actual probability $D(j)$ (like {\CE}) and its conditional probability $D(j)/D(S)$. It is easy to see that the ${\CD}$ query model can simulate the {\SEval} model as $D(S)=\sum_{i \in S}D(i)$ can be determined from the values $D(j)/D(S)$ and $D(j)$, for any $S \subseteq [n]$. 

We now illustrate the power of the {\SEval} queries in our case of support size estimation. Recall, in the proof of~\autoref{lemma:loglogn}, we construct, for each $x \in [\log n]$, a distribution $D_x$ on $[n]$ with support size $2^x$. Suppose $x$ is uniformly chosen at random from $[\log n]$. Then we prove any deterministic algorithm that, given {\CE}  access to $D_x$ and estimates the support size of $D_x$ within multiplicative $4/3$-factor with high probability, must make $\Omega(\log \log n)$ queries. We now argue that if the algorithm is given {\SEval} access to $D_x$ then just one query is sufficient to estimate the support size with high probability. Let $X^* := \{2^{x}|x \in [\log n]\}$. It is easy to verify that $D_{x}(X^*) \neq D_{x'}(X^*)$ for any $x \neq x' \in [\log n]$. Hence, the value of $x$ (and hence the support of $D_x)$ can be determined via a {\SEval} query with the set $X^*$.

To mitigate the above issue, we need to construct a new set of hard distributions to show the lower bound. Our idea to construct hard distribution over distributions, for this model, is to replace the distribution $D_x$ (for each $x \in [\log n]$) with a distribution over an infinite number of distributions. This way, the value of $\sum_{j \in X^*}D_{x}(j)$ cannot be used to determine the value of Alice's chosen $x$. We will give a formal description of our hard distribution later. In this section, we first prove the following theorem.

\begin{restatable}{theorem}{cdlowerbound}
\label{thm:CDlowerbound}
Any algorithm that, given {\CD} access to a distribution $D$ on $[n]$ approximates the support size $|\supp(D)|$ within multiplicative $4/3$-factor with probability at least $2/3$, must make $\Omega(\log \log \log n)$  queries.
\end{restatable}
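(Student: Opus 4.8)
The plan is to prove Theorem~\ref{thm:CDlowerbound} in two stages. First, I would establish the claimed $\Omega(\log\log\log n)$ lower bound for the weaker {\SEval} model, and then observe that since a {\CD} query can simulate an {\SEval} query (as noted right after the theorem statement), the bound transfers to {\CD}. For the {\SEval} bound I would further reduce the general case to the \emph{laminar} case, as the technical overview promises: any algorithm making $t$ arbitrary {\SEval} queries can be simulated by one that queries only a laminar family, at the cost of blowing up the number of queries to at most $2^t$ (each new query refines the partition induced by previous sets, so after $t$ rounds at most $2^t$ atoms exist and any queried set is a union of atoms). Hence it suffices to prove an $\Omega(\log\log n)$ lower bound for algorithms restricted to laminar query families; setting $2^t \ge \Omega(\log\log n)$ then yields $t \ge \Omega(\log\log\log n)$ for the general case.

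For the laminar {\SEval} lower bound, the hard instance is: pick $x$ uniformly from $[\log n]$, then sample $D_x$ on support $[2^x]$ from the Dirichlet distribution with parameters $\alpha_j = 1/2^j$ for $j \in [2^x]$. By Yao's principle it suffices to lower-bound the error of an arbitrary \emph{deterministic} laminar-query algorithm $T$; let $A_i$ be its $i$-th query set and $Z_i = D_x(A_i)$ the outcome. The crux is to bound the per-step information gain $I(X; Z_i \mid Z^{i-1}) = O(1)$ (this is \autoref{clm:boundonIstep} / \autoref{upperboundonI}), after which the chain rule gives $I(X; Z^t) = O(t)$ and Fano's inequality (together with $H(X) = \log\log n$ and the success requirement) forces $t \ge \Omega(\log\log n)$. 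To bound a single step, I would, conditioned on the transcript $Z^{i-1}$, partition the index set $[\log n]$ into four groups $L^i_0, L^i_1, L^i_2, L^i_3$: for $x$ in the first three groups the value $Z_i$ is deterministically forced by $Z^{i-1}$ (e.g. $A_i$ is disjoint from, or contains, the support $[2^x]$, or $A_i \cap [2^x]$ coincides with a previously pinned-down set), so these contribute at most $\log 3$ to the entropy of $Z_i \mid Z^{i-1}$; for $x \in L^i_3$ the conditional law $Q_x$ of $Z_i$ is, by the aggregation/independence property of the Dirichlet distribution (\autoref{lem:independir}) combined with laminarity, exactly a Beta distribution whose two parameters are determined by the partial sums of the $\alpha_j$'s inside and outside $A_i$. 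Then $I(X;Z_i\mid Z^{i-1}) \le \log 3 + \max_{x,x'\in L^i_3} \mathrm{KL}(Q_x \| Q_{x'})$, and I would bound the latter using the closed-form expression for KL divergence between two Beta distributions (from~\cite{johnson1995continuous}), checking that with the choice $\alpha_j = 1/2^j$ the relevant parameter ratios stay bounded, so the KL is $O(1)$.

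The main obstacle, which I expect to occupy most of the work, is precisely the claim that for $x \in L^i_3$ the conditional outcome distribution $Q_x$ is Beta and, crucially, that the Beta parameters are controlled well enough across different $x \in L^i_3$ for the KL divergence to be a constant. Two things make this delicate: (i) the distributions $Q_x$ are conditioned on all previous outcomes $Z^{i-1}$, and one must argue — using repeatedly the property that disjoint-block sums of a Dirichlet vector are themselves Dirichlet and are independent of the within-block proportions — that this conditioning still leaves a clean Beta law for the new quantity $Z_i$ rather than some intractable truncated object; the laminar assumption is exactly what keeps the ``already-revealed'' information and the ``newly-queried'' set nested or disjoint so that the independence lemma applies. (ii) One must verify that the exponentially decaying $\alpha_j$'s do not make the Beta parameters for small $x$ versus large $x$ differ so wildly that the KL blows up — here the key arithmetic observation is that $\sum_{j > x} \alpha_j$ and $\sum_{j \le x}\alpha_j$ are both $\Theta(1)$-comparable for the indices surviving in $L^i_3$, so all relevant parameters lie in a bounded range. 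Once the $O(1)$ per-step bound is in hand, the information-theoretic wrap-up (chain rule, Fano, Yao) is routine, and the reduction from general to laminar queries is the short combinatorial argument sketched above.
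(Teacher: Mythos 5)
There is a genuine gap, and it is in the very first step of your plan. You propose to prove the bound for the weaker {\SEval} model and then conclude the bound for {\CD} "since a {\CD} query can simulate an {\SEval} query." The implication runs the wrong way: the fact that {\CD} can simulate {\SEval} means {\CD} is \emph{at least as powerful}, so an upper bound for {\SEval} transfers to {\CD}, and a lower bound for {\CD} transfers to {\SEval} --- but a lower bound for {\SEval} says nothing about {\CD}. A {\CD} query reveals strictly more per query than an {\SEval} query: in addition to (effectively) $D(S)$, it returns a sampled element $E_i\in S$ and its probability $D(E_i)$, and an adversarial algorithm could in principle extract the index $x$ from that extra information even if the sums $D(A_1),\dots,D(A_t)$ alone are uninformative. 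This is exactly why the paper treats the {\SEval} argument only as a warm-up (Section~\ref{sec:seval-LB}) and then spends Section~\ref{sec:cd-LB} extending it: the per-step information is decomposed as $I(X;(Z_i,E_i,D_X(E_i))\mid H^{i-1}) = I(X;Z_i\mid H^{i-1}) + I(X;E_i\mid Z_i,H^{i-1}) + I(X;D_X(E_i)\mid E_i,Z_i,H^{i-1})$, and the second and third terms require new ingredients absent from your proposal: the Dirichlet parameters must be made doubly-exponentially decaying ($\alpha_j = 1/c^j$ with $c = 4(2^{2^n})^2 2^{2^n}$) so that, via a Beta concentration bound (\autoref{lem:beta}), the sampled element is the minimum of the queried set except with probability $2^{-2^n}$, killing the information in $E_i$; and one must use the posterior-update property of the Dirichlet distribution after observing a sample (\autoref{lem:postsample}) to control the information in $D_X(E_i)$ conditioned on $E_i=m_i$. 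The laminar reduction also has to be strengthened (\autoref{lem:laminarappendix}) so that later query sets avoid previously sampled singletons.

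Your outline of the {\SEval} laminar lower bound itself --- Yao's principle, the Dirichlet hard instance with $\alpha_j=1/2^j$, the four-way partition $L^i_0,\dots,L^i_3$, the mutual-information-via-covering bound, the reduction to a KL divergence between two Beta distributions, and the Fano wrap-up --- matches the paper's Section~\ref{sec:seval-LB} essentially step for step and is sound as far as it goes. But as written it proves only \autoref{thm:L1lowerbound}, not \autoref{thm:CDlowerbound}; to close the gap you must bound the information leaked by the sampled element and its revealed probability, not just by the set mass $D(A_i)$.
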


Later, we will show in~\autoref{lem:lbepsilon} that $\Omega(\frac{1}{\log (1/\epsilon) \epsilon^2})$ queries in the {\CD} model are needed to approximate the support size within an additive $\epsilon n$-factor. Together, we get the following lower bound.

\celowerbound*

We first prove a lower bound of $\Omega(\log \log \log n)$ for the weaker {\SEval} model, which illustrates the main idea. Later in Section~\ref{sec:cd-LB} we will generalize this proof idea to the {\CD} model.

\subsection{Warm-up with Lower bound for {\SEval}}
\label{sec:seval-LB}

\begin{theorem}
\label{thm:L1lowerbound}
Any  algorithm that, given {\SEval} access to a distribution $D$ on $[n]$ approximates the support size $|\supp(D)|$ within multiplicative $4/3$-factor with  probability at least $2/3$, must make $\Omega(\log \log \log n)$ queries.
\end{theorem}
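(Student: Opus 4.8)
\textbf{Proof plan for Theorem~\ref{thm:L1lowerbound}.}

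The plan is to use Yao's principle: it suffices to exhibit a distribution over inputs on which every deterministic {\SEval} algorithm making $t = o(\log\log\log n)$ queries errs with probability more than $1/3$. The hard input distribution is the one sketched in the technical overview. Pick an index $x$ uniformly from $[\log n]$; given $x$, sample a distribution $D_x$ supported on $[2^x]$ from the Dirichlet distribution with parameters $\alpha_j = 2^{-j}$ for $j \in [2^x]$. (To be careful about the $j = 2^x$ coordinate, I would either let the last parameter absorb the residual mass as in the $D_i$ construction of Section~\ref{sec:lowerbound}, or simply take $\alpha_j = 2^{-j}$ throughout; the precise choice only affects constants.) Since $|\supp(D_{x+1})| = 2|\supp(D_x)|$, a $4/3$-factor estimate of the support size determines $x$, so any algorithm succeeding with probability $\ge 2/3$ yields an estimator $\hat X$ of $X$ with $\Pr[\hat X \ne X] \le 1/3$. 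By Fano's inequality, $\Pr[\hat X \ne X] \ge 1 - \frac{I(X;\hat X) + \log 2}{\log\log n}$, and by the data-processing inequality $I(X;\hat X) \le I(X; Z^t)$ where $Z^t = (Z_1,\dots,Z_t)$ are the query outcomes. So it suffices to show $I(X;Z^t) = O(t)$, and then $t = \Omega(\log\log\log n)$ follows (the $\log\log\log$ rather than $\log\log$ is because, as explained in the overview, we first restrict to laminar query families and a general $t$-query algorithm simulates into $2^t$ laminar queries).

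By the chain rule, $I(X;Z^t) = \sum_{i=1}^t I(X; Z_i \mid Z^{i-1})$, so the heart of the argument is the per-step bound $I(X; Z_i \mid Z^{i-1}) = O(1)$ (this is \autoref{clm:boundonIstep}). First I would reduce to laminar queries: observe that any deterministic algorithm making $t$ arbitrary {\SEval} queries can be simulated by one that queries a laminar family of size at most $2^t$ (e.g.\ by refining with all Boolean combinations of the sets queried so far, or more carefully, maintaining the atoms of the algebra generated by queried sets), so it is enough to prove the $\Omega(\log\log n)$ bound for laminar algorithms and lose a $\log$ only at the end. Fix a laminar deterministic algorithm and a step $i$. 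Conditioning on $Z^{i-1} = z^{i-1}$, the set $A_i$ is determined. The key structural fact: for each candidate index $x$, the conditional law $Q_x$ of $Z_i$ given $Z^{i-1}=z^{i-1}$ and $X = x$ is again (a scaled) Beta distribution. This is where the independence property of Dirichlet distributions (\autoref{lem:independir}) and laminarity are used — because the previously queried sets together with $A_i$ form a laminar family, the aggregation/partition property of the Dirichlet lets one express $D_x(A_i)$ conditioned on the coarser sums $Z^{i-1}$ as a Beta random variable whose two parameters are partial sums of the $\alpha_j$'s inside $A_i$ and in its complement within the relevant laminar "parent" block.

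I would then partition $[\log n]$ into four groups $L_0^i, L_1^i, L_2^i, L_3^i$ depending on $x$: in $L_0^i$ we have $A_i \cap [2^x] = \emptyset$ (so $Z_i = 0$ deterministically), in $L_1^i$ we have $[2^x] \subseteq A_i$ up to the already-revealed structure (so $Z_i$ is fixed by $z^{i-1}$), $L_2^i$ is another case where $Z_i$ is pinned by previous outcomes, and only for $x \in L_3^i$ is $Z_i$ genuinely random with law $Q_x$ a nondegenerate Beta. Then $I(X;Z_i\mid Z^{i-1}=z^{i-1}) \le \log 3 + I(X; Z_i \mid Z^{i-1}=z^{i-1}, X\in L_3^i) \le \log 3 + \max_{x,x'\in L_3^i} \mathrm{KL}(Q_x \| Q_{x'})$ (this is \autoref{upperboundonI}, using that mutual information with a mixture is bounded by the maximal pairwise KL to any fixed reference). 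The remaining task is to bound $\mathrm{KL}(\mathrm{Beta}(a,b)\,\|\,\mathrm{Beta}(a',b')) = O(1)$ for the parameter pairs that actually arise; here the exponential decay $\alpha_j = 2^{-j}$ ensures the relevant partial sums $a = \sum_{j\in A_i, j \le 2^x}2^{-j}$ and $b = \sum 2^{-j}$ over the complementary block are within constant factors (both $\Theta(2^{-j_0})$ for $j_0$ the smallest index in $A_i$), and one plugs into the closed-form KL between Beta distributions in terms of digamma/log-Beta functions (e.g.\ \cite{johnson1995continuous}) and bounds each term by a constant.

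\textbf{Main obstacle.} I expect the crux — and the place demanding the most care — to be establishing that $Q_x$ is exactly a Beta distribution under the laminar conditioning, and pinning down its parameters precisely enough to control the KL. This requires a careful inductive use of the Dirichlet aggregation and neutrality properties (\autoref{lem:independir}): after revealing $Z_1,\dots,Z_{i-1}$, the residual randomness of $D_x$ on the atoms of the laminar algebra is still a product of independent Dirichlet/Beta pieces, and one must check that $Z_i = D_x(A_i)$ is a sum of a sub-collection of these pieces that is a single Beta (not a sum of several Betas, which would not have a clean KL bound) — and this is exactly where laminarity, as opposed to arbitrary intersecting sets, is essential. The $\epsilon$-dependence $\Omega(1/(\epsilon^2\log(1/\epsilon)))$ is handled separately via the {\GHD} reduction of Section~\ref{sec:additiveLB} and combined at the end to give \autoref{thm:CDlowerbound} and then \autoref{thm:CDlowerboundfull}.
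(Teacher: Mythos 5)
Your proposal follows essentially the same route as the paper's proof: the same Dirichlet hard instance with exponentially decaying parameters, Yao's principle, Fano plus data-processing plus the chain rule, the $2^t$ laminar simulation, the four-way partition of $[\log n]$ combined with the mutual-information-via-covering bound, and the reduction of the per-step KL to a KL between Beta distributions controlled via the digamma closed form. You have also correctly identified the technical crux (that laminarity plus the Dirichlet aggregation property is exactly what makes the conditional law of $Z_i$ a single Beta), so the plan is sound.
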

To prove the above theorem, we first assume that the queried sets by the algorithm are laminar, i.e., for any two queried sets $A_i,A_j$ either $A_i \cap A_j = \emptyset$ or $A_i \subset A_j$ or $A_j \subset A_i$. We argue that an algorithm that makes $t$ {\SEval} queries can be simulated by an algorithm that makes at most $2^t$ {\SEval} queries such that the queried sets form a laminar family. Next, we show a lower bound of $\Omega(\log \log n)$ on the number of queries with the laminar family assumption. As a consequence, we deduce $\Omega(\log \log \log n)$ query lower bound for algorithms with arbitrary (i.e., with no assumption of queried sets forming a laminar family) {\SEval} queries. 
\begin{lemma}
\label{lem:laminar}
Any algorithm that makes $t$ {\SEval} queries to a given distribution $D$, can be simulated by an algorithm that makes at most $2^t$ {\SEval} queries with the property that  the queried sets are laminar.
\end{lemma}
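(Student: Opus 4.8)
\textbf{Proof plan for Lemma~\ref{lem:laminar}.}
The plan is to build the laminar-simulating algorithm incrementally, maintaining at every stage a partition of $[n]$ into ``atoms'' whose $D$-masses the algorithm has either already computed or can compute from the queries it has made so far. Concretely, I would simulate the original algorithm $T$ step by step: before the $\ell$-th query $A_\ell$ of $T$, the simulator holds a partition $\mathcal{P}_{\ell-1}$ of $[n]$ (a laminar family all of whose members it has queried, together with the pieces they induce) such that the mass of every block of $\mathcal{P}_{\ell-1}$ is known. When $T$ issues $A_\ell$, the simulator refines $\mathcal{P}_{\ell-1}$ by splitting each block $B$ into $B\cap A_\ell$ and $B\setminus A_\ell$; for each such new piece it issues one {\SEval} query (when the piece is non-empty and not already present), records the returned mass, and thereby recovers $D(A_\ell)=\sum_{B\in\mathcal{P}_{\ell-1}} D(B\cap A_\ell)$ to feed back to $T$. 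Crucially, all the sets the simulator ever queries — the original $A_1,\dots,A_\ell$ and all the intersection/difference pieces — form a laminar family, since intersecting a laminar family with a new set and taking the induced refinement again yields a laminar family.

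The remaining point is the query-count bound. Each of the $t$ queries of $T$ at most \emph{doubles} the number of blocks in the partition: block $B$ becomes $B\cap A_\ell$ and $B\setminus A_\ell$. Starting from the trivial partition $\{[n]\}$ with one block, after $t$ refinements there are at most $2^t$ blocks, and the number of distinct {\SEval} queries the simulator makes is bounded by the total number of distinct blocks ever created, which is at most $2^t$ (one can be slightly more careful — reusing masses of blocks that are unchanged by a refinement and not re-querying — but the crude bound $2^t$ suffices for the statement, and in any case each refinement step adds at most as many new non-empty pieces as there were old blocks, giving the geometric bound). Since the simulator can reconstruct $D(A_\ell)$ exactly for every $\ell$, it faithfully reproduces the behaviour of $T$, and hence its final output; this establishes the lemma.

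I do not expect a genuine obstacle here — the argument is essentially bookkeeping — but the one place to be careful is to make the laminarity claim precise: I would state and use the elementary fact that if $\mathcal{L}$ is laminar and $A\subseteq[n]$ is arbitrary, then the family consisting of $\mathcal{L}$ together with $A$ and all sets of the form $(B\cap A)$ and $(B\setminus A)$ for $B$ a block of the partition induced by $\mathcal{L}$ is again laminar. Iterating this $t$ times over $A_1,\dots,A_t$ yields a laminar family, and the doubling bound on the induced partition gives the $2^t$ count. One should also note that when $D(A_\ell)=0$ the {\SEval} oracle still returns $0$, so the ``failure'' subtlety present for {\CO}/{\CE}/{\CD} does not arise for {\SEval}, and the simulation goes through verbatim.
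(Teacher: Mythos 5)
Your proposal is correct and follows essentially the same route as the paper: the paper's algorithm $\mathcal{B}$ queries, at step $i$, all sets $A_i \cap C_1 \cap \cdots \cap C_{i-1}$ with $C_j \in \{A_j, A_j^c\}$, which are exactly your pieces $B \cap A_i$ for $B$ ranging over the blocks of the partition induced by $A_1,\dots,A_{i-1}$, and recovers $D(A_i)$ by summation, for a total of $\sum_{i\le t} 2^{i-1} \le 2^t$ queries. The only cosmetic difference is that the paper queries only the pieces inside $A_i$ (not the complementary pieces $B\setminus A_i$), which is the "slightly more careful" bookkeeping you allude to and is what gives the stated $2^t$ bound exactly rather than $2^{t+1}$.
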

\begin{proof}
Let algorithm $\mathcal{A}$ makes $t$ queries of {\SEval}. We give an algorithm $\mathcal{B}$ that can simulate $\mathcal{A}$ such that the queried sets by $\mathcal{B}$ are laminar. Let the queries of $\mathcal{A}$, in order, be denoted by $A_1,\dots, A_t$. Then the first set queried by $\mathcal{B}$ is also $A_1$. On receiving the value $D(A_1)$, the algorithm $\mathcal{B}$ (using $\mathcal{A}$) determines the set $A_2$. Then the algorithm $\mathcal{B}$, instead of $A_2$, queries the sets $A_2 \cap A_1$ and $A_2 \cap A^c_1$ (where $A^c_2 =[n]\setminus A_2$), and receives the values $D(A_2 \cap A_1)$ and $D(A_2 \cap A^c_1)$. Note, $D(A_2) = D(A_2 \cap A_1)+D(A_2 \cap A^c_1)$. Thus $\mathcal{B}$ can then determine the set $A_3$ (again using $\mathcal{A}$).

In general, instead of $A_i$, the algorithm $\mathcal{B}$ queries all the sets in 
\[
\mathcal{C} := \{A_i \cap C_1 \cap C_2 \cdots \cap C_{i-1} \mid C_j \in \{A_j,A^c_j\}, \; \forall j \in [i-1]\}.
\]
where $A^c_j = [n]\setminus A^j$. Note that $|\mathcal{C}| = 2^{i-1}$ and $D(A_i) = \sum_{C \in \mathcal{C}} D(C)$. From the value of $D(A_i)$, the algorithm $\mathcal{B}$, using $\mathcal{A}$, determines $A_{i+1}$, and proceeds.

It is easy to see that the sets queried by $\mathcal{B}$ form a laminar family and total number of queries is at most  $\le \sum_{j \in [t]}2^{j-1} \le 2^t$.

\end{proof}

So now to prove~\autoref{thm:L1lowerbound} it suffices to show the following.
\begin{lemma}
\label{lem:L1lowerboundlaminar}
Any  algorithm that queries laminar family of sets and given {\SEval} access to a distribution $D$ on $[n]$, approximates the support size $ |\supp(D)|$ within multiplicative $4/3$-factor with  probability at least $2/3$, must make $\Omega(\log \log n)$ queries.
\end{lemma}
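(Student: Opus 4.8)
The plan is to set up the hard instance via the Dirichlet construction and then bound the per-query information gain, concluding via Fano's inequality as sketched in the technical overview. Concretely, I would choose $x$ uniformly at random from $[\log n]$ and, given $x$, sample the distribution $D_x$ supported on $[2^x]$ from the Dirichlet distribution with parameters $\alpha_1, \dots, \alpha_{2^x}$ where $\alpha_j = 2^{-j}$. By Yao's principle it suffices to lower bound the number of queries made by an arbitrary \emph{deterministic} algorithm $T$ whose queried sets form a laminar family and which recovers $|\supp(D_x)| = 2^x$ (hence $x$, since the support sizes are geometrically separated) within a $4/3$-factor with probability $\ge 2/3$. Let $A_1, \dots, A_t$ be the (adaptively chosen, laminar) query sets, $Z_i = D(A_i)$ the outcomes, $Z^{i-1} = (Z_1, \dots, Z_{i-1})$, and $X$ the random index. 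By the chain rule, $I(X; Z^t) = \sum_{i=1}^t I(X; Z_i \mid Z^{i-1})$, so if I can show $I(X; Z_i \mid Z^{i-1}) = O(1)$ for every $i$, then $I(X; Z^t) = O(t)$; Fano's inequality together with the fact that $T$'s output is a (deterministic) function of $Z^t$ and must equal $X$ with probability $\ge 2/3$ forces $O(t) \ge I(X;Z^t) = \Omega(\log|\{x\}|) = \Omega(\log\log n)$, i.e. $t = \Omega(\log\log n)$.

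The heart of the argument is the per-step bound $I(X; Z_i \mid Z^{i-1}) = O(1)$. Here I would fix a realization $z^{i-1}$ of the previous outcomes and analyze the conditional law of $Z_i$ as a function of $x$. The key structural observation is that, because the query family is laminar, conditioning on the values $D(A_1), \dots, D(A_{i-1})$ of a laminar family pins down the distribution $D_x$ to be, on each "atom" of the laminar partition, an (appropriately rescaled) Dirichlet distribution — this is exactly where the strong independence property of Dirichlet distributions (\autoref{lem:independir}, the aggregation/decimation closure of Dirichlet) is used, and it is the reason the laminar assumption is needed. Given $z^{i-1}$, I partition $[\log n]$ into four groups $L^i_0, L^i_1, L^i_2, L^i_3$: for $x$ in the first three groups the new set $A_i$ is positioned relative to the previously revealed laminar structure so that $Z_i = D_x(A_i)$ is already \emph{determined} by $z^{i-1}$ (e.g.\ $A_i$ falls outside $\supp(D_x) = [2^x]$, giving $Z_i = 0$; or $A_i$ contains all of $\supp(D_x)$, giving $Z_i = 1$; or $A_i$ coincides with a union of already-resolved atoms), while for $x \in L^i_3$ the outcome $Z_i$ is genuinely random with some conditional law $Q_x$ which, by the Dirichlet closure property, is a Beta distribution (a union of atoms of a Dirichlet vector is Beta-distributed) whose two parameters are partial sums of the $\alpha_j$'s. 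Then
\begin{align*}
I(X; Z_i \mid Z^{i-1} = z^{i-1}) \le \log 4 + \max_{x, x' \in L^i_3} \mathrm{KL}(Q_x \,\|\, Q_{x'}),
\end{align*}
since a four-way "which group" variable carries at most $\log 4$ bits and, restricted to $L^i_3$, the mutual information is bounded by the diameter (in KL) of the family $\{Q_x\}$.

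So the final quantitative step is to bound $\mathrm{KL}(Q_x \,\|\, Q_{x'}) = O(1)$ for $x, x' \in L^i_3$, where $Q_x = \mathrm{Beta}(a_x, b_x)$, $Q_{x'} = \mathrm{Beta}(a_{x'}, b_{x'})$, with the parameters being sums of a geometric-tail of $\{2^{-j}\}$ over index ranges determined by $A_i$ and the support cutoff $2^x$. Because the $\alpha_j = 2^{-j}$ decay geometrically, all these partial sums $\sum_{j \in \text{range}} 2^{-j}$ lie in a bounded range — roughly between $\Theta(2^{-2^x})$ and $1$ — and, more importantly, the ratio $b_x / b_{x'}$ and the offsets $|a_x - a_{x'}|$ etc.\ stay bounded because two indices in the same group $L^i_3$ differ only in how much of a \emph{geometrically small} tail is included; I would plug these into the closed-form expression for KL divergence between two Beta distributions (in terms of $\ln\frac{B(a',b')}{B(a,b)}$, digamma terms $\psi$, and differences of parameters; see \cite{johnson1995continuous}), and verify that each term is $O(1)$ using standard estimates on the Beta function and on $\psi$ for small and large arguments. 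I expect this Beta-KL estimate — carefully identifying the parameter ranges forced by membership in a common group $L^i_3$ and checking no term blows up, especially digamma near $0$ — to be the main technical obstacle; the setup, the laminar-to-Dirichlet-closure reduction, and the Fano wrap-up are comparatively routine. I would also note that the reduction from laminar to general queries (\autoref{lem:laminar}) converts this $\Omega(\log\log n)$ laminar bound into the claimed $\Omega(\log\log\log n)$ bound of \autoref{thm:L1lowerbound}, but that step is already handled and is not part of proving \autoref{lem:L1lowerboundlaminar} itself.
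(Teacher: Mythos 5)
Your proposal follows essentially the same route as the paper's proof: the Dirichlet hard instance with $\alpha_j=2^{-j}$, Yao's principle, the chain rule plus Fano wrap-up, the four-way partition $L^i_0,\dots,L^i_3$ with the $\log 4 + \max_{x,x'} \mathrm{KL}(Q_x\|Q_{x'})$ covering bound, and the identification of the conditional laws on $L^i_3$ as Beta distributions via the Dirichlet aggregation property under laminarity. The only part you defer --- verifying $\mathrm{KL}(\mathrm{Beta}(a_x,b_x)\|\mathrm{Beta}(a_{x'},b_{x'}))=O(1)$ from the closed form --- is exactly what the paper does via the bound $|w\psi(w)|\le 3$ for $w\in(0,1]$ and the $[1,2]$ ratio bounds on the parameter sums, and your stated reasons for why the parameters stay controlled (geometric tails) are the correct ones.
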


To establish our result we need to introduce a few definitions and known results.

\paragraph*{Basics of information theory. }
The \emph{Kullback–Leibler divergence} or simply \emph{KL divergence} (also called {relative entropy}) between two  discrete probability distributions $P$ and $Q$ defined on same probability space $\mathcal{X}$ is given by :
\[
KL(P||Q) := \sum_{x \in \mathcal{X}}p(x) \log \frac{p(x)}{q(x)}
\]
where $p$ and $q$ are probability mass  functions of $P$ and $Q$ respectively. If $P$ and $Q$ are continuous distributions then the summation is replaced by integration: $KL(P||Q) = \int_{x \in \mathcal{X}} p(x) \log \frac{p(x)}{q(x)} \,dx$. 

Let $X$ and $Y$ be two random variables over the space $\mathcal{X} \times \mathcal{Y}$. If their joint distribution is $Q_{X,Y}$ and marginal distributions $Q_X$ and $Q_Y$ respectively, then the mutual information $I(X;Y)$ is defined as:
\[
I(X;Y) := KL(Q_{X,Y}||Q_X \times Q_Y).
\]
For three random variables $X,Y,Z$, the \emph{conditional mutual information} $I(X;Y|Z)$ is defined as 
\begin{align*}
I(X;Y|Z) :=  \mathbb{E}_Z[KL(Q_{(X,Y)|Z}||Q_{X|Z} \times  Q_{Y|Z})]. 
\end{align*}
If $Z$ is a discrete random variable taking values in $\mathcal{Z}$ then we have 
\begin{align*}
&\mathbb{E}_Z[KL(Q_{(X,Y)|Z}||Q_{X|Z} \times  Q_{Y|Z})] = \sum_{z \in \mathcal{Z}} Q_Z(z) \cdot  KL(Q_{(X,Y)|Z=z}||Q_{X|Z=z} \times Q_{Y|Z=z})\\
& = \sum_{z \in \mathcal{Z}} Q_Z(z) \cdot  I(X;Y|Z=z).
\end{align*}
If $Z$ is a continuous random variable with density function $p$ then 
\begin{align*}
& \mathbb{E}_Z[KL(Q_{(X,Y)|Z}||Q_{X|Z} \times  P_{Y|Z})] = \int_{z \in \mathcal{Z}} Q_Z(z) \cdot KL(P_{(X,Y)|Z=z}||P_{X|Z=z} \times P_{Y|Z=z}) \,dz.\\
& = \int_{z \in \mathcal{Z}} Q_Z(z) \cdot I(X,Y|Z=z) \,dz.
\end{align*}
For any three random variables $X,Y,Z$, the \emph{chain rule} for mutual information says that
\[
I(X;(Y,Z)) = I(X;Y)+I(X;Z|Y).
\]

 \begin{theorem}[Fano's inequality]
 \label{thm:fano}
  Consider discrete random variables $X$ and $\hat{X}$ both taking values in $\mathcal{V}$. Then 
  \[
  \Pr[\hat{X} \neq X] \ge 1 - \frac{I(X;\hat{X})+\log 2}{\log |\mathcal{V}|}.
  \]
 \end{theorem}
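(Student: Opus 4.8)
The plan is to run the standard \emph{error-indicator} argument from information theory. Write $P_e := \Pr[\hat{X} \neq X]$ and introduce the Bernoulli random variable $E$ defined by $E = 1$ iff $X \neq \hat{X}$, so that $\Pr[E = 1] = P_e$. The whole proof then reduces to proving an upper bound on the conditional entropy $H(X \mid \hat{X})$, which we combine with the identity $I(X;\hat{X}) = H(X) - H(X \mid \hat{X})$.

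For the upper bound on $H(X \mid \hat{X})$ I would expand the joint conditional entropy $H(E, X \mid \hat{X})$ in two ways using the chain rule for (conditional) entropy. On one hand, $H(E, X \mid \hat{X}) = H(X \mid \hat{X}) + H(E \mid X, \hat{X}) = H(X \mid \hat{X})$, since $E$ is a deterministic function of the pair $(X, \hat{X})$ and hence $H(E \mid X, \hat{X}) = 0$. On the other hand, $H(E, X \mid \hat{X}) = H(E \mid \hat{X}) + H(X \mid E, \hat{X})$. Here $H(E \mid \hat{X}) \le H(E) \le \log 2$ because $E$ is binary, and $H(X \mid E, \hat{X}) = \Pr[E = 0]\, H(X \mid \hat{X}, E = 0) + \Pr[E = 1]\, H(X \mid \hat{X}, E = 1)$. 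On the event $E = 0$ we have $X = \hat{X}$, so the first term vanishes; on the event $E = 1$, conditioned on any value $\hat{X} = v$ the variable $X$ is supported on $\mathcal{V} \setminus \{v\}$, hence its entropy is at most $\log(|\mathcal{V}| - 1) \le \log|\mathcal{V}|$. Therefore $H(X \mid E, \hat{X}) \le P_e \log|\mathcal{V}|$, and comparing the two expansions gives
\[
H(X \mid \hat{X}) \;\le\; \log 2 + P_e \log|\mathcal{V}|.
\]

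To finish, I would substitute $H(X \mid \hat{X}) = H(X) - I(X;\hat{X})$ into this inequality and rearrange, obtaining $P_e \ge \bigl(H(X) - I(X;\hat{X}) - \log 2\bigr)/\log|\mathcal{V}|$. In the setting where this theorem is used, $X$ is uniform on $\mathcal{V}$, so $H(X) = \log|\mathcal{V}|$, and the bound specializes to $\Pr[\hat{X} \neq X] \ge 1 - \frac{I(X;\hat{X}) + \log 2}{\log|\mathcal{V}|}$, which is exactly the claim.

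Fano's inequality is a textbook fact, so there is no genuine obstacle here; the points that need care are purely bookkeeping: checking that $H(E \mid X, \hat{X}) = 0$ because $E$ is determined by $(X, \hat{X})$, getting the support size right in $H(X \mid \hat{X} = v, E = 1) \le \log(|\mathcal{V}| - 1)$, and noting that the leading constant $1$ in the statement comes precisely from $H(X) = \log|\mathcal{V}|$, i.e. from $X$ being uniform (without that assumption one only gets $H(X)/\log|\mathcal{V}|$ in its place, and the inequality is false in general, e.g. for a constant $X$). If one prefers to avoid conditional entropy entirely, the same estimate can be written using only mutual information together with the chain rule $I(X;(Y,Z)) = I(X;Y) + I(X;Z \mid Y)$ stated above, but the entropy bookkeeping above is the cleanest route.
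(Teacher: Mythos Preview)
The paper does not prove this statement at all; Fano's inequality is simply quoted as a classical fact from information theory (with a pointer to a textbook such as~\cite{CT06}) and then applied. So there is no ``paper's own proof'' to compare against. Your argument is exactly the standard error-indicator derivation and is correct.

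Your closing caveat is also on point and worth emphasizing: the inequality as stated in the paper, with the leading $1$, is only valid when $H(X) = \log|\mathcal{V}|$, i.e.\ when $X$ is uniform on $\mathcal{V}$. The paper's applications (where $X$ is the uniformly chosen index in $[\log n]$) satisfy this, so nothing breaks downstream, but the theorem statement itself is missing that hypothesis; without it one only gets $P_e \ge \frac{H(X) - I(X;\hat{X}) - \log 2}{\log|\mathcal{V}|}$, as you note.
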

 
 Consider the random variables $X,Z,\hat{X}$. 
 If the random variable $\hat{X}$ depends only on $Z$ and is conditionally independent on $X$, then we have $I(X;\hat{X}) \le I(X;Z)$. This inequality is known as the \emph{data processing inequality}. For further exposition, readers may refer to any standard textbook on information theory (e.g.,~\cite{CT06}).

\paragraph*{Dirichlet Distribution. }Now, we describe Dirichlet distribution, which is one of the main building blocks of our hard distribution. For any $K \in \mathbb{N}$, a random variable $P = (P_1,\dots,P_K)$ is said to have a \emph{Dirichlet distribution} with parameters $\alpha_1,\dots,\alpha_K >0$ (denoted by $Dir(\alpha_1,\dots,\alpha_K))$ if it has a probability density function $f(p_1,\dots,p_K)$ given by:
\begin{align*}
   f(p_1,\dots,p_K) =
\begin{cases}
\frac{\prod_{i \in [K]} p_i^{\alpha_i -1}}{B(\alpha_1,\dots,\alpha_K)} &\text{ if } \sum_{i \in [K]} p_i =1, p_i \in (0,1]\\
0 &\text{ otherwise}.
\end{cases} 
\end{align*}

 The denominator $B(\alpha_1,\dots,\alpha_K)$ is a normalization constant  to ensure total probability is $1$. When $\alpha_1 = \dots = \alpha_K  = 1$, the Dirichlet distribution is just a uniform distribution on the set of all probability distributions on $[K]$, i.e., on $\{(p_1,\dots,p_K)|\sum_{i=1}^{K}p_i =1,\;\forall_{i \in [K]} p_i \ge 0\}$. It is known that  $\E[P_i] = \frac{\alpha_i}{\sum_{i\in [K]}\alpha_i}$. 
Thus higher the value of $\alpha_i$, the higher is the expected value of $P_i$. For the special case when $K=2$, the Dirichlet distribution is known as beta distribution denoted as $Beta(\alpha_1,\alpha_2)$. 

\textbf{Independence property of Dirichlet distribution:} Our proof will crucially rely on the independence property of Dirichlet distribution stated below. 
\begin{proposition}
\label{lem:independir}
Fix $A \subseteq [K]$, $B \subset A$ and $d \in (0,1)$. Suppose $(P_1,\dots,P_K) \sim Dir(\alpha_1,\dots,\alpha_K)$.  Then  conditioned on $\sum_{i \in A} P_i = d$, we have $\frac{\sum_{i \in B}P_i}{d} \sim Beta(\sum_{i \in B} \alpha_i,\sum_{i \in A\setminus B}\alpha_i)$.
\end{proposition}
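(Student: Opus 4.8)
The plan is to reduce the Proposition to two classical facts about gamma random variables, rather than manipulating the Dirichlet density directly. First I would recall the gamma representation of the Dirichlet distribution: if $G_1,\dots,G_K$ are independent with $G_i$ distributed as $\mathrm{Gamma}(\alpha_i,1)$ and $T=\sum_{i=1}^K G_i$, then $(G_1/T,\dots,G_K/T)\sim Dir(\alpha_1,\dots,\alpha_K)$ and, moreover, this normalized vector is independent of $T$ (a standard change of variables from the product of the gamma densities; see e.g.~\cite{johnson1995continuous}). Hence it suffices to prove the claim with $(G_1/T,\dots,G_K/T)$ in place of $(P_1,\dots,P_K)$.

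Next I would aggregate: set $V=\sum_{i\in B}G_i$, $W=\sum_{i\in A\setminus B}G_i$, $Z=\sum_{i\notin A}G_i$. By additivity of the gamma distribution over independent summands, $V,W,Z$ are independent with $V\sim\mathrm{Gamma}(\sum_{i\in B}\alpha_i,1)$, $W\sim\mathrm{Gamma}(\sum_{i\in A\setminus B}\alpha_i,1)$, $Z\sim\mathrm{Gamma}(\sum_{i\notin A}\alpha_i,1)$. Writing $U=V+W=\sum_{i\in A}G_i$ and $T=U+Z$, on the event $\{\sum_{i\in A}P_i=d\}=\{U/T=d\}$ we have $\sum_{i\in B}P_i=V/T$ and $\sum_{i\in A}P_i=U/T=d$, so $\frac{\sum_{i\in B}P_i}{d}=\frac{V/T}{U/T}=\frac{V}{V+W}$. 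The Proposition is therefore equivalent to: conditioned on $U/T=d$, the ratio $V/(V+W)$ has law $Beta(\sum_{i\in B}\alpha_i,\sum_{i\in A\setminus B}\alpha_i)$.

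To close, I would invoke the beta–gamma relation: for independent $V\sim\mathrm{Gamma}(a,1)$ and $W\sim\mathrm{Gamma}(b,1)$, the ratio $V/(V+W)\sim Beta(a,b)$ and is independent of $V+W$. Applying it with $a=\sum_{i\in B}\alpha_i$, $b=\sum_{i\in A\setminus B}\alpha_i$ gives $V/(V+W)\sim Beta(a,b)$ with $V/(V+W)\perp U$. Since $(V,W)\perp Z$, this upgrades to $V/(V+W)\perp (U,Z)$, hence $V/(V+W)\perp U/(U+Z)=U/T$. Thus the conditional law of $V/(V+W)$ given $U/T=d$ is its unconditional law $Beta(a,b)$ for every $d\in(0,1)$, which is the claim.

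The main obstacle is not the probabilistic reasoning but making precise what "conditioned on $\sum_{i\in A}P_i=d$" means, since this is a measure-zero event: one must phrase it as the regular conditional distribution (disintegration) of $(P_1,\dots,P_K)$ given the statistic $\sum_{i\in A}P_i$, and use the independence of $V/(V+W)$ from $U/T$ established above to conclude this conditional law is constant in $d$ and hence unambiguous. If one prefers to avoid the gamma representation altogether, the alternative is a single explicit change of variables on the simplex: set $p_i=d\,r\,u_i$ for $i\in B$, $p_i=d(1-r)v_i$ for $i\in A\setminus B$, $p_i=(1-d)w_i$ for $i\notin A$ with $(u_i),(v_i),(w_i)$ ranging over the respective sub-simplices, compute the Jacobian (a product of powers of $d$, $1-d$, $r$, $1-r$), and observe the joint density factorizes with the $r$-factor proportional to $r^{\sum_{i\in B}\alpha_i-1}(1-r)^{\sum_{i\in A\setminus B}\alpha_i-1}$; there the only delicate step is the bookkeeping of the exponents contributed by the Jacobian.
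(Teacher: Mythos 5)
Your argument is correct. The paper itself does not prove this proposition: it is stated as a known aggregation/neutrality property of the Dirichlet distribution (with the beta special case used later via its explicit density), so there is no in-paper proof to compare against step by step. What you give is the standard derivation. The two load-bearing facts — that $(G_1/T,\dots,G_K/T)\sim Dir(\alpha_1,\dots,\alpha_K)$ is independent of $T$ for independent $G_i\sim\mathrm{Gamma}(\alpha_i,1)$, and that $V/(V+W)\sim Beta(a,b)$ is independent of $V+W$ — are both classical, and your upgrade from $R\perp U$ and $(R,U)\perp Z$ to mutual independence of $R,U,Z$ (hence $R\perp U/T$) is a valid three-line computation with test functions. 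You are also right to flag that "conditioned on $\sum_{i\in A}P_i=d$" must be read as a regular conditional distribution; since $R=V/(V+W)$ coincides with $\bigl(\sum_{i\in B}P_i\bigr)/d$ everywhere on the conditioning event and $R$ is independent of $U/T$, the disintegration is the unconditional $Beta$ law for (almost) every $d$, which is exactly what the paper uses. One cosmetic caveat: the statement allows $A=[K]$, in which case $Z$ is an empty sum; your argument still goes through (with $T=U$ and the conditioning event having positive or trivial content only when $d=1$, which the hypothesis $d\in(0,1)$ excludes), but it is worth noting that the interesting case is $A\subsetneq[K]$, which is the one the paper actually invokes. The alternative simplex change-of-variables you sketch is equally valid and is closer in spirit to how the paper manipulates Dirichlet densities elsewhere, but the gamma route is cleaner.
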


\begin{remark}
If $A \cap B \neq \emptyset$ and neither $A \subset B$ nor $B \subset A$, then there is no explicit distribution of $\frac{\sum_{i \in B}P_i}{d}$ given $\sum_{i \in A} P_i = d$. This is why, to prove our lower bound, we assume that the sets queried by the algorithm form a laminar family.
\end{remark}




\paragraph*{Yao's Principle. }Yao's principle~\cite{yao1977probabilistic} is a standard tool to show lower bounds for randomized algorithms. It relates two kinds of randomness -- the randomness over inputs and the randomness used by the randomized algorithm. 

Consider a problem over inputs $\mathcal{X}$ (the set $\mathcal{X}$ can be infinite). Let $\gamma$ be a probability distribution over $\mathcal{X}$ and $X\in \mathcal{X}$ be an input chosen as per $\gamma$.  Let $\mathcal{T}$ be the set of all deterministic algorithms with complexity $t$ (for us, it is query complexity). For $x \in \mathcal{X}$ and $T \in \mathcal{T}$, let the cost $c(x,T)$ take the value $1$ if $T$ gives the wrong answer on input $x$ and $0$ otherwise. Therefore, $\Pr[T \thinspace \text{gives wrong answer on} \thinspace x] = \E[c(x,T)]$. 

Note that a randomized algorithm $R$ is just a probability distribution over the set of deterministic algorithms $\mathcal{T}$. Let $r$ be any probability distribution over $\mathcal{T}$ and $R$ be a random algorithm chosen as per $r$.

From Yao's principle, we have
\begin{align*}
& \max_{x \in \mathcal{X}} \Pr[R \thinspace \text{gives wrong answer on} \thinspace  x] = \max_{x \in \mathcal{X}}\E[c(x,R)] \\
& \ge \min_{T \in \mathcal{T}}\E[c(X,T)] = \min_{T \in \mathcal{T}} \Pr[ T \thinspace \text{gives wrong answer on} \thinspace  X].
\end{align*}

\paragraph*{Proof of~\autoref{lem:L1lowerboundlaminar}. }From Yao's principle, it suffices to consider a distribution $\gamma$ over a set of inputs $\mathcal{X}$ (note, in our case, an input is a distribution) and prove the statement of~\autoref{lem:L1lowerboundlaminar} for any deterministic algorithm with the distribution $D$ being sampled from $\gamma$. Now we define a set of distributions $\mathcal{X}$ on $[n]$ and a distribution $\gamma$ over it.

For any $x \in [\log n]$, $\mathcal{X}_x$ is a set of all probability distributions on $[n]$ with the support $[2^x]$, i.e., 
\[
\mathcal{X}_x := \{(p_1,\dots,p_n)|\sum_{i =1}^{2^x} p_i = 1, \text{ and } \forall_{i \in [n]}, p_i \ge 0, \text{ and } \forall_{i \in [2^x]}, p_i > 0\}.
\]
Let $\mathcal{X} := \cup_{x \in [\log n]}\mathcal{X}_x$. Now we define the distribution $\gamma$. 

Let $\alpha_j = \frac{1}{2^j}$ for all $j \in [n]$. The distribution $\gamma$ first picks $X \in [\log n]$ uniformly at random. Then it samples  $(P_1,\dots,P_{2^X}) \sim Dir(\alpha_1,\dots,\alpha_{2^X})$ and returns the distribution $(P_1,\dots,P_{2^X},0,\dots,0)$. We use $D_X$ to denote the distribution $(P_1,\dots,P_{2^X},0,\dots,0)$. 


  Let $T$ be any deterministic algorithm that estimates the support size (up to $4/3$-factor) of the distribution $D_X$ sampled from $\gamma$ with probability at least $2/3$ (where the probability is over the random choice of input distributions). Let us denote the output of $T$ by $N$. 
   Let $\hat{X} \in [\log n]$ be such that $2^{\hat{X}} \in [\frac{3}{4} N, \frac{4}{3} N]$. Since $|\supp(D_{x+1})| = 2 |\supp(D_x)|$, for every $x \in [\log n]$) and $N$ is within a $4/3$-multiplicative factor of the support size, $\hat{X} = X$. Therefore, 
   \begin{equation}
       \label{eq:estimatehigh}
       \Pr[X = \hat{X}] \ge \Pr\Big[\frac{3}{4} |\supp(D_X)| \le  N \le \frac{4}{3} |\supp (D_X)|\Big] \ge 2/3.
   \end{equation}
   For simplicity, from now on, we consider $\hat{X}$ as the output of the algorithm.
  
  
  


Note that both $X$ and $\hat{X}$ are random variables taking values in $[\log n]$. By Fano's inequality (~\autoref{thm:fano}), we have 
\begin{equation}
       \label{eq:applying-fano}
       \Pr[X = \hat{X}] \le \frac{I(X;\hat{X})+\log 2}{\log \log n}.
   \end{equation}

Let $t$ be the number of the queries made by $T$ (as mentioned in statement of~\autoref{lem:L1lowerboundlaminar}, $T$ has access to {\SEval} oracle and queries laminar family of sets). We will show $I(X;\hat{X}) = O(t)$. Hence, for $\Pr[\hat{X} = X]$ to be at least $2/3$, the number of queries $t$ must be $\Omega(\log \log n)$.



 Let the $i$th query ($1 \le i \le t$) of $T$ be denoted by $A_i$. Note that the set $A_1$ is fixed as $T$ is deterministic (whereas $A_2,\dots,A_t$ are random variables). For any set $A \subseteq [n]$,  $D_X(A) = \sum_{i \in A} D_X(i)$ is the total probability mass of set $A$ in the (randomly chosen distribution) $D_X$.
 Let $Z_r$ ($r \in [t]$) be the random variable equal to  $D_X(A_r)$.  
 At any step, the algorithm determines the set $A_i$ based on the previous outcomes $Z_1,\dots,Z_{i-1}$.
 \begin{claim}
 \label{clm:boundonIstep}
 If for all $i \in [t]$, $I(X;Z_i|Z_1,\dots,Z_{i-1}) = O(1)$, then $I(X;\hat{X}) = O(t)$.
 \end{claim}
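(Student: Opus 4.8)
The plan is to combine two standard facts recalled in the preliminaries: the chain rule for mutual information and the data processing inequality. The key structural observation is that the entire interaction of $T$ with the {\SEval} oracle is summarized by the vector $Z^t = (Z_1,\dots,Z_t)$ of query outcomes: since $T$ is deterministic, the first query $A_1$ is fixed, each subsequent query $A_i$ is a function of $Z^{i-1}=(Z_1,\dots,Z_{i-1})$, and the final output $\hat X$ is a function of $Z^t$ alone. In particular $\hat X$ is conditionally independent of $X$ given $Z^t$, so the data processing inequality yields $I(X;\hat X) \le I(X;Z^t)$.

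Next I would expand $I(X;Z^t)$ by iterating the chain rule $I(X;(Y,Z)) = I(X;Y) + I(X;Z\mid Y)$, peeling off one coordinate at a time, to obtain
\[
I(X;Z^t) \;=\; I(X;Z_1) + \sum_{i=2}^{t} I\bigl(X;Z_i \mid Z_1,\dots,Z_{i-1}\bigr) \;=\; \sum_{i=1}^{t} I\bigl(X;Z_i \mid Z^{i-1}\bigr).
\]
By the hypothesis of the claim, each summand $I(X;Z_i\mid Z^{i-1})$ is $O(1)$, hence $I(X;Z^t) = O(t)$, and combining with the data processing bound gives $I(X;\hat X) \le I(X;Z^t) = O(t)$, which is exactly what is asserted.

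The only point requiring a word of care — and essentially the sole obstacle — is that $Z_i = D_X(A_i)$ is a continuous random variable, whereas Fano's inequality and the versions of the chain rule and data processing inequality typically quoted are phrased for discrete variables. I would handle this by noting that $X$ (the chosen index in $[\log n]$) is still discrete, and both the chain rule identity and the data processing inequality hold verbatim for mutual information between a discrete variable and arbitrary (possibly continuous) variables, using the density-based definition of conditional mutual information set up in the preliminaries; all the mutual informations involved are finite, so no degenerate cancellation occurs. Everything else in the argument is mechanical bookkeeping.
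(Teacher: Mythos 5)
Your proof is correct and matches the paper's argument exactly: the data processing inequality gives $I(X;\hat X)\le I(X;Z_1,\dots,Z_t)$, and the chain rule decomposes the latter into the sum of the conditional mutual informations, each of which is $O(1)$ by hypothesis. The extra remark about handling continuous $Z_i$ with a discrete $X$ is a sensible (if unstated in the paper) point of care, but does not change the route.
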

 \begin{proof}
 The output $\hat{X}$ is  determined by the values of $Z_1,\dots,Z_t$. Thus, 
\begin{align*}
    I(X;\hat{X}) &\le I(X;Z_1,\dots,Z_t)&&\text{(by the data processing inequality)}\\
    &= \sum_{i \in [t]}I(X;Z_i|Z_1,\dots,Z_{i-1})&&\text{(by the chain rule of mutual information)}.
\end{align*}
 Hence, if for all $i \in [t]$, $I(X;Z_i|Z_1,\dots,Z_{i-1}) = O(1)$, then $I(X;\hat{X}) = O(t)$. 
 \end{proof}
 
 Next, we claim the following.
 \begin{lemma}
 \label{lem:boundonIstep}
 For all $i \in [t]$, $I(X;Z_i|Z_1,\dots,Z_{i-1}) = O(1)$.
 \end{lemma}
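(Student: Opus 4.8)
\emph{Proof plan.} The plan is to fix $i\in[t]$, condition on an arbitrary realization $z^{i-1}=(z_1,\dots,z_{i-1})$ of $Z^{i-1}:=(Z_1,\dots,Z_{i-1})$, and bound $I(X;Z_i\mid Z^{i-1}=z^{i-1})$ by $O(1)$ uniformly in $z^{i-1}$; averaging over $z^{i-1}$ then yields the lemma. Since $T$ is deterministic, conditioning on $z^{i-1}$ fixes the $i$-th query set $A_i$. Because $\{A_1,\dots,A_{i-1}\}$ is laminar, it partitions $[n]$ into atoms, and in a laminar family the mass of every atom is an explicit signed combination of $z_1,\dots,z_{i-1}$ and $1$; hence conditioning on $z^{i-1}$ reveals all atom masses exactly. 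I would then classify $[\log n]$ (equivalently, the posterior support of $X$) into four groups $L^i_0,L^i_1,L^i_2,L^i_3$ according to the behaviour of $D_x(A_i)=D_x(A_i\cap[2^x])$ given $z^{i-1}$: forced to $0$; forced to $1$; forced to some value in $(0,1)$ (when $A_i\cap[2^x]$ is a union of whole atoms of the laminar family restricted to the prefix $[2^x]$); or genuinely random (when $A_i$ strictly cuts a single atom of that restricted family).

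Let $W\in\{0,1,2,3\}$ record which regime occurs. By the chain rule,
\[
I(X;Z_i\mid z^{i-1}) \le H(W\mid z^{i-1}) + I(X;Z_i\mid W,z^{i-1}) \le O(1) + I(X;Z_i\mid W=3,z^{i-1}),
\]
using $H(W)\le\log 4$, the fact that on $W\in\{0,1\}$ the value $Z_i$ is constant, and the combinatorial fact that on $W=2$ only a bounded number of distinct deterministic values of $Z_i$ can appear (this is where laminarity together with the nestedness of the prefixes $[2^x]$ enters). Writing $Q_x$ for the law of $Z_i$ given $X=x$, $z^{i-1}$, and $W=3$, convexity of $KL$ in its second argument gives $I(X;Z_i\mid W=3,z^{i-1})\le \max_{x,x'\in L^i_3} KL(Q_x||Q_{x'})$, so it remains to bound this maximum by $O(1)$.

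For the last bound I would invoke \autoref{lem:independir}: conditioning the Dirichlet vector $(P_1,\dots,P_{2^x})$ on the already-revealed masses decomposes it into independent rescaled Dirichlet vectors on the atoms, and within the single atom $\omega$ that $A_i$ cuts, $\sum_{j\in\omega\cap A_i}P_j/D_x(\omega)\sim Beta\big(\sum_{j\in\omega\cap A_i}\alpha_j,\ \sum_{j\in\omega\setminus A_i}\alpha_j\big)$. Hence $Z_i=c+d\cdot B$, where $c$ (the mass of whole atoms inside $A_i$) and $d=D_x(\omega)$ are determined by $z^{i-1}$ and $B$ is this Beta variable; after arguing that $c$ and $d$ coincide for the pair being compared (again via laminarity and nested supports), $KL(Q_x||Q_{x'})$ reduces to $KL\big(Beta(a_x,b_x)||Beta(a_{x'},b_{x'})\big)$ with $a_\bullet,b_\bullet$ equal to partial sums of the $\alpha_j=2^{-j}$. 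Feeding these into the closed-form expression for the KL divergence between two Beta distributions (see~\cite{johnson1995continuous}) and using that the $\alpha_j$ decay geometrically — so all relevant partial sums and their ratios are bounded — yields $O(1)$.

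The hard part will be the combinatorial bookkeeping behind the partition: showing that the $W=2$ regime contributes only $O(1)$ distinct outcomes, and that within the $W=3$ regime the shift $c$ and scale $d$ are shared, so the Beta-to-Beta comparison is legitimate; both rely essentially on the queried sets being laminar and on the supports being nested prefixes $[2^1]\subset[2^2]\subset\cdots$. The Beta-KL estimate itself, by contrast, is a routine calculation once the parameters are identified.
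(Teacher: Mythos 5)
Your overall route is the same as the paper's: fix $z^{i-1}$, split $[\log n]$ into a constant number of groups according to how $Z_i$ behaves, pay $\log 4$ for the group identity (this is exactly the covering bound of \autoref{lem:mutual-inf-covering}), reduce the contribution of the "genuinely random" group to $\max_{x,x'} KL(Q_x\|Q_{x'})$, use the Dirichlet independence property (\autoref{lem:independir}) together with laminarity to identify each $Q_x$ as a shifted/scaled Beta with shift and scale determined by $z^{i-1}$ alone (the paper does this via the parent/child analysis and \autoref{clm:subsetrelation}), and finish with the closed-form Beta--Beta KL estimate and the geometric decay of the $\alpha_j$ (\autoref{lem:boundedkl}). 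All of that matches.

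The one genuine gap is in your $W=2$ regime. You define it as the set of $x$ for which $Z_i$ is forced to some value in $(0,1)$ because $A_i\cap[2^x]$ is a union of whole atoms, and you then need "only a bounded number of distinct deterministic values" to appear across this group — which you flag as the hard part but do not prove. As stated, this is not enough: if the deterministic value were a nonconstant function of $x$ on this group, then $I(X;Z_i\mid W=2,z^{i-1})$ could be as large as the log of the number of distinct values, i.e.\ up to $\Theta(\log\log n)$, which would destroy the $O(1)$ bound; a point-mass reference distribution in the covering lemma only works if the group shares a \emph{single} value. The paper closes exactly this hole by defining the group differently: it takes $x_\ell$ to be the \emph{largest} $x$ for which the restricted characteristic vector $\chi^{x}(A_i)$ lies in the real span of $\chi^{x}(A_1),\dots,\chi^{x}(A_{i-1})$, and observes that the same linear relation (with the same coefficients $\beta_j$) then restricts to every smaller prefix, forcing $z_i=\sum_j\beta_j z_j$ — one common value — for all $x\le x_\ell$ in the group; everything above $x_\ell$ is pushed into $L^i_3$. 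To complete your proof you would either need to adopt this definition, or supply the combinatorial argument that your atom-based $W=2$ group in fact shares a single value (which is plausible given the nestedness of the prefixes, but is precisely the bookkeeping you have deferred).
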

  Before proving the above, let us argue that the above implies~\autoref{lem:L1lowerboundlaminar}. It follows from~\autoref{eq:applying-fano} along with~\autoref{clm:boundonIstep}, that
 \[
 \Pr[\hat{X} =X] \le O(\frac{t}{\log \log n}).
 \]
 Then by~\autoref{eq:estimatehigh}, $t \ge \Omega(\log \log n)$. This completes the proof of~\autoref{lem:L1lowerboundlaminar}.
 
 So it only remains to prove~\autoref{lem:boundonIstep}.


\paragraph*{Proof of~\autoref{lem:boundonIstep}. }We use $Z^{i-1}$ to denote the tuple $(Z_1,\cdots,Z_{i-1})$. For any random variable $Y$, we use $Q_Y$  to denote the corresponding probability distribution. When clear from the context, we will drop $Y$ from the subscript. From the definition of conditional mutual information, we have 
\begin{align}
\label{eq:I-Jbound}
I(X;Z_i|Z_1,\dots,Z_{i-1}) & =  \int_{z^{i-1}} Q_{Z^{i-1}}(z^{i-1}) I(X;Z_i|Z^{i-1} = z^{i-1}) \,dz^{i-1}
\end{align}


The following claim is now immediate from~\autoref{eq:I-Jbound}.
\begin{claim}
\label{clm:Igivenz}
If for every $i \in [t]$ and $z^{i-1}$, $I(X;Z_i|Z^{i-1} = z^{i-1}) =  O(1)$, then $I(X;Z_i|Z_1,\dots,Z_{i-1}) = O(1)$.
\end{claim}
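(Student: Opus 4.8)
The plan is to read the claim off directly from the integral representation in \autoref{eq:I-Jbound}, since it is a purely measure-theoretic observation. By hypothesis there is an absolute constant $C$ such that $I(X;Z_i\mid Z^{i-1}=z^{i-1})\le C$ for every $i\in[t]$ and every value $z^{i-1}$ in the support of $Z^{i-1}$. Substituting this pointwise bound into \autoref{eq:I-Jbound} and using that $Q_{Z^{i-1}}$ is a probability distribution, so that $\int_{z^{i-1}}Q_{Z^{i-1}}(z^{i-1})\,dz^{i-1}=1$, we obtain
\[
I(X;Z_i\mid Z_1,\dots,Z_{i-1})=\int_{z^{i-1}}Q_{Z^{i-1}}(z^{i-1})\,I(X;Z_i\mid Z^{i-1}=z^{i-1})\,dz^{i-1}\le C\int_{z^{i-1}}Q_{Z^{i-1}}(z^{i-1})\,dz^{i-1}=C,
\]
and hence $I(X;Z_i\mid Z_1,\dots,Z_{i-1})=O(1)$, which is exactly the conclusion. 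So the whole argument is: "the average of a quantity that is everywhere at most $C$ is at most $C$."

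The only point deserving a word of care is that $Z^{i-1}=(Z_1,\dots,Z_{i-1})$ need not be purely continuous: for many indices $x$ the value of $Z_i$ is already pinned down by the prefix $z^{i-1}$ (an atom), while for the remaining indices it is genuinely random. Thus the integral sign in \autoref{eq:I-Jbound} should be interpreted as an expectation against the (mixed) law of $Z^{i-1}$, possibly carrying atoms; since the only step used is monotonicity of expectation against a bounded integrand, it is insensitive to whether that law has a density, atoms, or both. I expect no genuine obstacle for this claim itself — all the difficulty in the proof of \autoref{lem:boundonIstep} is concentrated in establishing the uniform pointwise bound $I(X;Z_i\mid Z^{i-1}=z^{i-1})=O(1)$, which is the subject of the remainder of the proof and relies on the laminar structure of the queried sets together with the independence property of the Dirichlet distribution (\autoref{lem:independir}).
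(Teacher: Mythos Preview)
Your proposal is correct and matches the paper's own treatment: the paper simply declares the claim ``immediate from~\autoref{eq:I-Jbound}'', and your argument spells out exactly that one-line averaging step. Your added remark that the expectation in~\autoref{eq:I-Jbound} should be read against a possibly mixed law (since $Z^{i-1}$ can have atoms) is a valid clarification the paper glosses over, but it does not alter the approach.
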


\begin{lemma}[\cite{scarlett2019introductory}]
    \label{lem:mutual-inf-covering}
    Consider two random variables $X$ and $Z$, where $X$ is discrete. Suppose there exists $N$ distributions $R_1(z),\dots,R_N(z)$ such that for all $x \in X$ and for some $m > 0$, it holds that
    \begin{align*}
        \min_{j \in [N]} KL(Q_{Z|X}(\cdot|x)||R_j) \le m.
    \end{align*}
    Then we have 
    \begin{align*}
        I(X;Z) \le \log N +m.
    \end{align*}
\end{lemma}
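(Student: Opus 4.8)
The plan is to prove this via the standard variational (``golden formula'') characterization of mutual information together with a mixture trick. Recall first that for \emph{any} distribution $R$ on the range of $Z$, writing $Q_Z$ for the marginal law of $Z$ and $q(x)$, $q(z\mid x)$, $q_Z(z)=\sum_x q(x) q(z\mid x)$ for the relevant densities, one has the identity
\[
\mathbb{E}_{x\sim Q_X}\big[KL(Q_{Z|X}(\cdot\mid x)\,\|\,R)\big] - I(X;Z) \;=\; KL(Q_Z\,\|\,R)\;\ge\;0 ,
\]
obtained by expanding both divergences, cancelling the common $\log q(z\mid x)$ term, and then summing over $x$ so that $\sum_x q(x)q(z\mid x)=q_Z(z)$. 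Hence $I(X;Z)\le \mathbb{E}_{x\sim Q_X}\big[KL(Q_{Z|X}(\cdot\mid x)\,\|\,R)\big]$ for every choice of $R$, and I am free to pick $R$ conveniently.

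Next I would instantiate $R$ as the uniform mixture $R:=\frac{1}{N}\sum_{j\in[N]}R_j$. Since $R(z)\ge \frac{1}{N}R_j(z)$ pointwise for each $j\in[N]$, for every fixed $x$ we get
\[
KL(Q_{Z|X}(\cdot\mid x)\,\|\,R)=\int q(z\mid x)\log\frac{q(z\mid x)}{R(z)}\,dz \;\le\; \int q(z\mid x)\log\frac{q(z\mid x)}{\tfrac{1}{N}R_j(z)}\,dz \;=\; \log N + KL(Q_{Z|X}(\cdot\mid x)\,\|\,R_j).
\]
Choosing $j=j(x)$ to be the minimizer supplied by the hypothesis yields $KL(Q_{Z|X}(\cdot\mid x)\,\|\,R)\le \log N + m$ for every $x\in X$.

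Finally I would take the expectation over $x\sim Q_X$ (a finite sum, since $X$ is discrete) and combine with the variational bound from the first step:
\[
I(X;Z)\;\le\;\mathbb{E}_{x\sim Q_X}\big[KL(Q_{Z|X}(\cdot\mid x)\,\|\,R)\big]\;\le\;\log N + m ,
\]
which is exactly the claimed inequality. I do not anticipate any substantive obstacle here; the only points requiring a little care are measure-theoretic: one checks that whenever $m<\infty$ the minimizing $R_{j(x)}$ dominates $Q_{Z|X}(\cdot\mid x)$ so the manipulations above are licit (and when some divergence is infinite the bound is vacuous), and that the golden-formula identity is valid in the continuous-$Z$ case, both of which are routine. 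The real content of the lemma is precisely the mixture/covering argument sketched above.
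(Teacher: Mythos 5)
Your proof is correct. The paper does not prove this lemma at all — it imports it verbatim from the cited reference \cite{scarlett2019introductory} — and your argument (the variational identity $I(X;Z)=\min_R \mathbb{E}_{x}\bigl[KL(Q_{Z|X}(\cdot\mid x)\,\|\,R)\bigr]$ combined with the uniform mixture $R=\frac{1}{N}\sum_j R_j$ and the pointwise bound $R(z)\ge \frac{1}{N}R_{j(x)}(z)$) is exactly the standard proof given there, so there is nothing to reconcile.
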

The above lemma is informally called mutual information bound via covering. Suppose  we can partition $X$ into $N$ sets $L_1,\dots,L_N$ such that for all $x \in L_j$, the distributions $\{Q_{Z|X =x}(\cdot|x)\}_{x \in L_j}$ are close (note $Q_{Z|X =x_1}(\cdot|x_1)$ and $Q_{Z|X =x_2}(\cdot|x_2)$ can be different if $x_1,x_2$ are not in same $L_j)$ then maximum information that $Z$ can give about $X$ is $\log N$. The above lemma formalizes this intuition.

Consider any $i \in [t]$. We now fix $z^{i-1}=(z_1,\cdots,z_{i-1})$. We will now partition $X$ into $L^i_{-1},L^i_0,L^i_1,L^i_2, L^i_3$ such that the distributions $\{Q_{Z|X =x}(\cdot|x)\}_{x \in L^i_j}$ will be close. 

Let $L^i$ be the set of all $x \in [\log n]$ such that there exists a distribution $D \in \mathcal{X}_x$ consistent with the values $z^{i-1}$, i.e., $\sum_{j \in A_k}D(j) = z_k$ for all $k \in [i-1]$. So, for any $x \not \in L^i$, we have $Q_{X|Z^{i-1} = z^{i-1}}(x|z^{i-1}) = 0$. We have $L^i_{-1} := [\log n] \setminus L^i$.
 
We will show a partition of the set $L^i$ into the subsets $L^i_0,L^i_1,L^i_2$ and $L^i_3$ such that
 \begin{itemize}
     \item For each $x \in L^i_0$, the value of $z_i$ will always be equal to $0$ (to apply Lemma~\ref{lem:mutual-inf-covering}, we will set $R^i_0$ as a distribution that takes value $0$ with probability $1$). 
     \item For each $x \in L^i_1$, the value of $z_i$ will always be equal to $1$ (set $R^i_1$ as a distribution that takes value $1$ with probability $1$).
     \item For each $x \in L^i_2$, the value of $z_i$ will (be $\ne 0,1$ and) always be completely determined by the value of $z^{i-1}$ (i.e., a fixed number depending on $z^{i-1}$) (set $R^i_2$ as a distribution that takes this fixed value with probability $1$).
     \item $ L^i_3 := L^i \setminus (L^i_0 \cup L^i_1 \cup L^i_2)$ (set $R^i_3$ as the distribution $Q_{Z|X=x}(\cdot|x)$ for any $x \in L^i_3$). 
 \end{itemize}

 This partition directly implies the following lemma.
 \begin{lemma}
 \label{upperboundonI}
 For every $i \in [t]$, $z^{i-1}$, and $x \in [\log n]$, 
 \[
 I(X;Z_i|Z^{i-1}=z^{i-1})  \le \log 4 +  \max_{x,x' \in L^i_3} KL(Q_{Z_i|X=x,Z^{i-1}= z^{i-1}} || Q_{Z_i|X=x', Z^{i-1}=z^{i-1}}).
 \]
\end{lemma}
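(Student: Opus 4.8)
# Proof Proposal for Lemma~\ref{upperboundonI}

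\textbf{Overview.} The plan is to apply Lemma~\ref{lem:mutual-inf-covering} (mutual information bound via covering) with the partition $L^i_{-1}, L^i_0, L^i_1, L^i_2, L^i_3$ already set up in the text, using the reference distributions $R^i_0, R^i_1, R^i_2, R^i_3$ described in the bulleted list. The point is that on the first four pieces the conditional distribution $Q_{Z_i \mid X = x, Z^{i-1} = z^{i-1}}$ is a point mass that exactly matches its reference distribution (so its KL divergence to that reference is $0$), while on $L^i_3$ every such conditional distribution is within $\max_{x,x' \in L^i_3} KL(Q_{Z_i|X=x,Z^{i-1}=z^{i-1}} \| Q_{Z_i|X=x',Z^{i-1}=z^{i-1}})$ of the single reference $R^i_3$. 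Feeding $N = 4$ (the parts $L^i_{-1}$ contributes nothing, since $Q_{X \mid Z^{i-1}=z^{i-1}}$ assigns it zero mass, so effectively only four parts carry probability) and $m$ equal to that maximum KL divergence into Lemma~\ref{lem:mutual-inf-covering} yields exactly the claimed bound $\log 4 + \max_{x,x'\in L^i_3} KL(\cdots)$.

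\textbf{Steps.} First I would fix $i \in [t]$ and $z^{i-1} = (z_1,\dots,z_{i-1})$ and restrict attention to the conditional space, i.e., work with the random variables $X$ and $Z_i$ conditioned on $Z^{i-1} = z^{i-1}$. Since $I(X;Z_i \mid Z^{i-1}=z^{i-1})$ is the mutual information of this conditioned pair, it suffices to bound that quantity. Second, I would observe that $Q_{X \mid Z^{i-1}=z^{i-1}}$ is supported on $L^i$, so $L^i_{-1}$ is irrelevant and I only need a four-set cover $L^i_0, L^i_1, L^i_2, L^i_3$ of the relevant support. Third, I would verify the covering hypothesis of Lemma~\ref{lem:mutual-inf-covering}: for $x \in L^i_0$, by construction $Z_i = 0$ deterministically, so $Q_{Z_i \mid X=x, Z^{i-1}=z^{i-1}} = R^i_0$ and $KL(Q_{Z_i|X=x,Z^{i-1}=z^{i-1}} \| R^i_0) = 0$; identically for $L^i_1$ with $R^i_1$ and for $L^i_2$ with $R^i_2$ (here using that the value is a fixed function of $z^{i-1}$ only, so $R^i_2$ is a well-defined single distribution). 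For $x \in L^i_3$, taking $R^i_3 := Q_{Z_i \mid X = x_0, Z^{i-1}=z^{i-1}}$ for an arbitrary fixed $x_0 \in L^i_3$, we get $KL(Q_{Z_i|X=x,Z^{i-1}=z^{i-1}} \| R^i_3) \le \max_{x,x' \in L^i_3} KL(Q_{Z_i|X=x,Z^{i-1}=z^{i-1}} \| Q_{Z_i|X=x',Z^{i-1}=z^{i-1}}) =: m$. Hence $\min_{j} KL(Q_{Z_i|X=x,\dots} \| R^i_j) \le m$ for every relevant $x$. Fourth, I would apply Lemma~\ref{lem:mutual-inf-covering} with these $N = 4$ reference distributions to conclude $I(X;Z_i \mid Z^{i-1}=z^{i-1}) \le \log 4 + m$, which is the statement.

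\textbf{Subtlety / main obstacle.} The only genuinely non-routine point is justifying that the partition is well-defined and has the stated deterministic behaviour on $L^i_0, L^i_1, L^i_2$ — that is, that for every $x$ in these three classes the outcome $Z_i = D_X(A_i)$ is genuinely forced given the consistency constraints $\sum_{j\in A_k} D(j) = z_k$ for $k < i$ (combined with the structure of $\mathcal{X}_x$ and the laminar family). This relies on the earlier definition of the partition and the laminar structure of the queried sets, which ensures $A_i$ interacts with the previous sets only via unions/intersections whose masses are already pinned down — and in the remaining "free" case the residual mass genuinely varies, landing $x$ in $L^i_3$. Strictly speaking this definitional content precedes the lemma and the lemma follows from it "directly" (as the text says), so in the write-up I would simply invoke the partition's defining properties and Lemma~\ref{lem:mutual-inf-covering}; the substantive work (showing the conditional law on $L^i_3$ is a Beta distribution, and bounding the Beta–Beta KL by $O(1)$) is deferred to the subsequent lemmas and is not needed here.
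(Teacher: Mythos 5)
Your proposal is correct and follows essentially the same route as the paper: the paper's proof is exactly an application of Lemma~\ref{lem:mutual-inf-covering} with $N=4$ and the reference distributions $R^i_0,\dots,R^i_3$, and your verification of the covering hypothesis (zero KL on $L^i_0,L^i_1,L^i_2$, bounded by the pairwise maximum on $L^i_3$, and $L^i_{-1}$ carrying no posterior mass) is the intended content. No gaps.
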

\begin{proof}
    Conditioned on $Z^{i-1}=z^{i-1}$, we apply Lemma~\ref{lem:mutual-inf-covering} by substituting $Z_i$ for $Z$,  distributions  $R^i_0, R^i_1, R^i_2, R^i_3$ for  $R^1, R^2, R^3, R^4$ respectively and $N = 4$.
\end{proof}

 
 Let us now define the partition of $L^i$. Consider the $i$-th set $A_i$ queried (on the {\SEval} oracle) by the algorithm.
 
 \paragraph*{Partition $L^i_0$:} Let $L^i_0=\{x \in L^i: [2^x] \cap A_i = \emptyset\}$. So $z_i$ must be $0$ for all $x \in L^i_0$. 
 
 \paragraph*{Partition $L^i_1$:} Let $L^i_1 = \{x \in L^i: [2^x] \subseteq A_i\}$. So $z_i$ must be $1$ for all $x \in L^i_1$. 
 
\paragraph*{Partition $L^i_2$:} For any $1\le j \le i$, let $\chi^x(A_j) \in \{0,1\}^{2^x}$ be the characteristic vector of $A_j$ restricted to $[2^x]$. 
 
 Let $x_{\ell}$ be the largest $x \in L^i$ such that $ \chi^x(A_i) = \sum_{j\in [i-1]} \beta_j \cdot \chi^x(A_{j})$ for some reals $\beta_1,\dots,\beta_{i-1}$. Since $\chi^{x_{\ell}}(A_i) = \sum_{j\in [i-1]} \beta_j \cdot \chi^{x_{\ell}}(A_{j})$, we have that for all $x \in L^i$ such that $x \le x_{\ell}$, the following holds
 \[
\chi^{x}(A_i) = \sum_{j\in [i-1]} \beta_j \cdot \chi^{x}(A_{j}).
 \]
 Observe, $\chi^{x}(A_i) = \sum_{j\in [i-1]} \beta_j \cdot  \chi^{x}(A_{j})$ implies $z_i = \sum_{j \in [i-1]}\beta_j \cdot  z_j$ for all $x \in L^i$ such that $x \le x_{\ell}$. 
 
 Next, define $L^i_2 : = \{x\in L^i \mid x \le x_{\ell}\} \setminus (L^i_0 \cup L^i_1)$. 
 \paragraph*{Partition $L^i_3$:} $L^i_3:=  L^i\setminus(L^i_0 \cup L^i_1 \cup L^i_2)$.
 
 We will now give  expression for the distribution $Q_{Z_i|X=x, Z^{i-1}=z^{i-1}}$ for $x \in L^i_3$. We will provide the explicit expression for $KL(Q_{Z_i|X=x,Z^{i-1}= z^{i-1}} || Q_{Z_i|X=x', Z^{i-1}=z^{i-1}})$ later (\autoref{eq:KL-bound}).  Recall, ultimately, we want to show 
 
 \[
 \max_{x,x' \in L^i_3} KL(Q_{Z_i|X=x,Z^{i-1}= z^{i-1}} || Q_{Z_i|X=x', Z^{i-1}=z^{i-1}}) = O(1).
 \]
 



For any $x \in L^i_3$, let $A^x_j := A_j \cap [2^x]$ ($1 \le j \le i)$ be the restriction of $A_j$ to $[2^x]$. Recall, we assume $A_i$'s are laminar. Thus it is straightforward to make the following observation.
\begin{observation}
\label{obs:laminar}
 For any $u,v \in [t]$, if $A_u \subseteq A_v$ then $A^s_u \subseteq A^s_v$. 
\end{observation}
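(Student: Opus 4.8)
The final statement, \autoref{obs:laminar}, needs only a one-line argument: by definition $A^s_j = A_j \cap [2^s]$, so $A_u \subseteq A_v$ forces $A_u \cap [2^s] \subseteq A_v \cap [2^s]$, i.e.\ $A^s_u \subseteq A^s_v$ (laminarity plays no role in this inclusion itself). The reason to record it is that, together with laminarity of the $A_j$'s, it lets us transfer the full laminar tree structure of $\{A_j\}$ to its restriction $\{A^x_j\}$ on $[2^x]$, which is exactly what the remaining — and genuinely delicate — step of the proof of \autoref{lem:boundonIstep} rests on. So the plan is to use \autoref{obs:laminar} to finish that proof, i.e.\ to establish, for every fixed $i$ and $z^{i-1}$, that
\[
\max_{x,x'\in L^i_3} KL\big(Q_{Z_i\mid X=x,\,Z^{i-1}=z^{i-1}}\,\big\|\,Q_{Z_i\mid X=x',\,Z^{i-1}=z^{i-1}}\big)=O(1),
\]
which via \autoref{upperboundonI}, \autoref{clm:Igivenz} and \autoref{clm:boundonIstep} yields \autoref{lem:boundonIstep}, hence \autoref{lem:L1lowerboundlaminar}.

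Fix $x\in L^i_3$ and consider the laminar family $A^x_1,\dots,A^x_i$ on $[2^x]$. Let $A^x_p$ be the minimal set among $[2^x],A^x_1,\dots,A^x_{i-1}$ that strictly contains $A^x_i$; let $B_1,\dots,B_k$ be the maximal sets among $A^x_1,\dots,A^x_{i-1}$ strictly inside $A^x_i$, and $C_1,\dots,C_{k'}$ the maximal ones strictly inside $A^x_p$ but disjoint from $A^x_i$. Write $\mathrm{leftover}_i=A^x_i\setminus\bigcup_\ell B_\ell$ and $\mathrm{leftover}_p=A^x_p\setminus(A^x_i\cup\bigcup_\ell C_\ell)$. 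Since $x\in L^i_3$ means $x>x_\ell$, the vector $\chi^x(A_i)$ is not in the span of $\chi^x(A_1),\dots,\chi^x(A_{i-1})$; I would use this, together with laminarity and \autoref{obs:laminar}, to show (i) both leftover sets are nonempty (otherwise $\chi^x(A_i)$ equals $\sum_\ell\chi^x(B_\ell)$, respectively $\chi^x(A_p)-\sum_\ell\chi^x(C_\ell)$, hence lies in that span), and (ii) the indices of the parent, children and siblings — and therefore the quantities $d:=D_x(A^x_p)$, $s:=\sum_\ell D_x(B_\ell)$, $s':=\sum_\ell D_x(C_\ell)$ — depend only on $z^{i-1}$ and not on the particular $x\in L^i_3$. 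Granting this, the aggregation property and the independence property of the Dirichlet distribution (\autoref{lem:independir}) give that, conditioned on $X=x$ and $Z^{i-1}=z^{i-1}$,
\[
Z_i \;=\; s+(d-s-s')\cdot W,\qquad W\sim Beta\Big(\textstyle\sum_{j\in\mathrm{leftover}_i}\alpha_j,\ \sum_{j\in\mathrm{leftover}_p}\alpha_j\Big),
\]
so the only $x$-dependence sits in the two beta parameters.

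Because $d,s,s'$ are common to all $x\in L^i_3$ and KL divergence is invariant under a shared affine change of variable, the displayed maximum equals $\max_{x,x'}KL\big(Beta(a_x,b_x)\,\big\|\,Beta(a_{x'},b_{x'})\big)$ with $a_x=\sum_{j\in\mathrm{leftover}^x_i}\alpha_j$ and $b_x=\sum_{j\in\mathrm{leftover}^x_p}\alpha_j$. Here the choice $\alpha_j=1/2^j$ is what makes it work: one checks $\mathrm{leftover}^x_i=\big(A_i\setminus\bigcup_{j\,\mathrm{child}}A_j\big)\cap[2^x]$, so as $x$ grows over $L^i_3$ this set only gains larger indices, hence its least index $m_i$ is the same for every $x\in L^i_3$; consequently $2^{-m_i}\le a_x\le 2^{-m_i+1}$, and likewise all the $b_x$ agree up to a factor $2$. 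Substituting into the closed-form expression for the KL divergence between two beta distributions (see~\cite{johnson1995continuous}) — a logarithmic ratio of Beta functions plus digamma terms at $a_x,b_x,a_x+b_x$ — and using that the parameters are within constant factors, every term is bounded by an absolute constant; even as $a_x,b_x\to 0$ the singular pieces $-\ln a_x$ and $-1/a_x$ cancel against their primed counterparts up to $O(1)$. This yields $I(X;Z_i\mid Z^{i-1})=O(1)$, and feeding it through \autoref{clm:boundonIstep}, \autoref{eq:applying-fano} and \autoref{eq:estimatehigh} forces $t=\Omega(\log\log n)$, completing \autoref{lem:L1lowerboundlaminar} and thus \autoref{thm:L1lowerbound}.

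I expect the main obstacle to be point (ii) above: checking that, simultaneously for all $x\in L^i_3$, the parent, children and siblings of $A^x_i$ in the restricted laminar family are ``the same'', so that the conditional laws are affine images of beta laws sharing a single affine map — if the affine maps differed, the supports could fail to nest and the KL (indeed the mutual information) would blow up. This is exactly the reason $L^i_2$ was peeled off via the maximal index $x_\ell$, and it is also where \autoref{obs:laminar} and laminarity are genuinely used. Everything downstream — the Dirichlet independence identity, the affine invariance of KL, and the constant bound on the beta--beta KL — is then a routine computation.
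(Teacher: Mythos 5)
Your one-line argument is exactly right and matches the paper, which also treats the observation as immediate from the definition $A^x_j := A_j \cap [2^x]$ (restriction preserves inclusion, with no appeal to laminarity). The additional material sketching how the observation feeds into \autoref{lem:boundonIstep} goes beyond what the statement asks for, but it is consistent with the paper's subsequent argument.
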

However, it is possible that $A_u \subset A_v$, but $A^s_u = A^s_v$. Let $x_m $ be the maximum value of $x \in L^i_3$.

\begin{claim}
\label{clm:subsetrelation}
For any $1 \le u,v \le i$ and $x' \in L^i_3$, $A^{x_m}_u \subset A^{x_m}_v$ if and only if $A^{x'}_u \subset A^{x'}_v$ .
\end{claim}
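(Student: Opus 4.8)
The plan is to show, for every pair $(u,v)$ with $1\le u,v\le i$, that the event ``$A^x_u\subsetneq A^x_v$'' is governed by a condition $\Phi(u,v)$ that depends only on the already-fixed data $z^{i-1}=(z_1,\dots,z_{i-1})$ and on the sets $A_1,\dots,A_i$ themselves, and \emph{not} on the choice of $x\in L^i_3$. Once this is established, applying it to $x=x'$ and to $x=x_m$ gives the claim immediately. The basic tool is a dictionary between the numbers $z_j$ and the combinatorial type of $A_j$ with respect to $[2^x]$: for $x\in L^i$ there is a distribution $D$ with support exactly $[2^x]$ and $D(A_k)=z_k$ for all $k<i$, and such a $D$ is strictly positive on every element of $[2^x]$; hence, for $j<i$, one gets $A_j\cap[2^x]=\emptyset \iff z_j=0$, $[2^x]\subseteq A_j \iff z_j=1$, and more generally $D(A_j\cap[2^x])=z_j$ for every consistent $D$. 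Also, since $x\in L^i_3$ means $x\notin L^i_0\cup L^i_1$, the set $A_i\cap[2^x]$ is a nonempty proper subset of $[2^x]$. All these properties are the same for every $x\in L^i_3$.

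For the main case $u,v<i$, I would prove that $A^x_u\subsetneq A^x_v$ holds (for $x\in L^i$) if and only if either $z_u=0<z_v$, or else $A_u\subsetneq A_v$ as actual sets and $0<z_u<z_v$. Indeed, if $A^x_u=\emptyset$ then $z_u=0$, and $A^x_u\subsetneq A^x_v$ forces $A^x_v\neq\emptyset$, i.e.\ $z_v>0$; if $A^x_u\neq\emptyset$ then $A_u$ and $A_v$ intersect, so by laminarity one contains the other, the inclusion $A_v\subseteq A_u$ being excluded (with $A^x_u\subseteq A^x_v$ already, it would force $A^x_u=A^x_v$), and $z_u=z_v$ being excluded too (it would make $D(A_v\setminus A_u)=0$, hence $A_v\setminus A_u$ disjoint from $[2^x]$, hence $A^x_u=A^x_v$). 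The converse direction is immediate from the same remarks. Since the stated condition involves only $z_u,z_v$ and the $x$-free relation $A_u\subsetneq A_v$, it has the same truth value at $x'$ and at $x_m$.

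For the cases where $u=i$ or $v=i$ (the sub-case $u=v=i$ being trivial, as then $A^x_u=A^x_v$), the additional ingredient is the definition of $x_\ell$: every $x\in L^i_3$ satisfies $x\in L^i$ and $x>x_\ell$, so by maximality of $x_\ell$ the vector $\chi^x(A_i)$ is not a linear combination of $\chi^x(A_1),\dots,\chi^x(A_{i-1})$; in particular $A_i\cap[2^x]\neq A_j\cap[2^x]$ for every $j<i$. Together with laminarity and the fact that $A^x_i$ is a nonempty proper subset of $[2^x]$, this yields $x$-independent characterizations: for $u=i$, $v<i$, $A^x_i\subsetneq A^x_v \iff A_i\subsetneq A_v$; and for $u<i$, $v=i$, $A^x_u\subsetneq A^x_i \iff (z_u=0 \text{ or } A_u\subsetneq A_i)$. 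In checking each of these, the only possible deviation between the restricted-set relation and the asserted condition is a collapse $A^x_i=A^x_j$ for some $j<i$, which is exactly what $x>x_\ell$ forbids. As both right-hand sides are $x$-free, the claim follows in these cases too, completing the argument.

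The step I expect to be the main obstacle is the $u=i$ (or $v=i$) case: the $z$-value dictionary controls only the indices $<i$, since the outcomes $z^{i-1}$ carry no information about $A_i$, so to prevent $A_i\cap[2^x]$ from accidentally coinciding with some $A_j\cap[2^x]$ one must appeal to the linear-algebraic definition of $L^i_3$ through $x_\ell$ rather than to anything about the $z_j$'s. A secondary point to handle with care is the repeated reliance on a distribution consistent with $x$ being strictly positive on all of $[2^x]$, which is precisely what lets one pass between equalities/strict inclusions of the restricted sets $A^x_j$ and (in)equalities among the numbers $z_j$.
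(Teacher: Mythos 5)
Your proposal is correct and uses the same two ingredients as the paper's proof: laminarity plus the existence of a consistent distribution that is strictly positive on $[2^x]$ (so that $A^{x}_u=A^{x}_v$ would force $z_u=z_v$, contradicting strictness at the other index), and the definition of $L^i_3$ via $x_\ell$ to rule out $A^x_i=A^x_j$ for $j<i$. The only difference is presentational — you package the argument as an $x$-free characterization of the relation $A^x_u\subsetneq A^x_v$, whereas the paper directly contrasts $x'$ with $x_m$ and derives a contradiction — so this is essentially the paper's proof.
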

\begin{proof}
Consider any $x' \in \Bar{L}$. Clearly, if $A^{x'}_u \subset A^{x'}_v$, then $A^{x_m}_u \subset A^{x_m}_v$. So it only remains to show the converse.

Consider any $1 \le u,v < i$ and let $A^{x_m}_u \subset A^{x_m}_v$. Then if $A^{x'}_u$ is not a strict subset of $A^{x'}_v$. then we must have $A^{x'}_u = A^{x'}_v$. Since $x_m,x' \in L^i_3 \subseteq L^i$, by the definition, there must exist some distributions $D_{x'} \in \mathcal{X}_{x'}$ and $D_{x_m} \in \mathcal{X}_{x_m}$ consistent with the values $z_1,\dots,z_{i-1}$. Note, in $D_{X=x'}$, we have 
\[
z_u = \sum_{j \in A^{x'}_u} D_{X=x'}(j) =  \sum_{j \in A^{x'}_v} D_{X=x'}(j) = z_v.
\]
On the other hand, in $D_{X=x_m}$, we have
\[
z_u = \sum_{j \in A^{x_m}_u} D_{X=x_m}(j) <  \sum_{j \in A^{x_m}_v} D_{X=x_m}(j) = z_v.
\]
This contradicts the fact that $x_m,x' \in L^i$.
Thus $A^{x'}_u \subset A^{x'}_v$ when both $u,v < i$. 

It is easy to see that for any $x\in \{x' ,x_m\}$, $A^{x}_i \neq A^x_u$ for all $1 \le u < i$ because both $x',x^m \in L^i_3$. Hence, $A^{x_m}_u \subset A^{x_m}_v$ implies $A^{x'}_u \subset A^{x'}_v$ for the case when one of $u,v$ is equal to $i$.
\end{proof}

Let $1 \le u,v\le i$. We say $A^x_u$ is a child of $A^x_v$ (alternately, $A^x_v$ is a parent of $A^x_u$) if $A^x_u \subset A^x_v$, and there is no $w \in [i]$ such that $A^x_u \subset A^x_w \subset A^x_v$. Let $1 \le p,c_1,\dots,c_r,b_1,\dots,b_k < i$. Let $A^x_p$ be a parent of $A^x_i$, and let $A^x_{c_1},\cdots,A^x_{c_r}$ be the children of $A^x_i$. Further, let $A^x_{b_1},\cdots,A^x_{b_k}$ be the children (apart from $A^x_i$) of $A^x_p$. (In other words, $A^x_{b_1},\cdots,A^x_{b_k}$ are siblings of $A^x_i$.) Next, define 
\begin{align*}
    & B^x := A^x_p \setminus \Big(\cup_{j \in [r]}A^x_{c_j} \bigcup \cup_{j\in [k]} A^x_{b_j}\Big),\text{ and}\\
    & C^x :=  A^x_i\setminus (\cup_{j \in [r]}A^x_{c_j}).
\end{align*}

Now by~\autoref{clm:subsetrelation}, for any $x,x' \in L^i_3$, $A^x_u$ is a child (or parent) of $A^x_v$ if and only if $A^{x'}_u$ is a child (or parent) of $A^{x'}_v$.  
Therefore, for all $x \in L^i_3$,
\begin{itemize}
    \item $C^x, B^x \neq \emptyset$, and
    \item $D_X(B^x) = z_p-z_{c_1}-\cdots-z_{c_r}-z_{b_1}-\cdots-z_{b_k} > 0$.
\end{itemize}
Let $ h := z_p-z_{c_1}-\cdots-z_{c_r}-z_{b_1}-\cdots-z_{b_k}$.
Note, 
\begin{align*}
    Z_i|(X=x,Z^{i-1}=z^{i-1}) &= D_X(A^x_i)\\
    &= D_X(A^x_{c_1}) + \cdots +   D_X(A^x_{c_r})+  D_X(C^x) \\
    & =   z_{c_1}+\cdots+z_{c_r}+  D_X(C^x).
\end{align*}
Then we get
\begin{align}
\label{eq:KL-bound}
   KL(Q_{Z_i|X=x, Z^{i-1}=z^{i-1}} || Q_{Z_i|X=x', Z^{i-1}=z^{i-1}}) &= KL(Q_{D_X(A^x_i)} || Q_{D_X(A^{x'}_i)}) \nonumber \\
    & = KL(Q_{D_X(C^x)} || Q_{D_X(C^{x'})})  \nonumber \\
    & = KL\Big(Q_{ \frac{D_X(C^x)}{h}} || Q_{\frac{D_X(C^{x'})}{h}}\Big)
\end{align}
where the last two equalities follow since $z_{c_1}+\cdots+z_{c_r}$ and $h$ depend only on $z^{i-1}$ (not on $x$).

Let us now recall the independence property of Dirichlet distribution (\autoref{lem:independir}). Suppose $(P_1,\dots,P_K) \sim Dir(\alpha_1,\dots,\alpha_K)$. Given $E \subset F \subseteq [K]$ and $d \in (0,1)$, the random variable $\frac{\sum_{i \in E} P_i}{d}$ conditioned on  $\sum_{i \in F}P_i = d$ is distributed a per the  beta distribution with parameter $(\sum_{i \in E} \alpha_i,\sum_{i \in F\setminus E}\alpha_i)$.

For our purpose, we set $d = h$, $F = B^x$ and $E = C^x$ in~\autoref{lem:independir}. Therefore,
\begin{align*}
    & \frac{ D_X(C^x)}{h} \sim Beta(\sum_{i \in C^x} \alpha_i,\sum_{i \in B^x\setminus C^x}\alpha_i),\text{ and}\\
    & \frac{ D_X(C^{x'})}{h} \sim Beta(\sum_{i \in C^{x'}} \alpha_i,\sum_{i \in B^{x'}\setminus C^{x'}}\alpha_i).
\end{align*}

  
  Let for any $x \in L^i_3$,  $U^x = B^x \setminus C^x$, $\alpha(U^x) = \sum_{i \in U^x}\alpha_i$ and $\alpha(C^x) = \sum_{i \in C^x}\alpha_i$. Therefore,
  \begin{align*}
    & \frac{ D_X(C^x)}{h} \sim Beta( \alpha(C^x),\alpha(U^x)),\text{ and}\\
    & \frac{ D_X(C^{x'})}{h} \sim Beta(\alpha(C^{x'}),\alpha(U^{x'})).
\end{align*}
  
  The KL divergence between two beta distributions with different parameters is well known (e.g., see~\cite{johnson1995continuous}) and as follows.

\begin{lemma}
\label{pair-bound}
 \begin{align*}
KL(Beta(\alpha(C^x),\alpha(U^x))|| Beta(\alpha(C^{x'}),\alpha(U^{x'})) & = \ln (\frac{B(\alpha(C^{x'}),\alpha(U^{x'}))}{B(\alpha(C^x),\alpha(U^x))})\\
& +(\alpha(C^x)-\alpha(C^{x'}))\psi(\alpha(C^x))+(\alpha(U^x)-\alpha(U^{x'}))\psi(\alpha(U^x))\\
& + (\alpha(C^{x'})-\alpha(C^{x})+\alpha(U^{x'})-\alpha(D^{x}))\psi(\alpha(C^x)+\alpha(U^x))
\end{align*}
 where $B(w,y) = \frac{\Gamma(w)\Gamma(y)}{\Gamma(w+y)}$ and $\Gamma()$ is a gamma function, and $\psi$ is the digamma function defined as the logarithmic derivative of the gamma function: $\psi(\alpha) = \frac{d}{d \alpha} \ln\Gamma(\alpha)$.
\end{lemma}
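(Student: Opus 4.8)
The plan is to set $a := \alpha(C^x)$, $b := \alpha(U^x)$, $a' := \alpha(C^{x'})$, $b' := \alpha(U^{x'})$ and prove the identity for arbitrary positive reals $a,b,a',b'$ (these parameters are strictly positive here, being sums of the $\alpha_j = 2^{-j}$ over the nonempty index sets $C^x, U^x, C^{x'}, U^{x'}$). Both $Beta(a,b)$ and $Beta(a',b')$ are supported on $(0,1)$ with densities $f(p) = p^{a-1}(1-p)^{b-1}/B(a,b)$ and $g(p) = p^{a'-1}(1-p)^{b'-1}/B(a',b')$, so $KL(Beta(a,b)||Beta(a',b'))$ equals $\int_0^1 f(p)\ln\frac{f(p)}{g(p)}\,dp$, which is well defined since the two distributions have the same support. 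First I would compute the integrand pointwise,
\[
\ln\frac{f(p)}{g(p)} = (a-a')\ln p + (b-b')\ln(1-p) + \ln\frac{B(a',b')}{B(a,b)},
\]
and integrate against $f$ to obtain
\[
KL(Beta(a,b)||Beta(a',b')) = \ln\frac{B(a',b')}{B(a,b)} + (a-a')\,\E[\ln P] + (b-b')\,\E[\ln(1-P)],
\]
where $P \sim Beta(a,b)$; the two expectations are finite because $\int_0^1 p^{a-1}|\ln p|\,dp < \infty$ for every $a > 0$, and symmetrically near $p=1$.

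The second step is to evaluate $\E[\ln P]$ and $\E[\ln(1-P)]$ in closed form. Starting from $B(a,b) = \int_0^1 p^{a-1}(1-p)^{b-1}\,dp$, I would differentiate under the integral sign in $a$ — justified since $B$ is real-analytic on the positive orthant (equivalently, the integrand and its $a$-derivative are jointly continuous and locally dominated) — to get $\partial_a B(a,b) = \int_0^1 p^{a-1}(1-p)^{b-1}\ln p\,dp$, hence $\E[\ln P] = \partial_a \ln B(a,b)$. Writing $B(a,b) = \Gamma(a)\Gamma(b)/\Gamma(a+b)$ and recalling $\psi = (\ln\Gamma)'$ yields $\E[\ln P] = \psi(a) - \psi(a+b)$; the symmetric computation (differentiating in $b$, or using $1-P \sim Beta(b,a)$) gives $\E[\ln(1-P)] = \psi(b) - \psi(a+b)$.

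Finally I would substitute these two identities into the displayed expression for $KL$ and collect the coefficient of $\psi(a+b)$:
\[
KL(Beta(a,b)||Beta(a',b')) = \ln\frac{B(a',b')}{B(a,b)} + (a-a')\psi(a) + (b-b')\psi(b) + \big((a'-a)+(b'-b)\big)\psi(a+b),
\]
which, after substituting back $a = \alpha(C^x)$ and so on, is precisely the claimed formula (the symbol $D^x$ appearing in the statement is a typo for $U^x$). I do not expect a genuine obstacle here: the statement is classical and already attributed to~\cite{johnson1995continuous}, so the only points needing care are the justification of differentiating $B(a,b)$ under the integral sign and the verification that all the integrals converge — both routine consequences of $a,b,a',b' > 0$.
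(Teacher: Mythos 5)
Your derivation is correct and complete: the decomposition of $\ln(f/g)$, the identities $\E[\ln P]=\psi(a)-\psi(a+b)$ and $\E[\ln(1-P)]=\psi(b)-\psi(a+b)$ obtained by differentiating $\ln B(a,b)$, and the final collection of the $\psi(a+b)$ coefficient all check out against the stated formula. The paper itself gives no proof of this lemma --- it is quoted as a known closed form with a citation to Johnson--Kotz--Balakrishnan --- and your argument is exactly the standard derivation that reference contains; you are also right that the $\alpha(D^x)$ in the displayed statement is a typo for $\alpha(U^x)$, consistent with the paper's later use of the formula.
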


 Next, we derive a simple claim about the digamma function.
 \begin{lemma}
 \label{lem:digamma}
 For $ 0 < w \le 1$, we have $|w \psi(w)| \le 3$.
 \end{lemma}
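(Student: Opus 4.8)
The plan is to use the functional equation of the digamma function to transfer the question from $(0,1]$, where $\psi$ blows up, to the interval $[1,2]$, where $\psi$ is bounded. Recall the recurrence $\psi(w+1)=\psi(w)+\tfrac1w$, which rearranges to $w\,\psi(w)=w\,\psi(w+1)-1$. Hence $|w\psi(w)|\le |w\,\psi(w+1)|+1$, and it suffices to show that $|w\,\psi(w+1)|\le 1$ for all $w\in(0,1]$; in fact this will even yield the sharper bound $|w\psi(w)|\le 2$.

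For the remaining estimate I would invoke the monotonicity of $\psi$: since $\psi'(w)=\sum_{k\ge 0}(w+k)^{-2}>0$, the function $\psi$ is strictly increasing on $(0,\infty)$. Thus for $w\in(0,1]$ we have $w+1\in(1,2]$ and therefore $\psi(w+1)\in\bigl(\psi(1),\psi(2)\bigr]=(-\gamma,\,1-\gamma]$, where $\gamma$ denotes the Euler--Mascheroni constant and we used $\psi(1)=-\gamma$ together with $\psi(2)=\psi(1)+1=1-\gamma$. Since $0<\gamma<1$ and $0<1-\gamma<1$, this gives $|\psi(w+1)|\le\max\{\gamma,1-\gamma\}<1$. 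Consequently $|w\,\psi(w+1)|\le w\cdot 1\le 1$ for $w\in(0,1]$, and combining with the displayed inequality above yields $|w\psi(w)|\le 2\le 3$, as claimed.

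I do not anticipate a genuine obstacle; the proof is essentially two lines once the right reduction is in place. The only care needed is to justify the two classical inputs — the recurrence $\psi(w+1)=\psi(w)+1/w$ and the positivity of $\psi'$ (equivalently, strict monotonicity of $\psi$ on $(0,\infty)$) — both of which can be cited from a standard reference such as~\cite{johnson1995continuous}. If one wishes to avoid naming $\gamma$ altogether, it is enough to observe that $\psi$ maps the compact interval $[1,2]$ into $[\psi(1),\psi(2)]\subset(-1,1)$, which is all the argument uses.
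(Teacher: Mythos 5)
Your proof is correct and follows essentially the same route as the paper's: both use the recurrence $\psi(w+1)=\psi(w)+\tfrac{1}{w}$ to reduce the problem to bounding $|\psi|$ on $(1,2]$. The only difference is the last step --- the paper invokes the cited inequality $|\psi(w+1)|\le \frac{1}{1+w}+\ln(1+w)$ to reach the constant $3$, whereas you use monotonicity of $\psi$ together with $\psi(1)=-\gamma$ and $\psi(2)=1-\gamma$, which is equally valid and in fact yields the slightly sharper constant $2$.
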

 \begin{proof}
It is well known and follows from the definition of digamma function (e.g., see~\cite{bernardo1976psi}) that
\[
\psi(w) = \psi(w+1) + \frac{1}{w}.
\]
So $|w \psi(w)| \le 1 + w |\psi(w+1)|$. It is also known that $|\psi(w+1)| \le \frac{1}{1+w}+ \ln (1+w)$. So, we get that
\begin{align*}
    |w \psi(w)| &\le 1 + w |\psi(w+1)| \\
    &\le 1 + \frac{w}{1+w}+w \ln (1+w) < 3.
\end{align*}
\end{proof}

\begin{lemma}
\label{lem:boundedkl}
For all $x,x' \in L^i_3$, $KL(Q_{Z_i|X=x,Z^{i-1}= z^{i-1}} || Q_{Z_i|X=x', Z^{i-1}=z^{i-1}}) = O(1)$ .
\end{lemma}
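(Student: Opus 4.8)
The plan is to substitute the explicit Beta parameters into the formula of \autoref{pair-bound} and show each of its four summands is $O(1)$. Fix $x,x'\in L^i_3$ and write $x_1=\min(x,x')$, $x_2=\max(x,x')$. First I would record the combinatorial picture: by \autoref{clm:subsetrelation} the parent index $p$ and children indices $c_1,\dots,c_r,b_1,\dots,b_k$ defining $B^{\cdot},C^{\cdot}$ are common to all of $L^i_3$, and since $A_j^{x_2}\cap[2^{x_1}]=A_j^{x_1}$ for every $j\le i$, we get $S^{x_2}\cap[2^{x_1}]=S^{x_1}$ and $S^{x_2}\setminus S^{x_1}\subseteq\{2^{x_1}+1,\dots,2^{x_2}\}$ for each $S\in\{C,U,B\}$; in particular $S^{x_1}\subseteq S^{x_2}$.

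Next I would turn this into quantitative control of the parameters using $\alpha_j=2^{-j}$. For any nonempty $S\subseteq[n]$ one has $\alpha(S)\in(0,1]$, and when $\emptyset\neq S\subseteq[2^{x_1}]$ also $\alpha(S)\ge 2^{-\min S}\ge 2^{-2^{x_1}}$; meanwhile $\alpha(S^{x_2})-\alpha(S^{x_1})=\sum_{j\in S^{x_2}\setminus S^{x_1}}2^{-j}\le\sum_{j>2^{x_1}}2^{-j}=2^{-2^{x_1}}$. Hence for each $S\in\{C,U,B\}$ we get $\alpha(S^{x_1})\le\alpha(S^{x_2})\le 2\,\alpha(S^{x_1})$, so $|\alpha(S^x)-\alpha(S^{x'})|\le\alpha(S^x)$ and $\alpha(S^x)/\alpha(S^{x'})\in[1/2,2]$ — this uses $S^{x_1}\neq\emptyset$, which holds for $S=C$ and $S=B$ by the setup, and for $S=U$ because otherwise $A_p^x$ would be the disjoint union of its children $A_i^x,A_{b_1}^x,\dots,A_{b_k}^x$, forcing $\chi^x(A_i)=\chi^x(A_p)-\sum_j\chi^x(A_{b_j})\in\mathrm{span}\{\chi^x(A_j):j<i\}$ and contradicting $x\in L^i_3$. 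Also $\alpha(B^x)=\alpha(C^x)+\alpha(U^x)$ since $C^x\sqcup U^x=B^x$.

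With these estimates I would bound the four terms of \autoref{pair-bound}. The three digamma terms are immediate: each has coefficient of absolute value at most a fixed constant times $\alpha(C^x)$, $\alpha(U^x)$, or $\alpha(B^x)$ (for the last term the coefficient equals $\alpha(B^{x'})-\alpha(B^x)$ and the digamma argument equals $\alpha(C^x)+\alpha(U^x)=\alpha(B^x)$), so \autoref{lem:digamma} makes each of them $O(1)$. For the $\ln B$ ratio I would expand $B(w,y)=\Gamma(w)\Gamma(y)/\Gamma(w+y)$ to obtain $\ln\big(B(\alpha(C^{x'}),\alpha(U^{x'}))/B(\alpha(C^x),\alpha(U^x))\big)=\Delta_C+\Delta_U-\Delta_B$ with $\Delta_S:=\ln\Gamma(\alpha(S^{x'}))-\ln\Gamma(\alpha(S^x))$, and control each $\Delta_S$ via $\Gamma(t)=\Gamma(t+1)/t$: then $\Delta_S=\big(\ln\Gamma(\alpha(S^{x'})+1)-\ln\Gamma(\alpha(S^x)+1)\big)+\ln(\alpha(S^x)/\alpha(S^{x'}))$, whose first part is bounded because $\ln\Gamma$ is continuous on $[1,2]$ and whose second part is at most $\ln 2$ in absolute value. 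Summing the four contributions, and recalling from \autoref{eq:KL-bound} and \autoref{lem:independir} that the target divergence is exactly $KL(Beta(\alpha(C^x),\alpha(U^x)) || Beta(\alpha(C^{x'}),\alpha(U^{x'})))$, yields the claimed $O(1)$ bound.

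The hard part is the $\ln B$ term, which is also the only place the particular geometric weights $\alpha_j=2^{-j}$ matter beyond summability. Since $\ln\Gamma$ has a pole at $0$, each $\Delta_S$ could a priori be enormous; what rescues it is exactly the two facts established above — that $S^{x}$ and $S^{x'}$ coincide below $2^{x_1}$, and that $S^{x_1}$'s smallest element already contributes at least $2^{-2^{x_1}}$ — which together pin $\alpha(S^{x'})/\alpha(S^x)$ to $[1/2,2]$ and force the two singular contributions to $\Delta_S$ to cancel up to $O(1)$. The remaining steps are routine bookkeeping with \autoref{pair-bound} and \autoref{lem:digamma}.
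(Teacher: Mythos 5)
Your proposal is correct and follows essentially the same route as the paper: the same combinatorial containments $C^{x}\subseteq C^{x'}$, $U^{x}\subseteq U^{x'}$ with tail-sum bound $\sum_{j>2^{x}}\alpha_j\le\alpha_{2^{x}}\le\alpha(S^{x})$, the same use of \autoref{lem:digamma} for the three digamma terms, and the same treatment of the $\ln B$ ratio via $\Gamma(t)=\Gamma(t+1)/t$ and boundedness of $\Gamma(1+\cdot)$ near $1$. Your explicit argument that $U^{x}\neq\emptyset$ (via the linear-span characterization of $L^i_2$) fills in a nonemptiness claim the paper asserts without proof, which is a welcome addition.
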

\begin{proof}
W.l.o.g. assume, $x < x' \in L^i_3$. We have $C^x \subset C^{x'}$ and $U^x \subset U^{x'}$. So we have 
\begin{align*}
    |\alpha(C^x)-\alpha(C^{x'})| & \le \sum_{j \ge 2^x+1}\alpha_j \\
    &\le \alpha_{2^x} \le \alpha(C^x)
\end{align*}
where the second last inequality follows since $\alpha_j = \frac{1}{2^j}$ for all $j$. 
Similarly,
\[
|\alpha(U^{x'})-\alpha(D^{x})| \le \alpha(U^x)
\]
From~\autoref{lem:digamma}, we have  
\begin{enumerate}
    \item $|(\alpha(C^x)-\alpha(C^{x'}))| \cdot |\psi(\alpha(C^x))| \le \alpha(C^x) |\psi(\alpha(C^x))| < 3$,
    \item $|(\alpha(U^x)-\alpha(U^{x'}))| \cdot |\psi(\alpha(U^x))| \le \alpha(U^x) |\psi(\alpha(U^x))| < 3$, and
    \item $|(\alpha(C^{x'})-\alpha(C^{x})+\alpha(U^{x'})-\alpha(D^{x}))| \cdot |\psi(\alpha(C^x)+\alpha(U^x))| \le (\alpha(C^x)+\alpha(U^x)) |\psi(\alpha(C^x)+\alpha(U^x))| < 3$.
\end{enumerate}

   So it follows from~\autoref{eq:KL-bound} and~\autoref{pair-bound}, we have 
\[
KL(Q_{Z_i|X=x, Z^{i-1}=z^{i-1}} || Q_{Z_i|X=x', Z^{i-1}=z^{i-1}}) = O(1) + \ln (\frac{B(\alpha(C^{x'}),\alpha(U^{x'}))}{B(\alpha(C^x),\alpha(U^x))}).
\]


It is well known and follows from the definition (e.g., see~\cite{davis1959leonhard}) that
\begin{align*}
    & B(w,y) = \frac{\Gamma(w) \Gamma(y)}{\Gamma(w+y)}, \text{ and}\\
    &\Gamma(w) = \frac{\Gamma(w+1)}{w}.
\end{align*}

Therefore, 
\begin{align*}
    \ln (\frac{B(\alpha(C^{x'}),\alpha(U^{x'}))}{B(\alpha(C^x),\alpha(U^x))}) &= \ln \frac{\alpha(C^x) \alpha(U^x) (\alpha(C^{x'})+\alpha(U^{x'}))}{\alpha(C^{x'}) \alpha(U^{x'}) (\alpha(C^x)+\alpha(U^x))}\\
    & + \ln \frac{\Gamma(1+\alpha(C^{x'})) \Gamma(1+\alpha(U^{x'})) \Gamma(1+\alpha(C^x)+\alpha(U^x))}{\Gamma(1+\alpha(C^x)) \Gamma(1+ \alpha(U^x)) \Gamma(1+\alpha(C^{x'})+\alpha(U^{x'}))}.
\end{align*}

Note, $\frac{\alpha(C^{x'})}{\alpha(C^{x})}, \frac{\alpha(U^{x'})}{\alpha(D^{x})}, \frac{\alpha(C^{x'})+\alpha(U^{x'})}{\alpha(C^{x}+\alpha(D^{x})}\in [1,2]$. So we have 
\[
\Big|\ln \frac{\alpha(C^x) \alpha(U^x) (\alpha(C^{x'})+\alpha(U^{x'}))}{\alpha(C^{x'}) \alpha(U^{x'}) (\alpha(C^x)+\alpha(U^x))} \Big| \le 3 \ln 2.
\]
For any $0 \le w \le 1$, it is known that $\Gamma(1+w) \in [0.6,2]$. Therefore, 
\[
\Big|\ln \frac{\Gamma(1+\alpha(C^{x'})) \Gamma(1+\alpha(U^{x'})) \Gamma(1+\alpha(C^x)+\alpha(U^x))}{\Gamma(1+\alpha(C^x)) \Gamma(1+ \alpha(U^x)) \Gamma(1+\alpha(C^{x'})+\alpha(U^{x'}))} \Big| = O(1).
\]
Thus we have $ \ln (\frac{B(\alpha(C^{x'}),\alpha(U^{x'}))}{B(\alpha(C^x),\alpha(U^x))}) = O(1)$, and hence $KL(Q_{Z_i|X=x, Z^{i-1}=z^{i-1}} || Q_{Z_i|X=x', Z^{i-1}=z^{i-1}}) = O(1)$.
\end{proof}

Now, it directly follows from~\autoref{upperboundonI} and~\autoref{lem:boundedkl} that for every $i \in [t]$, $z^{i-1}$, and $x \in [\log n]$, $I(X;Z_i|Z^{i-1}=z^{i-1})  = O(1)$ which along with~\autoref{clm:Igivenz} completes the proof of~\autoref{lem:boundonIstep}.

\subsection{Extending $\Omega(\log \log \log n)$ bound to {\CD} model}
\label{sec:cd-LB}
We now extend the $\Omega(\log \log \log n)$ bound to the ${\CD}$ model. 
The extra difficulty (compared to the {\SEval}) comes from the fact that now at any step, there are three outcomes - (i) the sum $Z_i = D_X(A_i)$, (ii) the sampled element $E_i \in A_i$ (if $D_X(A_i) > 0$), and (iii) the actual probability $D_X(E_i)$ of $E_i$ (in actual definition, the outcomes are $(E_i, D_X(E_i),\frac{D_X(E_i)}{D_X(A_i)})$. Obviously it is equivalent to $(E_i, D_X(E_i), D_X(A_i))$). If $D_X(A_i) = 0$ then we set $E_i$ to be $\emptyset$. We need to show now that the information gain about $X$ from all of these outcomes is small. Formally, we need to upper bound $I(X;(Z_i,E_i,D_X(E_i)))$ (compared to earlier $I(X;Z_i)$). Further, (as before), the upper bound $I(X;(Z_i,E_i,D_X(E_i)))$ needs to be shown conditioned on the previous outcomes. 

The set of inputs $\mathcal{X}$ and distribution $\gamma$ is the same as before except that we set the Dirichlet parameters $\alpha_1,\dots,\alpha_n$ as follows:
\[
\alpha_{j} = \frac{1}{c^{j}} \text{ for all }1 \le j \le n \text{, where }c = 4 (2^{2^{n}})^2 2^{2^{n}}.
\]

As before, we will require that the queried sets are laminar.
\begin{lemma}
\label{lem:laminarappendix}
Any algorithm that makes $t$ queries of {\CD} can be simulated by an algorithm that makes at most $2^t$ {\CD}  queries with the property that   (i)the queried sets are laminar, and (ii) a set queried later is not a strict superset of a set queried as well as sampled elements earlier (i.e., of those sets $S \subseteq [n]$, whose $D_X(S)$ is known so far). 
\end{lemma}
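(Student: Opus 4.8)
The plan is to extend the construction used in the proof of Lemma~\ref{lem:laminar} so that it (a) produces, in addition to the sums $D_X(\cdot)$, a correctly distributed sampled element together with its probability, as a {\CD} query requires, and (b) additionally satisfies the ``no strict superset'' condition~(ii). The simulator $\mathcal{B}$ will run $\mathcal{A}$ internally, and at each step will expand $\mathcal{A}$'s intended query into a small family of laminar {\CD} queries from whose answers it can reconstruct the {\CD} answer $\mathcal{A}$ expects.

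First I would recall the {\SEval} construction: when $\mathcal{A}$ wants to query its $i$-th set $A_i$, the simulator queries the (at most $2^{i-1}$) ``atoms'' $S = A_i \cap C_1 \cap \cdots \cap C_{i-1}$ with $C_j \in \{A_j, A^c_j\}$, and recovers $D_X(A_i)$ as the sum of their masses. I would isolate as a short structural claim the fact that, across all steps $i$, these atoms form a laminar family and, more precisely, an atom queried at a later step is always either disjoint from or a (not necessarily strict) subset of any atom queried at an earlier step --- it is never a strict superset of one. This already yields property~(ii) restricted to pairs of queried sets; the proof is the same case analysis on the factors $C_j$ as in Lemma~\ref{lem:laminar} (if some earlier factor differs, the two atoms are disjoint; otherwise the later one is contained in the earlier one).

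Next I would handle the sampled elements. Let $E$ denote the set of all elements $\mathcal{B}$ has sampled so far, counting every sub-query (not only the responses forwarded to $\mathcal{A}$); for each $e \in E$ the value $D_X(e)$ is known. When $\mathcal{A}$ issues its $i$-th query $A_i$, the simulator queries, for each atom $S$ as above, the trimmed set $S^- := S \setminus E$ (skipping it when empty). One then checks: (a) $D_X(S) = D_X(S^-) + \sum_{e \in S \cap E} D_X(e)$ is computable, hence so is $D_X(A_i)$; (b) a sample from $D_X|_{A_i}$ with its probability can be produced by picking an atom $S$ with probability $D_X(S)/D_X(A_i)$, then within $S$ returning the reported sample of $S^-$ with probability $D_X(S^-)/D_X(S)$ and each $e \in S \cap E$ with probability $D_X(e)/D_X(S)$, so the tuple $(e, D_X(e), D_X(e)/D_X(A_i))$ is reconstructed exactly (and ``failure'' is reported precisely when all atom masses are $0$), whence $\mathcal{A}$ receives a response with exactly the correct {\CD} distribution; and (c) the trimmed sets $\{S^-\}$ across all steps remain laminar, no $S^-$ queried later is a strict superset of a set queried earlier, and no $S^-$ contains an earlier sampled element at all (so in particular it is not a strict superset of any singleton $\{e\}$, $e \in E$). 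For (c) the key observations are that removing earlier-sampled elements from the laminar family of raw atoms preserves laminarity, since by the structural claim a later atom is contained in or disjoint from any earlier atom and by construction disjoint from everything sampled earlier; and that $S^-$ at step $i$ is disjoint from $E \supseteq E_{i-1}$, while for $j < i$ the monotonicity $E_{j-1}\subseteq E_{i-1}$ gives $S_j^- = S_j\setminus E_{j-1}\supseteq S\setminus E_{i-1} = S^-$ whenever $S\subseteq S_j$.

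Finally I would conclude: because $\mathcal{A}$'s view inside $\mathcal{B}$ is distributed exactly as against a genuine {\CD} oracle, $\mathcal{B}$'s output equals $\mathcal{A}$'s output in distribution, so the success probability is unchanged; and the number of queries is at most $\sum_{i=1}^{t} 2^{i-1} < 2^t$. I expect the main obstacle to be the bookkeeping in item~(c): simultaneously maintaining laminarity and the no-strict-superset property while the resolved collection $E$ keeps growing. The feature that makes this go through cleanly is exactly the combination of the structural claim (later atoms never strictly contain earlier ones) with the monotonicity of $E$ along the run.
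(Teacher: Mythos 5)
Your construction is essentially identical to the paper's: the paper also expands $A_i$ into the $2^{i-1}$ atoms $A_i\cap C_1\cap\cdots\cap C_{i-1}$ further intersected with the complements of previously sampled elements (which is the same as your trimming $S^-=S\setminus E$, since $S\subseteq A_i$), recovers $D_X(A_i)$ by adding back the known masses of those elements, and resamples among the atoms and resolved singletons with the appropriate weights to reproduce the {\CD} answer exactly. Your write-up is correct and in fact spells out the verification of property~(ii) more explicitly than the paper does.
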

\begin{proof}
Let algorithm $\mathcal{A}$ makes $t$ queries of {\CD}. Let the queried sets by $\mathcal{A}$, in order, be denoted by $A_1,\dots, A_t$.  We give an algorithm $\mathcal{B}$ that can simulate $A$ such that the queried sets satisfy the property as mentioned in the lemma. Like before, we use $D_X(A)$ for $\sum_{j \in A}D_X(j)$ for all $A \subseteq [n]$. The first set queried by $\mathcal{B}$ is also $A_1$.  The algorithm $\mathcal{B}$ receives the tuple $(D_X(A_1),E(A_1),D_X(E(A_1))$ (where $E(A_1) \in A_1$ is the sampled element) and via $\mathcal{A}$ determines  the next set $A_2$ queried by $\mathcal{A}$. 

Let $\mathcal{E}_{i-1}$ be the set of all sampled elements of $\mathcal{B}$ till step $i-1$, i.e., till $\mathcal{B}$ has determined the set $A_i$ by simulating the sets $A_1,\dots,A_{i-1}$.  Let $\mathcal{E}_{A_i} = A_i \cap \mathcal{E}_{i-1}$ be the set of sampled elements so far in $A_i$. In place of set $A_i$, the algorithm $\mathcal{B}$ queries all sets in
\[
\mathcal{C}_i =  \{A_i \cap C_1 \cap C_2 \dots C_{i-1} \cap (\cap_{E \in \mathcal{E}_{A_i}} E^c)|C_j \in \{A_j,A^c_j\}, \forall j \in [i-1]\}
\]
($A^c_j = [n]\setminus A_j$ and $E^c = [n]\setminus E$).  For each set $C \in \mathcal{C}_i$, the algorithm $\mathcal{B}$ receives the tuple $(D_X(C),E(C),D_X(E(C)))$, where $E(C) \in C$ is the sampled element with probability $\frac{D_X(E(C))}{D_X(C)}$. Note that $\mathcal{B}$ can determine $D_X(A_i)$ since
\[
D_X(A_i) =\sum_{C \in \mathcal{C}_i} D_X(C) + \sum_{E \in \mathcal{E}_{A_i}} D_X(E).
\]
Now $\mathcal{B}$ needs to sample an element $E_i \in A_i$ with probability $\frac{D_X(E_i)}{D_X(A_i)}$. For the same, the algorithm $\mathcal{B}$ samples a set $U \in \mathcal{C}_{i} \cup \mathcal{E}_{A_i}$ with probability $\frac{D_X(U)}{D_X(A_i)}$ (note that $D_X(U)$ is known to $\mathcal{B}$ for all $U \in \mathcal{C}_{i} \cup \mathcal{E}_{A_i}$) and then to determine the next set $A_{i+1}$, uses the tuple $(D_X(A_i),E(U),D_X(E(U)))$ or $(D_X(A_i),U,D_X(U))$ as an outcome of $\mathcal{A}$  for $A_i$, depending on $U \in \mathcal{C}_i $ or $U \in \mathcal{E}_{A_i}$ respectively. 

Observe, $|\mathcal{C}_i| = 2^{i-1}$, and hence the total number of queries is bounded by $\sum_{i \in [t]}2^{i-1} \le 2^t.$ Also, it is easy to see that the queried sets satisfy the conditions of the lemma.


\end{proof}

Let $m_i \in A_i$ be the minimum  element of $A_i$. We will need the following lemma later to show that for any $A_i$ ($1 \le i \le t$), if $D_X(A_i) > 0$, the sampled element $E_i$ will be $m_i$ with probability at least $1 - \frac{1}{2^{2^n}}$.

\begin{lemma}
\label{lem:beta}
Consider a random variable $Y \sim Beta(\alpha,\beta)$ where $\beta \le \frac{\alpha}{c}$ and $c \ge  4(2^{2^{n}})^2 2^{2^{n}}$. We have $\Pr[ Y \le 1 - \frac{1}{2^{2^{n}}}] \le \frac{1}{2^{2^{n}}}.$ 
\end{lemma}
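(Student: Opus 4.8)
The plan is to reduce the statement to an elementary Markov-inequality estimate by passing to the complementary variable. Write $M := 2^{2^{n}}$ for brevity and set $W := 1 - Y$. By the standard reflection symmetry of the Beta family, $Y \sim Beta(\alpha,\beta)$ implies $W \sim Beta(\beta,\alpha)$, and the event $\{Y \le 1 - 1/M\}$ is precisely the event $\{W \ge 1/M\}$; so it suffices to upper bound $\Pr[W \ge 1/M]$.

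Next I would compute the mean of $W$. Since $W \sim Beta(\beta,\alpha)$, its mean is $\mathbb{E}[W] = \frac{\beta}{\alpha + \beta}$. The hypothesis $\beta \le \alpha/c$ gives $\alpha \ge c\beta$, hence $\alpha + \beta \ge (c+1)\beta$, and therefore $\mathbb{E}[W] \le \frac{1}{c+1} < \frac{1}{c}$.

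Finally, Markov's inequality gives $\Pr[W \ge 1/M] \le M \cdot \mathbb{E}[W] < M/c$. Since $c \ge 4M^{2} \cdot M \ge M^{2}$, this yields $\Pr[W \ge 1/M] < M/M^{2} = 1/M$, which is exactly the claimed bound $\Pr[Y \le 1 - 1/M] \le 1/M$.

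There is no genuinely hard step here; the only points needing care are the Beta reflection $Y \sim Beta(\alpha,\beta) \Rightarrow 1 - Y \sim Beta(\beta,\alpha)$ and the formula $\mathbb{E}[Beta(a,b)] = a/(a+b)$, both standard (see, e.g.,~\cite{johnson1995continuous}). I would also note that the constant $c$ fixed in Section~\ref{sec:cd-LB} is far larger than this lemma alone requires (any $c \ge M^{2}$ works); the extra slack is presumably consumed elsewhere, e.g.\ when union-bounding the event that the sampled element of $A_i$ equals its minimum $m_i$ over the at most $2^{t}$ queried sets.
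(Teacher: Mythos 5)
Your proof is correct. The paper proves this lemma by a Chebyshev argument: it computes both $\mathbb{E}[Y]=\frac{\alpha}{\alpha+\beta}\ge 1-\frac{1}{d+1}$ and $\mathrm{Var}[Y]=\frac{\alpha\beta}{(\alpha+\beta)^2(\alpha+\beta+1)}\le \frac{1}{d}$ (with $d=4(2^{2^n})^2 2^{2^n}$), observes that the event $\{Y\le 1-\tfrac{1}{M}\}$ forces $|Y-\mathbb{E}[Y]|>\tfrac{1}{2M}$, and then applies Chebyshev to get the bound $4M^2\,\mathrm{Var}[Y]\le \tfrac{1}{M}$. You instead apply Markov's inequality to the complementary variable $W=1-Y$, which needs only the first moment $\mathbb{E}[W]=\frac{\beta}{\alpha+\beta}\le\frac{1}{c+1}$; this is more elementary (no variance formula, and in fact you do not even need the Beta reflection property, since $\Pr[1-Y\ge 1/M]\le M\,\mathbb{E}[1-Y]$ follows directly from nonnegativity of $1-Y$), and it yields the slightly stronger bound $\frac{M}{c}\le\frac{1}{4M^2}$. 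The paper's Chebyshev route spends the quadratic gain from the variance where your Markov route spends the large size of $c$; since $c=4M^3$ is enormous, both close comfortably, and your observation that any $c\ge M^2$ suffices for this lemma in isolation is accurate — the specific value of $c$ is chosen to make the downstream estimates in Section~\ref{sec:cd-LB} (e.g., the $\frac{1}{2^{2^n}}$ failure probability absorbed into the $o(1)$ terms of Lemma~\ref{lem:second}) go through.
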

\begin{proof}
Let $d =  4(2^{2^{n}})^2 2^{2^{n}}$. It is immediate from the property of beta distributions, that
\begin{align*}
    & \E[Y] = \frac{\alpha}{\alpha+ \beta} \ge 1 - \frac{1}{d+1} \text{ and}\\
    & Var[Y] =  \frac{\alpha \beta}{(\alpha + \beta)^2 (\alpha+\beta+1)} \le \frac{1}{d}.
\end{align*}
Therefore, we derive that
\begin{align*}
    \Pr[Y \le 1 - \frac{1}{2^{2^{n}}}] & \le \Pr[|Y - \E[Y]| > \frac{1}{2 \cdot 2^{2^n}}] \\
    & < 4 (2^{2^n})^2 Var[Y] &&\text{(by Chebyshev inequality)}\\
    &\le \frac{4 (2^{2^n})^2 }{4(2^{2^{n}})^2 2^{2^{n}}} = \frac{1}{2^{2^{n}}}.
\end{align*}
\end{proof}

 We will use $H^{i-1}$ to denote  the history $(Z_{i-1},E_{i-1},D_X(E_{i-1}),\dots,Z_1,E_1,D_X(E_1)).$
  Then observe, 
  \begin{align}
  \label{eq:mutual-triplet}
I(X;(Z_i,E_i,D_X(E_i))|H^{i-1})& = I(X;Z_i|H^{i-1}) + I(X;E_i|Z_i,H^{i-1})+I(X;D_X(E_i))|E_i,Z_i,H^{i-1}).
  \end{align}
  \begin{lemma}
  \label{lem:first}
  $I(X;Z_i|H^{i-1}) = O(1)$.
  \end{lemma}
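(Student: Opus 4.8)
## Proof Proposal for Lemma~\ref{lem:first}

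The plan is to reduce the bound on $I(X;Z_i \mid H^{i-1})$ to the analysis already carried out in the {\SEval} case (specifically Lemma~\ref{lem:boundonIstep} and its proof via Lemma~\ref{upperboundonI}, Lemma~\ref{lem:boundedkl}, and Claim~\ref{clm:Igivenz}). The key observation is that $Z_i = D_X(A_i)$ is exactly the same random variable studied in the {\SEval} proof, only now conditioned on a richer history $H^{i-1}$ that records, in addition to the partial sums $Z_1,\dots,Z_{i-1}$, the sampled elements $E_1,\dots,E_{i-1}$ and their probabilities $D_X(E_1),\dots,D_X(E_{i-1})$. First I would note that, by the laminar structure guaranteed by Lemma~\ref{lem:laminarappendix}, each revealed pair $(E_j, D_X(E_j))$ for $j < i$ can be folded into the collection of known set-masses: revealing $D_X(E_j)$ for a singleton $\{E_j\}$ is just an additional {\SEval}-type value $D_X(\{E_j\})$, and by property~(ii) of Lemma~\ref{lem:laminarappendix} the family consisting of all queried sets together with these singletons is still laminar. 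Hence conditioning on $H^{i-1}$ is, as far as the distribution of $Z_i$ given $X$ is concerned, equivalent to conditioning on a laminar collection of set-masses — exactly the setting of the {\SEval} analysis.

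Concretely, I would fix a value $h^{i-1}$ of the history and repeat the partition argument: let $L^i$ be the set of $x \in [\log n]$ for which some distribution in $\mathcal{X}_x$ is consistent with $h^{i-1}$, and partition $L^i$ into $L^i_{-1}, L^i_0, L^i_1, L^i_2, L^i_3$ as before (the element $E_j$ being sampled from $A_j$ forces the coordinate $E_j$ to lie in the support $[2^x]$, which only refines the set $L^i$ but does not change the structure of the partition). For $x \in L^i_3$, I would again invoke the independence property of the Dirichlet distribution (Proposition~\ref{lem:independir}): conditioned on the laminar family of set-masses in $h^{i-1}$, the quantity $Z_i/h$ is distributed as $Beta(\alpha(C^x), \alpha(U^x))$ for appropriate sets $C^x \subseteq B^x$ determined by the laminar tree, exactly as in equation~\eqref{eq:KL-bound}. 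The crucial point requiring care: the new Dirichlet parameters are $\alpha_j = 1/c^j$ with the huge constant $c = 4(2^{2^n})^2 2^{2^n}$ rather than $\alpha_j = 1/2^j$; but the bound $|\alpha(C^x) - \alpha(C^{x'})| \le \alpha(C^x)$ used in the proof of Lemma~\ref{lem:boundedkl} holds a fortiori (indeed with enormous slack, since $\sum_{j \ge 2^x+1}\alpha_j \le \alpha_{2^x}/(c-1) \ll \alpha_{2^x} \le \alpha(C^x)$), so the digamma estimates (Lemma~\ref{lem:digamma}) and the beta-KL formula (Lemma~\ref{pair-bound}) still yield $KL(Q_{Z_i \mid X=x, H^{i-1}=h^{i-1}} \,\|\, Q_{Z_i \mid X=x', H^{i-1}=h^{i-1}}) = O(1)$ for all $x, x' \in L^i_3$. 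Then Lemma~\ref{upperboundonI} (mutual information bound via covering, Lemma~\ref{lem:mutual-inf-covering}, with $N=4$) gives $I(X;Z_i \mid H^{i-1}=h^{i-1}) = O(1)$, and integrating over $h^{i-1}$ as in Claim~\ref{clm:Igivenz} / equation~\eqref{eq:I-Jbound} gives $I(X;Z_i \mid H^{i-1}) = O(1)$.

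The main obstacle I anticipate is justifying rigorously that conditioning on the sampled elements $E_1,\dots,E_{i-1}$ — which are random, and whose distribution depends on $X$ — does not leak extra information that breaks the clean "laminar family of known masses" picture. One has to check that once we condition on the \emph{values} $(Z_j, E_j, D_X(E_j))$ for $j<i$, the residual randomness in $D_X$ is still governed by a Dirichlet-type law on the remaining coordinates with the relevant laminar independence intact; this is where property~(ii) of Lemma~\ref{lem:laminarappendix} (no later set is a strict superset of an earlier queried-or-sampled set) is essential, because it guarantees the singletons $\{E_j\}$ sit at the \emph{bottom} of the laminar tree and the aggregation relation $Z_i = \sum_{c} Z_{c_\ell} + D_X(C^x)$ together with the Dirichlet splitting remains valid. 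A secondary point to verify is that the enlarged parameter $c$ does not cause any of the $O(1)$ bounds (e.g.\ the $\ln(B(\cdot)/B(\cdot))$ term and the $\Gamma(1+w) \in [0.6,2]$ estimate for $0 \le w \le 1$) to degrade — but since all the $\alpha$'s are positive and at most $1/c < 1$, every estimate used in Lemma~\ref{lem:boundedkl} applies verbatim, so this is routine.
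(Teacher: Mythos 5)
Your proposal is correct and follows essentially the same route as the paper: the paper's proof of Lemma~\ref{lem:first} simply observes that the distribution of $Z_i$ depends on the history only through the revealed masses $Z_1,\dots,Z_{i-1}$ and $D_X(E_1),\dots,D_X(E_{i-1})$ (not on the sampling events themselves), and then invokes the proof of Lemma~\ref{lem:boundonIstep} verbatim with the new parameters $\alpha_j = 1/c^j$. In fact you supply more justification than the paper does — in particular the observation that the singletons $\{E_j\}$ can be folded into the laminar family and that the Dirichlet independence property neutralizes the sampling bias on the unrevealed coordinates — which is exactly the detail the paper's one-paragraph proof leaves implicit.
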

  \begin{proof}
The distribution of $Z_i$ will depend on $Z_1,\dots,Z_{i-1}$ and on $D_X(E_1),\dots,D_X(E_{i-1})$, but not on the fact that element $E_j$ was sampled from set $A_j$ for $j \le i-1$. So the proof of the current lemma is the same as that of~\autoref{lem:boundonIstep}, except that now the Dirichlet parameters are set to $\alpha_{j} = \frac{1}{c^{j}}$ for all $1 \le j \le n$. However, all the previous calculations (as in the proof of~\autoref{lem:boundonIstep}) will also work here.

  \end{proof}
  \begin{lemma}
  \label{lem:second}
  $I(X;E_i|Z_i,H^{i-1}) = O(1)$.
  \end{lemma}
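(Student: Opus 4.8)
The plan is to bound $I(X;E_i\mid Z_i,H^{i-1})$ from above by the conditional entropy $H(E_i\mid Z_i,H^{i-1})$, using the elementary inequality $I(X;E_i\mid W)\le H(E_i\mid W)$ valid for any $W$, and then to show this conditional entropy is $O(1)$ (in fact $o(1)$). The reason it is small is that once we condition on the ``sum part'' $Z_i=D_X(A_i)$ of the $i$-th outcome together with the history, the sampled element $E_i$ is essentially forced: if $Z_i=0$ the oracle reports failure and $E_i=\emptyset$ deterministically; and if $Z_i=d>0$ then, because the Dirichlet parameters $\alpha_j=1/c^j$ decay so rapidly, $E_i$ equals the minimum element $m_i$ of $A_i$ with probability $1-o(1)$.

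Fix $H^{i-1}=h$ and $Z_i=z$. Since we may assume (by Yao's principle together with \autoref{lem:laminarappendix}) that $T$ is deterministic and queries a laminar family in which no $A_j$ is a proper superset of a set or singleton whose mass is already known, the set $A_i$, and hence $m_i=\min A_i$, is a deterministic function of $h$. The case $z=0$ contributes $0$ to the entropy, so assume $z=d>0$. The heart of the argument is the claim that $\Pr[E_i=m_i\mid Z_i=d,H^{i-1}=h]\ge 1-o(1)$. I would prove this by first conditioning on $X=x$ for each $x$ compatible with $(Z_i=d,H^{i-1}=h)$ and then averaging over $x$. Since $D_X(A_i)=d>0$, the set $A_i$ meets the support $[2^x]$, and as every element of $A_i$ is at least $m_i$ this forces $m_i\le 2^x$; thus $m_i$ is the minimum of $A_i^x:=A_i\cap[2^x]$. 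Now, by the laminar structure and condition (ii) of \autoref{lem:laminarappendix}, every constraint imposed by $h$ on the coordinates of $D_X$ either is supported off $A_i$ or constrains only aggregates over supersets of $A_i$; hence by the independence (neutrality) property of Dirichlet distributions (\autoref{lem:independir}, applied iteratively), conditioned on $X=x$, $H^{i-1}=h$ and $D_X(A_i^x)=d$, the normalized restriction $(D_X(j)/d)_{j\in A_i^x}$ is again Dirichlet with the same parameters, so in particular $D_X(m_i)/d\sim Beta\big(\alpha_{m_i},\sum_{j\in A_i^x\setminus\{m_i\}}\alpha_j\big)$.

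Since $\sum_{j\in A_i^x\setminus\{m_i\}}\alpha_j\le\sum_{j>m_i}c^{-j}\le \tfrac{2}{c}\,\alpha_{m_i}$, \autoref{lem:beta} yields $\Pr[D_X(m_i)\ge(1-2^{-2^n})d]\ge 1-O(2^{-2^n})$, and therefore $\Pr[E_i=m_i\mid X=x,Z_i=d,H^{i-1}=h]=\E[D_X(m_i)/d\mid\cdots]\ge 1-O(2^{-2^n})$; averaging over $x$ proves the claim. Consequently, conditioned on $(Z_i=d>0,H^{i-1}=h)$, the variable $E_i$ equals the fixed value $m_i$ with probability $1-O(2^{-2^n})$ and otherwise lies in $A_i$, a set of size at most $n$, so $H(E_i\mid Z_i=d,H^{i-1}=h)\le h_2\big(O(2^{-2^n})\big)+O(2^{-2^n})\log n=O(1)$, where $h_2$ denotes the binary entropy. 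Together with the $z=0$ case this gives $H(E_i\mid Z_i,H^{i-1})=O(1)$ and hence the lemma. I expect the main obstacle to be the careful justification that the internal normalized distribution of $D_X$ on $A_i^x$ remains an unconditioned Dirichlet after conditioning on the whole history: this is exactly where the laminar hypothesis and the ``no proper sub-query/sub-sample'' condition (ii) of \autoref{lem:laminarappendix} are indispensable (a history constraint of the form $D_X(A_k)=Z_k$ with $A_k\subsetneq A_i$ would instead pin $D_X(m_i)/d$ near $Z_k/d$, which need not be close to $1$), and one must also keep track of the fact that the support $[2^X]$ is itself random.
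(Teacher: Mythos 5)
Your proof is correct, and it rests on the same probabilistic core as the paper's: conditioned on the history, the set $A_i$ (hence $m_i=\min A_i$) is determined, condition (ii) of \autoref{lem:laminarappendix} together with the Dirichlet independence property gives $D_X(m_i)/z_i\sim Beta(\alpha_{m_i},\alpha(A_i\setminus\{m_i\}))$, and \autoref{lem:beta} then forces $E_i=m_i$ except with probability $O(2^{-2^n})$ (your verification that $\alpha(A_i\setminus\{m_i\})\le\frac{2}{c}\alpha_{m_i}$ is off from the hypothesis of \autoref{lem:beta} by a factor of $2$, but this is immaterial). Where you genuinely diverge is in how this near-determinism is converted into a mutual-information bound. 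The paper expands $I(X;E_i\mid Z_i=z_i,H^{i-1}=h^{i-1})$ as an explicit sum over pairs $(x,e_i)$, splits off the dominant term $e_i=m_i$ (respectively $e_i=\emptyset$ for $x\in L^i_0$), and controls the $\binom{}{}$remaining terms via bounds on $|q(x,e)\log q(x,e)|$, arriving at $\log 2+o(1)$. You instead invoke the elementary inequality $I(X;E_i\mid W)\le H(E_i\mid W)$ for discrete $E_i$ and bound the conditional entropy by $h_2(O(2^{-2^n}))+O(2^{-2^n})\log n=o(1)$ via the grouping (Fano-type) bound. Your route is shorter, avoids any manipulation of the joint densities $q(x,e_i)$ (and the slightly delicate $\sum_x g_x\log\frac{1}{\sum_x g_x}$ bookkeeping), and yields the sharper conclusion $I(X;E_i\mid Z_i,H^{i-1})=o(1)$ rather than $\log 2+o(1)$; what it gives up is only the finer decomposition of which $x$ contribute, which is not needed anywhere else in the argument. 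Your closing remark correctly identifies the one place where care is genuinely required, namely that the posterior of the normalized restriction of $D_X$ to $A_i$ remains Dirichlet with the original parameters -- this is exactly the role of the laminarity and of condition (ii) in \autoref{lem:laminarappendix}, and it is the same justification the paper gives in one sentence.
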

  \begin{proof}
To show  $I(X;E_i|Z_i,H^{i-1}) = O(1)$, as previously, it suffices to show, for given any $z_i,h^{i-1}$,
\[
I(X;E_i|Z_i=z_i,H^{i-1}=h^{i-1}) = O(1)
\]
where 
\[
I(X;E_i|Z_i=z_i,H^{i-1}=h^{i-1})= \sum_{x \in L^c} \sum_{e_i}Q_{X,E_i|Z_i,H^{i-1}}(x,e_i|z_i,h^{i-1}) \log \frac{Q_{X,E_i|Z_i,H^{i-1}}(x,e_i|z_i,h^{i-1})}{Q_{X|Z_i,H^{i-1}}(x|z_i,h^{i-1}) Q_{E_i|Z_i,H^{i-1}}(e_i|z_i,h^{i-1})}
\]
  From now, we will omit the $z_i,h^{i-1}$ in the density functions (as it is understood). Further, for brevity, we will omit the subscript when the context is understood. For example, we will use $q(x,e_i) $ for $Q_{X,E_i|Z_i,H^{i-1}}(x,e_i|z_i,h^{i-1})$.
 Now for any $x \in L^i_0,$ we have $e_i = \emptyset$ with probability one. Therefore, 
   \[ \sum_{x  \in L^i_0} \sum_{e_i}q(x,e_i) \log \frac{q(x,e_i)}{q(s) q(e_i)}  = \Big(\sum_{x  \in L^i_0} g_x \Big) \log \frac{1}{\sum_{x \in L^i_0}g_x}.
 \]

Since the set $A_i$ does not contain any of the earlier sets $A_1,\dots,A_{i-1}$ as well as the sampled elements $E_1,\dots,E_{i-1}$, we have
\[
\frac{D_X(m_i)}{z_i} \sim Beta(\alpha(m_i),\alpha(A_i\setminus m_i)).
\]
Therefore, by~\autoref{lem:beta}, 
   for $x \not \in L^i_0$, we have $\Pr[E_i \neq m_i] \le \frac{1}{2^{2^n}}$. Therefore,
 \[
 \sum_{x \not \in L^i_0} \sum_{e_i}q(x,e_i) \log \frac{q(x,e_i)}{q(x) q(e_i)} \le \sum_{x \not \in L^i_0} q(x,m_i) \log \frac{q(x,m_i)}{q(x) q(m_i)} +  (n \log n)   |q(x,e) \log \frac{q(x,e)}{q(x) q(e)}|
\]
 where $e$ is an any element in $A_i\setminus m_i$. Now,
\begin{align*}
    |q(x,e) \log \frac{q(x,e)}{q(x) q(e)}| & \le |q(x,e)\log q(x,e)| + |q(x,e) \log q(x)| + |q(x,e) \log q(e)| \\
    &\le 3 |q(x,e)\log q(x,e)|.
\end{align*}
 
 As $q(x,e) = g_x q_x(e) \le \frac{1}{2^{2^n}}$, we have $|q(x,e)\log q(x,e)| \le \frac{2^n}{2^{2^n}}.$ Therefore,
 \begin{align*}
 \sum_{x \not \in L^i_0} \sum_{e_i}q(x,e_i) \log \frac{q(x,e_i)}{q(x) q(e_i)} &\le \sum_{x \not \in L^i_0} q(x,m_i) \log \frac{q(x,m_i)}{q(s) q(m_i)} +  (3n \log n)\frac{2^n}{2^{2^n}}\\
 & =  \sum_{x \not \in L^i_0} g_x q_x(m_i) \log \frac{q_x(m_i)}{ \sum_{x \not \in L^i_0}g_x q_x(m_i)} +(3n \log n)\frac{2^n}{2^{2^n}} \\
&\le  \sum_{x \not \in L^i_0} g_x \log \frac{1}{ \sum_{x \not \in L^i_0}g_x (1 - \frac{1}{2^{2^n}})} + \sum_{x \not \in L^i_0} g_x |\log \frac{1-\frac{1}{2^{2^n}}}{ \sum_{x \not \in L^i_0}g_x }|+ (3n \log n)\frac{2^n}{2^{2^n}}\\
& = \sum_{x \not \in L^i_0} g_x \log \frac{1}{\sum_{x \not \in L^i_0} g_x} +o(1).\\
 \end{align*}
 Also, 
 \[ \sum_{x  \in L^i_0} \sum_{e_i}q(x,e_i) \log \frac{q(x,e_i)}{q(s) q(e_i)}  = \sum_{x  \in L^i_0} g_x \log \frac{1}{\sum_{x \in L^i_0}g_x}.
 \]
 
 Hence, we have
 \begin{align*}
     I(X;E_i|z_i,h^{i-1}) &= \sum_{x \not \in L^i_0} \sum_{e_i}q(x,e_i) \log \frac{q(x,e_i)}{q(x) q(e_i)}+\sum_{x  \in L^i_0} \sum_{e_i}q(x,e_i) \log \frac{q(x,e_i)}{q(s) q(e_i)} \\
     &\le \log 2 + o(1) = O(1)
 \end{align*}
 which completes the proof.
 \end{proof}
 
 \begin{lemma}
 If for any value of $z_i$ and $h^{i-1}$, we have $I(X;D_X(E_i)|E_i = m_i,Z_i = z_i,H^{i-1} = h^{i-1}) = O(1)$ then $I(X;D_X(E_i)|E_i,Z_i,H^{i-1}) = O(1)$.
 \end{lemma}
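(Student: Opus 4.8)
The plan is to reduce to the already-assumed case $E_i = m_i$ by observing that the event $E_i \neq m_i$ (apart from the trivial $E_i = \emptyset$) has exponentially small probability, so it contributes negligibly to the conditional mutual information. Unpacking the definition,
\[
I(X;D_X(E_i)\mid E_i,Z_i,H^{i-1}) = \E_{(e_i,z_i,h^{i-1})}\big[\, I(X;D_X(E_i)\mid E_i=e_i,Z_i=z_i,H^{i-1}=h^{i-1})\,\big],
\]
where the expectation is over the joint law of $(E_i,Z_i,H^{i-1})$. I would split this expectation into the three regimes $e_i=\emptyset$, $e_i=m_i$, and $e_i\notin\{\emptyset,m_i\}$, and bound each separately.

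For $e_i=\emptyset$ (equivalently $x\in L^i_0$, i.e.\ $D_X(A_i)=0$), the value $D_X(E_i)$ is deterministically $0$ given the conditioning, so these terms vanish. For $e_i=m_i$ the hypothesis gives $I(X;D_X(E_i)\mid E_i=m_i,Z_i=z_i,H^{i-1}=h^{i-1})=O(1)$ uniformly over $z_i,h^{i-1}$, so this regime contributes $O(1)$. For $e_i\notin\{\emptyset,m_i\}$ I would use the trivial cap $I(X;D_X(E_i)\mid E_i=e_i,\dots)\le H(X)\le \log\log n$ — valid because $X$ takes only $|[\log n]|$ values, regardless of the continuous nature of $D_X(E_i)$ — together with the fact, established exactly as in the proof of \autoref{lem:second} (the set $A_i$ contains no previously queried set or sampled element, hence $D_X(m_i)/z_i\sim Beta(\alpha(m_i),\alpha(A_i\setminus m_i))$ with $\alpha(A_i\setminus m_i)\le \alpha(m_i)/c$, so \autoref{lem:beta} applies), that $\Pr[E_i\notin\{\emptyset,m_i\}]\le \tfrac{1}{2^{2^n}}$. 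Hence this regime contributes at most $\tfrac{\log\log n}{2^{2^n}}=o(1)$. Summing the three bounds yields $I(X;D_X(E_i)\mid E_i,Z_i,H^{i-1})=O(1)$.

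This argument is essentially routine given the infrastructure already in place; the only point requiring care — and hence the main (mild) obstacle — is the cap in the third regime. One should not try to bound $I(X;D_X(E_i)\mid E_i=e_i,\dots)$ through the density of the continuous random variable $D_X(E_i)$ (which could a priori be large or even divergent); instead one uses $I(X;D_X(E_i)\mid E_i=e_i,\dots)=H(X\mid E_i=e_i,\dots)-H(X\mid D_X(E_i),E_i=e_i,\dots)\le H(X\mid E_i=e_i,\dots)\le\log\log n$, i.e.\ the discreteness of $X$ itself bounds the mutual information. Everything else follows from \autoref{lem:beta} and the hypothesis, with no further density computations needed.
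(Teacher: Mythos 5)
Your proposal is correct and follows essentially the same route as the paper: the same three-way split over $e_i\in\{\emptyset\}$, $\{m_i\}$, and everything else, with the hypothesis handling the $m_i$ regime and the $\Pr[E_i\notin\{\emptyset,m_i\}]\le 2^{-2^n}$ bound from \autoref{lem:beta} killing the third. Your cap $I(X;D_X(E_i)\mid E_i=e_i,\dots)\le H(X)\le\log\log n$ is in fact a cleaner justification than the paper's asserted per-term bound of $3\log n$, but it plays the identical role, so the arguments are substantively the same.
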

  \begin{proof}
  We have 
  \begin{align*}
      I(X;D_X(E_i)|E_i,Z_i,H^{i-1}) &= \sum_{e_i} \int_{(z_i,h^{i-1})} q(e_i,z_i,h^{i-1}) I(X;D_X(E_i)|E_i=e_i,Z_i=z_i,H^{i-1}=h^{i-1}) \,d(z_i,h^{i-1})\\
      &  = \int_{(z_i,h^{i-1})} q(\emptyset,z_i,h^{i-1}) I(X;D_X(E_i)|E_i=\emptyset,Z_i=z_i,H^{i-1}=h^{i-1}) \,d(z_i,h^{i-1}) \\
      &+ \int_{(z_i,h^{i-1})} q(m_i,z_i,h^{i-1}) I(X;P_i|E_i=0,Z_i=z_i,H^{i-1}=h^{i-1}) \,d(z_i,h^{i-1})\\
      &+ \sum_{e_i \not \in \{0,m_i\}} \int_{(z_i,h^{i-1})} q(e_i,z_i,h^{i-1}) I(X;D_X(E_i)|E_i=e_i,Z_i=z_i,H^{i-1}=h^{i-1}) \,d(z_i,h^{i-1}).
  \end{align*}
  
  Note that
  \begin{align*}
      I(X;D_X(E_i)\mid E_i = \emptyset,Z_i = z_i,H^{i-1} = h^{i-1}) &= \Big(\sum_{x \in L^i_0} g_x\Big) \log \frac{1}{\sum_{x \in L^i_0} g_x} \\
      &= O(1).
  \end{align*}
  Then, by the assumption of the lemma, $I(X;P_i|E_i = m_i,Z_i = z_i,H^{i-1} = h^{i-1}) = O(1)$. Therefore,  
  \begin{align*}
      & I(X;P_i|E_i,Z_i,H^{i-1}) \\
     & \le  O(1) +  \sum_{e_i \not \in \{0,m_i\}} \int_{(z_i,h^{i-1})} q(e_i,z_i,h^{i-1}) I(X;D_X(E_i)|E_i=e_i,Z_i=z_i,H^{i-1}=h^{i-1}) \,d(z_i,h^{i-1}) \\
     &\le O(1)+ \sum_{e_i \not \in \{0,m_i\}} \int_{(z_i,h^{i-1})}  q(e_i|z_i,h^{i-1}) q(z_i,h^{i-1}) I(X;D_X(E_i)|E_i=e_i,Z_i=z_i,H^{i-1}=h^{i-1}) \,d(z_i,h^{i-1}) \\
     & \le O(1)+ \sum_{e_i \not \in \{0,m_i\}} \int_{(z_i,h^{i-1})}  \frac{1}{2^{2^n}} q(z_i,h^{i-1}) I(X;D_X(E_i)|E_i=e_i,Z_i=z_i,H^{i-1}=h^{i-1}) \,d(z_i,h^{i-1}) \\
     &\le O(1) +  \frac{n}{2^{2^n}} \int_{(z_i,h^{i-1})}   q(z_i,h^{i-1}) I(X;D_X(E_i)|E_i=e_i,Z_i=z_i,H^{i-1}=h^{i-1}) \,d(z_i,h^{i-1}). 
  \end{align*}
  Note,
\begin{align*}
    & I(X;D_X(E_i)|E_i=e_i,Z_i=z_i,H^{i-1}=h^{i-1}) \\
    & = \sum_{x} \int_{p_i} q(x,p(e_i)|e_i,z_i,h^{i-1}) \log \frac{q(x,p(e_i)|e_i,z_i,h^{i-1})}{q(x|e_i,z_i,h^{i-1}) q(p(e_i)|e_i,z_i,h^{i-1})} \,d(q_i)\\
    &\le  3 \log n .
\end{align*}

  Hence,  $I(X;D_X(E_i)|E_i,Z_i,H^{i-1}) \le O(1) + \frac{3n \log n}{2^{2^n}} = O(1)$.
  \end{proof}

It remains to argue that $I(S;D_X(m_i)|E_i = m_i,Z_i,H^{i-1}) = O(1)$. Unfortunately, $I(S;D_X(m_i)|E_i = m_i,Z_i,H^{i-1}) \neq I(S;D_X(m_i)|Z_i,H^{i-1})$, and thus~\autoref{lem:boundonIstep} cannot be used directly. However, the following well-known property of Dirichlet distribution will help us to remove the condition $E_i = m_i$.

  
  
 
\begin{proposition}
\label{lem:postsample}
Suppose $(P_1,\dots,P_K) \sim Dir(\alpha_1,\dots,\alpha_K)$. Let $e \in [K]$. Let $E$ be the event that $e$ is  the element sampled from the distribution $(P_1,\dots, P_K)$. Let  for $1 \le i \le K$, $G_i$ takes the value $1$ if $i = e$, and $0$ otherwise. Then  $(P_1,\dots,P_K|E) \sim Dir(\alpha_1+G_1,\dots,\alpha_K+G_K)$.
\end{proposition}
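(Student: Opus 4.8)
The plan is to prove this by a direct application of Bayes' rule for a continuous prior, exploiting the conjugacy of the Dirichlet family with categorical sampling. Write $\Delta := \{(p_1,\dots,p_K) : \sum_{i\in[K]} p_i = 1,\ p_i > 0\ \forall i\}$ for the open simplex, and let $f(p_1,\dots,p_K) = \frac{\prod_{i\in[K]} p_i^{\alpha_i - 1}}{B(\alpha_1,\dots,\alpha_K)}$ denote the density of $Dir(\alpha_1,\dots,\alpha_K)$ on $\Delta$. The first step is the likelihood computation: conditioned on $(P_1,\dots,P_K) = (p_1,\dots,p_K)$, the element sampled from the distribution $(p_1,\dots,p_K)$ equals $e$ with probability exactly $p_e$, i.e.\ $\Pr[E \mid (P_1,\dots,P_K) = (p_1,\dots,p_K)] = p_e$. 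Integrating over the prior, $\Pr[E] = \E[P_e] = \frac{\alpha_e}{\sum_{i\in[K]}\alpha_i} > 0$ (using the known mean of a Dirichlet stated earlier), so conditioning on $E$ is well-defined.

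Next I would apply Bayes' rule: the conditional density of $(P_1,\dots,P_K)$ given $E$ is
\[
f(p_1,\dots,p_K \mid E) \;=\; \frac{\Pr[E \mid (P_1,\dots,P_K)=(p_1,\dots,p_K)] \cdot f(p_1,\dots,p_K)}{\Pr[E]} \;=\; \frac{p_e}{\Pr[E]\cdot B(\alpha_1,\dots,\alpha_K)} \prod_{i\in[K]} p_i^{\alpha_i - 1}.
\]
Recalling that $G_i = 1$ if $i = e$ and $G_i = 0$ otherwise, multiplying by $p_e$ raises the exponent of $p_e$ from $\alpha_e - 1$ to $\alpha_e = (\alpha_e + G_e) - 1$ and leaves every other exponent unchanged. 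Hence, over $p \in \Delta$, the density $f(p_1,\dots,p_K\mid E)$ is proportional to $\prod_{i\in[K]} p_i^{(\alpha_i + G_i) - 1}$, which is — up to its normalizing constant — exactly the density of $Dir(\alpha_1+G_1,\dots,\alpha_K+G_K)$.

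Finally, since $f(\cdot \mid E)$ and the density of $Dir(\alpha_1+G_1,\dots,\alpha_K+G_K)$ are both probability densities on $\Delta$ proportional to the same nonnegative function, their normalizing constants must coincide, so the two densities are equal; this yields $(P_1,\dots,P_K \mid E) \sim Dir(\alpha_1+G_1,\dots,\alpha_K+G_K)$, as claimed. One can double-check the bookkeeping via $\Gamma(z+1) = z\,\Gamma(z)$, which gives $\frac{B(\alpha_1+G_1,\dots,\alpha_K+G_K)}{B(\alpha_1,\dots,\alpha_K)} = \frac{\alpha_e}{\sum_{i\in[K]} \alpha_i} = \Pr[E]$, matching the computation above. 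There is essentially no obstacle here: the only point requiring a line of care is the measure-theoretic legitimacy of Bayes' rule when the prior is continuous and the observation $E$ is a positive-probability discrete event, which is entirely standard.
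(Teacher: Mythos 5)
Your proof is correct: the likelihood of the event $E$ given $(P_1,\dots,P_K)=(p_1,\dots,p_K)$ is indeed $p_e$, Bayes' rule gives a posterior density proportional to $\prod_{i} p_i^{\alpha_i+G_i-1}$ on the simplex, and your normalization check via $\Gamma(z+1)=z\,\Gamma(z)$ is accurate. The paper states this proposition without proof as a well-known conjugacy property of the Dirichlet distribution, and your argument is exactly the standard derivation one would supply for it.
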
  

 Now, we are ready to conclude the following.
\begin{lemma}
\label{lem:third}
$I(X;D_X(m_i)|E_i = m_i,Z_i = z_i,H^{i-1} = h^{i-1}) = O(1).$
\end{lemma}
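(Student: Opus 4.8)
The plan is to follow the template of the proof of \autoref{lem:boundonIstep}: bound the conditional mutual information by the covering inequality \autoref{lem:mutual-inf-covering}, after splitting the index set $[\log n]$ into $O(1)$ groups on each of which the law of $D_X(m_i)$ (conditioned on $E_i=m_i$, $Z_i=z_i$, $H^{i-1}=h^{i-1}$) is either a point mass or a beta distribution whose two parameters are all within a constant factor of each other. The first step is a \emph{reduction to a Dirichlet}: for fixed $x$, conditioning the prior $D_X\sim Dir(\alpha_1,\dots,\alpha_{2^x})$ on $\{H^{i-1}=h^{i-1},Z_i=z_i,E_i=m_i\}$ is the same as taking $Dir(\alpha'_1,\dots,\alpha'_{2^x})$ conditioned on the linear constraints $\{D_X(A_j)=z_j:j\le i\}\cup\{D_X(E_j)=p_j:j<i\}$, where $\alpha'_k=\alpha_k+\#\{j\le i:\text{the sample drawn at step }j\text{ equals }k\}$ and the sample at step $i$ is $m_i$. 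Indeed, conditioning on the history reweights the density by $\prod_{j<i}D_X(E_j)/D_X(A_j)$ and then by $D_X(m_i)/D_X(A_i)$; since each $D_X(A_j)$ ($j\le i$) is already pinned to $z_j$, these factors are proportional to $\prod_{j<i}D_X(E_j)$ and to $D_X(m_i)$, and by \autoref{lem:postsample} multiplying a Dirichlet density by $D_X(k)$ exactly increments $\alpha_k$. By condition (ii) of \autoref{lem:laminarappendix}, no element of $A_i$ is sampled before step $i$, so within $A_i$ we get $\alpha'_{m_i}=1+c^{-m_i}$ and $\alpha'_k=c^{-k}$ for $k\in A_i\setminus\{m_i\}$.

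Next I would \emph{identify the conditional law of $D_X(m_i)$ and build the partition}. Since $E_i=m_i$ is a genuine element, $z_i>0$, so every $x$ of positive posterior mass has $A_i\cap[2^x]\ne\emptyset$, and as $m_i=\min A_i$ this forces $m_i\in[2^x]$ and $D_X(A_i)=D_X(A_i\cap[2^x])=z_i$. Exactly as in \autoref{lem:second}, laminarity together with condition (ii) of \autoref{lem:laminarappendix} implies that no restricted earlier set $A_j\cap[2^x]$ and no earlier sampled singleton is a proper nonempty subset of $A_i\cap[2^x]$; hence $D_X(A_i)=z_i$ is the only constraint touching the interior of $A_i\cap[2^x]$, and \autoref{lem:independir} (with the parameters $\alpha'$) gives $D_X(m_i)/z_i\sim Beta(1+c^{-m_i},\beta_x)$ with $\beta_x:=\sum_{k\in A_i\cap[2^x],\,k\ne m_i}c^{-k}$, independently of everything else. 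Assuming $|A_i|\ge 2$ (else $D_X(m_i)=z_i$ always and the mutual information is $0$), let $a_2=\min(A_i\setminus\{m_i\})$ and partition the posterior support of $x$ into $G_0=\{x:A_i\cap[2^x]=\{m_i\}\}$ (where $\beta_x=0$, so $D_X(m_i)=z_i$ deterministically) and $G_1=\{x:a_2\le 2^x\}$. Since the elements of $A_i$ above $m_i$ are strictly increasing, $\beta_x\in[c^{-a_2},2c^{-a_2}]$ for all $x\in G_1$; in particular $\beta_x<1$ and $|\beta_x-\beta_{x'}|\le c^{-a_2}\le\beta_x$ for $x,x'\in G_1$.

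Then comes the \emph{covering and KL bound}. Applying \autoref{lem:mutual-inf-covering} with $N=2$, $R_1$ the point mass at $z_i$, and $R_2$ the law of $z_i\cdot Beta(1+c^{-m_i},\beta_{x^*})$ for a fixed $x^*\in G_1$, and using scale-invariance of $KL$,
\[
I\big(X;D_X(m_i)\mid E_i=m_i,Z_i=z_i,H^{i-1}=h^{i-1}\big)\le\log 2+\max_{x,x'\in G_1}KL\big(Beta(1+c^{-m_i},\beta_x)\,||\,Beta(1+c^{-m_i},\beta_{x'})\big).
\]
It then remains to plug $\alpha(C^x)=\alpha(C^{x'})=1+c^{-m_i}$, $\alpha(U^x)=\beta_x$, $\alpha(U^{x'})=\beta_{x'}$ into the explicit beta-KL formula \autoref{pair-bound} and check each term is $O(1)$: the $\psi(\alpha(C^x))$-term vanishes; $(\beta_x-\beta_{x'})\psi(\beta_x)$ is bounded by $\beta_x|\psi(\beta_x)|\le 3$ via \autoref{lem:digamma}; $(\beta_{x'}-\beta_x)\psi(1+c^{-m_i}+\beta_x)$ is $o(1)$ because $|\beta_{x'}-\beta_x|\le c^{-a_2}\le c^{-m_i}/c$ while $|\psi(1+c^{-m_i}+\beta_x)|=O(c^{m_i})$ (expand $\psi(1+w)=\psi(w)+1/w$ with $w=c^{-m_i}+\beta_x<1$ and apply \autoref{lem:digamma}); and $\ln\frac{B(1+c^{-m_i},\beta_{x'})}{B(1+c^{-m_i},\beta_x)}=O(1)$ since, using $B(w,y)=\Gamma(w)\Gamma(y)/\Gamma(w+y)$ and $\Gamma(y)=\Gamma(1+y)/y$, it equals $\ln(\beta_x/\beta_{x'})$ plus the logarithm of a ratio of four gamma values whose arguments all lie in $[1,2]$, while $\beta_x/\beta_{x'}\in[1/2,2]$. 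This yields $KL=O(1)$, hence the lemma.

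The step I expect to need the most care is the reduction to a Dirichlet: one must argue rigorously that conditioning on the adaptively chosen \emph{conditional} sampling events $E_j=e_j$ ($j<i$) and $E_i=m_i$ is exactly equivalent to incrementing Dirichlet parameters, which works only because the corresponding $Z_j$'s are simultaneously pinned, collapsing each reweighting factor $D_X(E_j)/D_X(A_j)$ to something proportional to $D_X(E_j)$ so that \autoref{lem:postsample} applies; getting the normalizations and the order of conditioning right is the subtle part. Re-deriving, from laminarity and condition (ii) of \autoref{lem:laminarappendix}, that $A_i\cap[2^x]$ has no proper nonempty descendant in the earlier queried/sampled family — so that \autoref{lem:independir} produces a single beta law for $D_X(m_i)/z_i$ rather than a product of betas — is a secondary point, but it parallels the argument already present in \autoref{lem:second} and should go through directly.
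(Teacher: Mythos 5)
Your proposal is correct and takes essentially the same route as the paper: identify the posterior law of $D_X(m_i)/z_i$ as $Beta(1+\alpha(m_i),\alpha(D^x))$ via \autoref{lem:postsample} (using condition (ii) of \autoref{lem:laminarappendix} and laminarity so that earlier sampled elements do not perturb the Dirichlet parameters inside $A_i$), and then re-run the beta--KL computation of \autoref{lem:boundedkl} with the first parameter shifted by $1$, checking each term of \autoref{pair-bound} is $O(1)$. Your write-up is more explicit than the paper's about the Dirichlet-posterior reduction and about splitting off the degenerate point-mass case, but the underlying argument is identical.
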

\begin{proof}
It suffices to show that for any $x < x' \in \Bar{L}$, we have
\begin{align*}
    KL(q_x(p(m_i)|m_i,z_i,h^{i-1})||q_{x'}(p(m_i)|m_i,z_i,h^{i-1})) &= KL(q_x(\frac{p(m_i)}{z_i}|m_i,z_i,h^{i-1})||q_{x'}(\frac{p(m_i)}{z_i}|m_i,z_i,h^{i-1})) \\
    &= O(1).
\end{align*}

In our case, $C^x = C^{x'} = m_i$, and $D^x$, $D^{x'}$ are subsets of $A^x_i \setminus m_i$ and $A^{x'}_i \setminus m_i$ respectively. 
Recall that
\[
\Big(\frac{D_X(m_i)}{z_i}|x,Z_i = z_i,H^{i-1} = h^{i-1}\Big) \sim Beta(\alpha(m_i),\alpha(D^x))
\]
for any $x \in \Bar{L}$. From~\autoref{lem:postsample}, we have
\[
\Big(\frac{D_X(m_i)}{z_i}|x,E_i = m_i,Z_i = z_i,H^{i-1} = h^{i-1}\Big) \sim Beta(1+\alpha(m_i),\alpha(D^x)).
\]
All we now need to show that the calculations in~\autoref{lem:boundedkl} will work for the current values of Dirichlet parameters --- $\alpha_{m_i} = 1+ \frac{1}{c^{m_i}}$ and $\alpha_j = \frac{1}{c^j}$ for $j \ge m_i +1$ (note, $\alpha_1,\dots,\alpha_{m_i -1}$ do not appear in  the calculations).

The inequalities used in~\autoref{lem:boundedkl} are
\begin{enumerate}
    \item[(i)] $|\alpha(C^x)-\alpha(C^{x'})| \le \sum_{j \ge 2^x+1}\alpha_j  \le \alpha_{2^x} \le \alpha(C^x)$ (this now holds trivially since $\alpha(C^x) = \alpha(C^{x'}$),
    \item[(ii)] $|\alpha(D^{x'})-\alpha(D^{x})| \le \sum_{j \ge 2^x+1}\alpha_j  \le \alpha_{2^x} \le \alpha(D^x)$ (since $m_i$ is neither in $D^x$ nor in $D^{x'}$, this also holds now),
    \item[(iii)] $\frac{\alpha(C^{x'})}{\alpha(C^{x})}, 
\frac{\alpha(D^{x'})}{\alpha(D^{x})}, \frac{\alpha(C^{x'})+\alpha(D^{x'})}{\alpha(C^{x}+\alpha(D^{x})}\in [1,2]$ (easy to see all of them holds now), and
    \item[(iv)] $|\ln \frac{\Gamma(1+\alpha(C^{x'})) \Gamma(1+\alpha(D^{x'})) \Gamma(1+\alpha(C^x)+\alpha(D^x))}{\Gamma(1+\alpha(C^x)) \Gamma(1+ \alpha(D^x)) \Gamma(1+\alpha(C^{x'})+\alpha(D^{x'}))} | = O(1)$ (which also holds now as for $0 \le w \le 2$,  $\Gamma(1+w) \in [0.6,2]$).
\end{enumerate}
The lemma now follows by following the calculations in~\autoref{lem:boundedkl}.
\end{proof}
 From~\autoref{eq:mutual-triplet}, and~\autoref{lem:first},~\ref{lem:second} and ~\ref{lem:third}, we have $I(X;(Z_i,E_i,D_X(E_i))|H^{i-1}) = O(1)$. Hence, as in Section~\ref{sec:L1lowerbound}, by chain rule, we get
 \[
 I(X;(Z_1,E_1,D_X(E_1),\dots,(Z_t,E_t,D_X(E_t))) \le  O(t).
\]
Hence, by Fano's inequality (\autoref{thm:fano}), we have the error probability at least $1 - \frac{O(t)}{\log \log n}$. Hence, $t = \Omega(\log \log n)$. Recall that this lower bound is for the tester that queries only the laminar family of sets. Finally, as in Section~\ref{sec:L1lowerbound}, using~\autoref{lem:laminarappendix}, we get a lower bound of $ \Omega(\log \log \log n)$ for general testers. So we conclude the following.

\cdlowerbound*

\paragraph{Discussion on stronger lower bound with approximate {\CD} queries. }Now, we discuss how we get an $\Omega(\log \log n)$ query lower bound when we have access to a {\CE} oracle and an approximate {\SEval} oracle. We refer to the oracle that, given a subset $S\subseteq [n]$, returns the value of $D(S)$ up to a 2-multiplicative factor as the approximate {\SEval} oracle. Here, we would like to highlight how, by slightly modifying the proof of~\autoref{lemma:loglogn}, we get an $\Omega(\log \log n)$ lower bound for this extended model. The proof of~\autoref{lemma:loglogn} will remain the same, except now we need to modify the encoding argument slightly. More specifically, for any $A_i \subseteq [n]$, let $m_i \in A_i$ be the minimum element (minimum is well-defined since we consider the ordered domain is $[n]$) in $A_i$. Note that for any $x \in [\log n]$, we have $D_x(m_i) \ge  D_x(A_i)/2$ (for the choice of hard distribution $D_x$ in the proof of~\autoref{lemma:loglogn}). Thus, Bob can determine the approximate value of $D_x(A_i)$ without any extra bit sent by Alice. In general, for $(1+\delta)$-multiplicative approximation of the {\SEval} query, Alice needs to send only $O(1/\delta)$ extra bits to Bob to specify the estimate. Hence, for any constant $\delta$, we get the same lower bound of $\Omega(\log \log n)$ as in~\autoref{lemma:loglogn}. This lower bound result is interesting on its own. Apart from that, it also provides a piece of evidence that $\Omega(\log \log n)$ might be the correct lower bound even for (exact) {\CD} (which is a combination of {\CE} and exact {\SEval}) queries. Therefore, we believe that our Conjecture~\ref{conj:cdlowerbound} is true, i.e., $\Omega(\log \log n)$ lower bound holds even for the {\CD} model.

\subsection{An $\Omega(\frac{1}{\epsilon^2 \log (1/\epsilon)})$ lower bound}
\label{sec:additiveLB}
Now we prove a lower bound of $\Omega(\frac{1}{\epsilon^2 \log (1/\epsilon)})$ for additive $\epsilon n$-approximation of support size (\autoref{cor:lbepsilon}). 

\begin{lemma}
\label{lem:lbepsilon}
Any algorithm that, given {\CD} access to a distribution $D$ on $[n]$ and any $\epsilon=\epsilon(n) \in (0,1]$ where $\epsilon(n) \ge \Omega(n^{-1/2})$, estimates the support size $|\supp(D)|$ to an additive $\epsilon n$-factor with probability at least $\frac{2}{3}$, must make $\Omega(\frac{1}{\epsilon^2 \log n})$ queries to the {\CD} oracle.
\end{lemma}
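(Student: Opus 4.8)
The plan is to reduce the \GHD{} problem to additive $\epsilon n$-estimation of the support size in the {\CD} model and then invoke the $\Omega(N)$ lower bound on the public-coin randomized communication complexity of \GHD{} on $N$-bit inputs~\cite{chakrabarti2012optimal}. Set $N := \lfloor c_1/\epsilon^2\rfloor$ for a small enough absolute constant $c_1$ and $k := \lfloor n/(2N)\rfloor$; the hypothesis $\epsilon \ge \Omega(n^{-1/2})$ is exactly what guarantees $k \ge 1$ (for $\epsilon$ above a constant the claimed bound is $\Omega(1/\log n)$, which is trivially met by any nonzero algorithm). Let $M := 2Nk \le n$; reserve $M$ of the domain elements, split them into $2N$ \emph{blocks} $S_1,S_1',\dots,S_N,S_N'$ of size $k$ each, and let the other $n-M$ elements always receive probability $0$. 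Given Alice's string $x\in\{0,1\}^N$ and Bob's string $y\in\{0,1\}^N$, I would form two auxiliary distributions: $D^A_x$, uniform over $\bigcup_i T^A_i$ where $T^A_i := S_i$ if $x_i=1$ and $T^A_i := S_i'$ otherwise; and $D^B_y$, uniform over $\bigcup_i T^B_i$ where $T^B_i := S_i$ if $y_i=0$ and $T^B_i := S_i'$ otherwise. Each has support of size $Nk$. The hard instance is the mixture $D_{x,y} := \tfrac12 D^A_x + \tfrac12 D^B_y$.

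The key structural fact is that $T^A_i = T^B_i$ exactly when $x_i\neq y_i$, so $|\supp(D^A_x)\cap\supp(D^B_y)| = k\,\Delta(x,y)$ and hence $|\supp(D_{x,y})| = |\supp(D^A_x)\cup\supp(D^B_y)| = M - k\,\Delta(x,y)$, a strictly decreasing affine function of the Hamming distance that always lies in the interval $[M/2,M]$; in particular $|\supp(D_{x,y})| = \Theta(n)$, so additive $\epsilon n$ error is the natural notion. A routine check confirms $D_{x,y}$ is a probability distribution: each element has mass in $\{0,\tfrac1M,\tfrac2M\}$ and these sum to $1$. Therefore any estimate $\widehat s$ with $\bigl|\widehat s - |\supp(D_{x,y})|\bigr| \le \epsilon n$ yields $\widehat\Delta := (M - \widehat s)/k$ with $|\widehat\Delta - \Delta(x,y)| \le \epsilon n/k = O(N\epsilon) < \sqrt N$ once $c_1$ is small enough, which suffices to decide whether $\Delta(x,y) \ge N/2 + \sqrt N$ or $\Delta(x,y) \le N/2 - \sqrt N$ (threshold $N/2$). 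So an $\epsilon n$-additive support-size estimator solves $\GHD$ on $N$ bits.

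It remains to turn such an estimator $\mathcal A$ into a cheap protocol. Alice and Bob use a public random string for $\mathcal A$'s internal coins and simulate $\mathcal A$ step by step, maintaining that both players know the current query set. To answer a {\CD} query on $S\subseteq[n]$: Alice sends $a := |S\cap\supp(D^A_x)|$ and Bob sends $b := |S\cap\supp(D^B_y)|$ ($O(\log n)$ bits); if $a+b=0$ both output ``failure'' for this step, and otherwise both recover $D_{x,y}(S) = (a+b)/M$. For the conditional sample they flip a public coin to pick a side---Alice with probability $a/(a+b)$, Bob otherwise---whereupon the chosen player draws $j$ uniformly from its part of $S$ and sends $j$; a one-line computation shows $j$ is then distributed exactly as $D_{x,y}|_S$. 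Finally Alice sends $\mathbf 1[j\in\supp(D^A_x)]$ and Bob sends $\mathbf 1[j\in\supp(D^B_y)]$, from which both recover $D_{x,y}(j)$ and $D_{x,y}(j)/D_{x,y}(S)$. Each {\CD} query thus costs $O(\log n)$ bits, so a $q$-query run followed by the output of the \GHD{} answer is a public-coin $\GHD_N$ protocol of error $\le\tfrac13$ and cost $O(q\log n)$. By~\cite{chakrabarti2012optimal}, $q\log n = \Omega(N) = \Omega(1/\epsilon^2)$, i.e., $q = \Omega\!\bigl(\tfrac{1}{\epsilon^2\log n}\bigr)$. I expect the main obstacle to be the distribution design rather than the bookkeeping: because {\CD} is so powerful (it returns $D(S)$, the sampled point, and its exact probability), the hard instance has to be a mixture of two \emph{locally samplable} uniform distributions, one per player, so that a full {\CD} answer is pinned down by the two counts $|S\cap\supp(D^A_x)|,|S\cap\supp(D^B_y)|$ plus a single index; granting that structure, verifying $D_{x,y}$ is a distribution, the $\widehat\Delta$ estimate, and the $k\ge1$ regime are all routine.
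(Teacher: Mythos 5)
Your proposal is correct and follows essentially the same route as the paper: a reduction from Gap-Hamming in which the hard distribution is split between Alice and Bob so that each {\CD} query can be simulated with $O(\log n)$ bits of communication, after which the communication lower bound of~\cite{chakrabarti2012optimal} yields $q = \Omega\big(\frac{1}{\epsilon^2 \log n}\big)$. The only real difference is cosmetic: the paper invokes the padded $n$-bit instance $\GHD_{n,n/2,g}$ with $g = 3\epsilon n$ and places weights $\frac{1}{|I_x|+|I_y|}$ or $\frac{2}{|I_x|+|I_y|}$ on $I_x \cup I_y$, whereas you carry out the padding yourself by embedding an unpadded $\Theta(1/\epsilon^2)$-bit instance into $[n]$ via size-$k$ blocks and mixing two locally samplable uniform distributions.
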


The proof is based on a reduction from a well-studied communication complexity problem, namely the \emph{Gap-Hamming distance} problem, to the support size estimation problem.

For any two strings $x,y \in \{0,1\}^n$, their Hamming distance, denoted by $d_H(x,y)$, is the number of coordinates where the bit-values of $x,y$ differ, i.e., $d_H(x,y) := |\{i \in [n] \mid x[i] \ne y[i]\}|$. In the \emph{Gap-Hamming distance} problem $\GHD_{n,n/2,g}$,  Alice and Bob are given strings $x,y \in \{0,1\}^n$ respectively (neither of them knows the input of the other party). Their task is to decide between the following two cases:
\begin{itemize}
    \item Yes: $d_H(x,y) \ge n/2 + g$;
    \item No: $d_H(x,y) < n/2 - g$.
\end{itemize}

In this subsection, we consider the randomized two-way communication complexity of the above problem. In a randomized two-way communication protocol, both Alice and Bob are given access to a shared random string (public randomness) and are allowed to communicate with each other (unlike one-way protocol as considered in Section~\ref{sec:lowerbound}, where only Alice was allowed to send a message). Then the question is how many bits need to be sent to solve the above problem with a probability at least $2/3$. The two-way randomized communication complexity is defined to be the minimum number of bits that is communicated in any protocol that solves $\GHD_{n,n/2,g}$.

\begin{theorem}[\cite{chakrabarti2012optimal}]
\label{thm:GHDcommunication}
The two-way randomized communication complexity of $\GHD_{n,n/2,\sqrt{2n}}$ is $\Omega(n)$.
\end{theorem}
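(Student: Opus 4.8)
Following Chakrabarti--Regev (and the alternative arguments of Vidick and of Sherstov), the plan is to reduce the statement to a distributional lower bound and then to an anti-concentration statement about inner products over large subsets of the cube. First I would invoke Yao's minimax principle: it suffices to exhibit a distribution $\mu$ supported on the promise set (pairs with $d_H(x,y) \ge n/2 + \sqrt{2n}$ or $d_H(x,y) < n/2 - \sqrt{2n}$) and show that every deterministic protocol exchanging $c = o(n)$ bits errs under $\mu$ with probability more than $1/3$. A natural choice for $\mu$ is to pick a uniformly random bit and then, conditioned on it, a uniformly random pair $(x,y)$ lying on the corresponding side of the gap; since $d_H(x,y)$ over uniform independent pairs is $n/2 \pm \Theta(\sqrt n)$, both sides carry constant probability and $\mu$ is well defined.

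Next I would use the standard rectangle structure: a $c$-bit deterministic protocol partitions $\{0,1\}^n \times \{0,1\}^n$ into at most $2^c$ combinatorial rectangles, each labelled Yes or No, and (via the corruption bound, equivalently a rectangle-counting argument) it is enough to prove that every rectangle $A \times B$ with $\mu(A \times B) \ge 2^{-o(n)}$ assigns a constant $\mu$-conditional probability to each of the two gap classes, and hence cannot be monochromatic. Passing to $\pm 1$ encodings, where $d_H(x,y) = (n - \langle x,y\rangle)/2$, this is equivalent to the geometric claim that whenever $A, B \subseteq \{-1,1\}^n$ have density $2^{-o(n)}$, the inner products $\langle x,y\rangle$ with $x \in A$, $y \in B$ are spread over a window of width $\Omega(\sqrt n)$ and in particular put a constant fraction of pairs on each side of the gap $n/2 \pm \sqrt{2n}$. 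This anti-concentration over large sets is the technical core: one route is Chakrabarti--Regev's inductive isoperimetric argument on a carefully degraded distribution; another is Vidick's concentration inequality, which shows that for a large set $B$ and a suitably Gaussianized $x$ the overlap $\langle x, b \rangle$ with $b$ uniform in $B$ has standard deviation $\Omega(\sqrt n)$; another is Sherstov's reduction to a ``gap-orthogonality'' problem whose $\Omega(n)$ bound comes from a corruption/pattern-matrix estimate and is then amplified up to $\GHD$.

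Finally I would assemble the pieces: a protocol with $c = o(n)$ communication forces, among its at most $2^c$ rectangles, one of $\mu$-mass at least $2^{-c} = 2^{-o(n)}$, hence of density $2^{-o(n)}$ in each coordinate; by the geometric statement this rectangle sees both gap classes with constant conditional probability, so the protocol is wrong on it with constant probability, contradicting success probability $\ge 2/3$ once the gap is fixed to $g = \sqrt{2n}$; therefore $R_{1/3}(\GHD_{n,n/2,\sqrt{2n}}) = \Omega(n)$. The main obstacle is exactly the anti-concentration step: controlling the spread of inner products over \emph{arbitrary} large sets is where every known proof spends essentially all of its effort (hypercontractivity and isoperimetry, Gaussian noise-sensitivity estimates, or corruption bounds with amplification), and no elementary argument is known; the remaining ingredients --- Yao's principle, the corruption/rectangle bound, the $\pm1$ translation, and the final counting --- are routine.
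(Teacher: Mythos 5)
This statement is not proved in the paper at all: it is imported verbatim as Theorem~2.6 of Chakrabarti and Regev~\cite{chakrabarti2012optimal}, so there is no in-paper argument to compare against. Your outline is a faithful survey of how that external result is actually established (Yao's principle, the rectangle/corruption decomposition, the $\pm 1$ translation of Hamming distance to inner products, and the anti-concentration of $\langle x,y\rangle$ over large rectangles), and you correctly identify that essentially all of the difficulty lives in the anti-concentration step, for which you only cite the three known routes (Chakrabarti--Regev's isoperimetric induction, Vidick's Gaussian argument, Sherstov's gap-orthogonality reduction) rather than prove it. So the proposal is an accurate roadmap but not a self-contained proof; that is exactly the status the theorem has in the paper, which uses it as a black box to feed the padding argument of the subsequent theorem.

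Two small points of care if you ever wanted to flesh this out. First, to conclude $\Omega(n)$ you must handle protocols with $c=\delta n$ bits for some fixed constant $\delta>0$, so the large rectangles you need to control have density $2^{-\Theta(n)}$ (with a small constant), not merely $2^{-o(n)}$; the anti-concentration lemma must be proved at that density, which is precisely why hypercontractivity/isoperimetry enters. Second, the property you need from each large rectangle is genuinely two-sided (constant conditional mass on \emph{both} sides of the gap), which is stronger than the usual one-sided corruption bound; a plain corruption argument against the uniform distribution does not suffice for $\GHD$ directly, which is why Sherstov routes through gap-orthogonality and then amplifies. Neither point affects the verdict: your account matches the literature, and the paper itself supplies no proof to deviate from.
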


By using a standard padding argument as in~\cite{brody2010better, blais2012property}, the above theorem can be generalized as follows.
\begin{theorem}[\cite{chakrabarti2012optimal, brody2010better, blais2012property}]
\label{thm:GHDcommunication-gen}
For any $g \ge \Omega(\sqrt{n})$, the two-way randomized communication complexity of $\GHD_{n,n/2,g}$ is $\Omega((n/g)^2)$.
\end{theorem}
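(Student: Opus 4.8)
The plan is to derive \autoref{thm:GHDcommunication-gen} from the tight bound of \autoref{thm:GHDcommunication} by a deterministic repetition-and-padding reduction: I would embed an instance of $\GHD_{m,m/2,\sqrt{2m}}$ with $m = \Theta((n/g)^2)$ into $\GHD_{n,n/2,g}$, so that any protocol for the latter yields one for the former with the same communication cost and public randomness, whence the $\Omega(m)$ lower bound transfers. First I would fix a small absolute constant $c \in (0,1)$ and set $m := \lfloor c\,(n/g)^2 \rfloor$, a repetition count $k := \lceil g/\sqrt{2m}\,\rceil$, and a padding length $p := n - km$; I would then check that $m \ge 1$ and, for $c$ small enough, that $km \le n$, so that $p \ge 0$, and that $p$ can be made even by perturbing $k$ by an additive constant (adjusting the threshold by $O(1)$, which is harmless since the gap is $\Theta(\sqrt m)$).

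Next, given a protocol $\Pi$ for $\GHD_{n,n/2,g}$, I would have Alice and Bob, on inputs $x',y' \in \{0,1\}^m$, construct $x := (x')^k\,0^p$ and $y := (y')^k\,w$ respectively, where $(x')^k$ denotes $k$ concatenated copies of $x'$ and $w \in \{0,1\}^p$ is a fixed string with exactly $p/2$ ones, and then run $\Pi$ on $(x,y)$ and report its answer. The correctness check is the identity $d_H(x,y) = k\,d_H(x',y') + p/2$ together with $n/2 = km/2 + p/2$, which gives $d_H(x,y) - n/2 = k\,(d_H(x',y') - m/2)$; since $k\sqrt{2m} \ge g$ by the choice of $k$, a Yes instance of $\GHD_{m,m/2,\sqrt{2m}}$ (where $d_H(x',y') \ge m/2 + \sqrt{2m}$) maps to a Yes instance of $\GHD_{n,n/2,g}$ (where $d_H(x,y) \ge n/2 + g$), and likewise No maps to No. Finally, invoking \autoref{thm:GHDcommunication} on the $m$-bit instance yields that $\Pi$ must communicate $\Omega(m) = \Omega((n/g)^2)$ bits.

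The only real obstacle is the parameter bookkeeping: making $k$, $m$, $p$ integers of the correct parity with $km \le n$ while still keeping $m = \Theta((n/g)^2)$, and it is precisely here that the hypothesis $g \ge \Omega(\sqrt n)$ is used --- it is what guarantees $(n/g)^2 = O(n)$, so that the padded instance fits into $n$ bits and $c$ can indeed be chosen with $km \le n$. Everything else is the standard padding argument of~\cite{brody2010better, blais2012property}, so I would keep the write-up brief and only spell out the choice of parameters and the one-line Hamming-distance identity.
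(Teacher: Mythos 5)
Your proposal is correct, and it is exactly the standard repetition-and-padding reduction that the paper itself does not spell out but instead cites (the argument "as in~\cite{brody2010better, blais2012property}"): embed $\GHD_{m,m/2,\sqrt{2m}}$ with $m=\Theta((n/g)^2)$ via $k$ copies plus balanced padding, use the identity $d_H(x,y)-n/2=k(d_H(x',y')-m/2)$, and invoke \autoref{thm:GHDcommunication}. Your parameter bookkeeping, including the observation that $g\ge\Omega(\sqrt{n})$ is what makes $km\le n$ feasible, is the right justification for the stated range of $g$.
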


Now we use the above theorem to prove~\autoref{lem:lbepsilon}.

\begin{proof}[Proof of~\autoref{lem:lbepsilon}]
Suppose there is a randomized algorithm $T$ that, given any $\epsilon > 0$ and a distribution on $[n]$, approximates the support size to an additive $\epsilon n$-factor with probability at least $2/3$ by making at most $t=t(n)$ {\CD} queries. Now consider an instance of $\GHD_{n,n/2,g}$, for any $g \ge \Omega(\sqrt{n})$. Alice and Bob are given $x,y \in \{0,1\}^n$, respectively. From $x$, construct the set $I_x \subseteq [n]$ by including each $i\in [n]$ such that $i$-th bit of $x$ is set to 1. So $I_x = \{i \in [n] \mid x[i]=1\}$. Similarly, construct $I_y \subseteq [n]$ from $y$.

Next, consider the following distribution $D$ over $[n]$
\begin{align*}
   D(j) =
\begin{cases}
\frac{2}{|I_x|+|I_y|} &\text{ if } j\in I_x \cap I_y\\
\frac{1}{(|I_x|+|I_y|)} &\text{ if } j \in I_x \triangle I_y\\
0&\text{ otherwise}
\end{cases} 
\end{align*}
where $\triangle$ denotes the symmetric difference between two sets.

Clearly, $|\supp(D)| = |I_x \cup I_y| = \frac{1}{2} (|I_x|+|I_y|+|I_x \triangle I_y|)$. By the construction of $I_x,I_y$, $|I_x \triangle I_y| = d_H(x,y)$. So
\begin{equation}
    \label{eq:distsupp}
    |\supp(D)| = |I_x \cup I_y| = \frac{1}{2} (|I_x|+|I_y|+d_H(x,y)).
\end{equation}

Next, we argue that Alice and Bob can simulate the algorithm $T$ by setting $\epsilon = \frac{g}{3n}$, using their shared public random string, say $r$, to estimate $|\supp(D)|$, which in turn help them in deciding whether $d_H(x,y) \ge n/2 + g$ or $d_H(x,y) < n/2 - g$. Note, since $I_x, I_y$ are known only by Alice and Bob, respectively, none of them know the distribution $D$.

Let us first argue that given any $S \subseteq [n]$ (known to both Alice and Bob), a {\CD} query to $D$ condition on $S$ can be simulated by Alice and Bob using only $O(\log n)$ bits of communication. Recall performing a {\CD} query means they need to sample from $D$ condition on $S$ and determine the actual and conditional probability of the sampled element. This can be done as follows. Alice first picks an element (say $a$) uniformly at random from $S \cap I_x$ and sends $a$ and the value of $|S \cap I_x|$ to Bob. Bob also selects an element (say $b$) uniformly at random from $S \cap I_y$. Bob then picks $a$ with probability $\frac{|S \cap I_x|}{|S \cap I_x|+|S \cap I_y|}$ and $b$ with probability $\frac{|S \cap I_y|}{|S \cap I_x|+|S \cap I_y|}$. Bob then sends this chosen element $c \in \{a,b\}$ to Alice. It is easy to see that $c$ is picked with a probability equal to $D(c)/D(S)$ (i.e., the conditional probability of picking $c$ from $D$ conditioned on $S$). Further, if $c = a$, then Bob tells Alice whether $a \in I_y$ or not using $O(1)$ bits. If $c = b$, then Alice can by herself check whether $c \in I_x$ or not. In any case, Alice knows if $c \in I_x \cap I_y$ or not. Recall, the actual probability of any element $j$ in the support of $D$ takes only two values -- $\frac{2}{|I_x|+|I_y|}$ if $j \in I_x \cap I_y$, and $\frac{1}{(|I_x|+|I_y|)}$ otherwise. Thus Alice can determine the actual probability of sampled element $c$ as she knows whether $c \in X \cap Y$ or not. Note, 
\[
D(S) = \sum_{j \in S}D(j) = \frac{|S \cap I_x| +| S \cap I_y|}{|I_x|+|I_y|}.
\]
Both Alice and Bob can know the values of $|I_x|,|I_y|,|S \cap I_x|,|S \cap I_y|$ and thus $D(S)$, by communicating only $O(\log n)$ bits. Hence, they can determine the conditional probability of the sampled element as well.

Consider $\epsilon=\frac{g}{3n}$. Let us now describe how Alice and Bob simulate $T$ using their shared random string $r$. Given $r$, the first set $A_1$ on which $T$ conditions (i.e., places {\CD} query), is fixed. Then Alice and Bob simulate this query by communicating $O(\log n)$ bits using the protocol described above. Using the outcome of the first {\CD} query, $T$ decides the second set $A_2$ to condition. In general, at any step $\ell$, $T$ decides the set $A_\ell$ to condition on $D$ based on the outcomes of {\CD} queries so far and the random string $r$. Since $T$ places at most $t$ queries, Alice and Bob need to communicate $O(t \log n)$ bits to simulate $T$. At the end of the simulation, with probability at least $2/3$, Alice and Bob get an estimate $\hat{s}$ of $|\supp(D)|$ up to an $\epsilon n$ additive factor. So, $|\supp(D)| - \epsilon n \le \hat{s} \le |\supp(D)| + \epsilon n$.

For a Yes instance (i.e., $d_H(x,y) \ge n/2 + g$), by~\autoref{eq:distsupp}, $|\supp(D)|\ge \frac{1}{2}(|I_x|+|I_y|+n/2+g)$. Thus the estimated value $\hat{s} \ge \frac{1}{2}(|I_x|+|I_y|+n/2 + g) - \epsilon n$. On the other hand, for a No instance (i.e., $d_H(x,y) < n/2 - g$), by~\autoref{eq:distsupp}, $|\supp(D)| < \frac{1}{2}(|I_x|+|I_y|+n/2-g)$. Thus the estimated value $\hat{s} < \frac{1}{2}(|I_x|+|I_y|+n/2-g) + \epsilon n$. It is now straightforward to verify that for $\epsilon = \frac{g}{3 n}$,
\[
\frac{1}{2}(|I_x|+|I_y|+n/2 - g) + \epsilon n < \frac{1}{2}(|I_x|+|I_y|+n/2 + g) - \epsilon n.
\]
Note both Alice and Bob can know the values of $|I_x|,|I_y|$ by communicating only $O(\log n)$ bits at the beginning. Hence, Alice and Bob can decide whether the given instance of $\GHD_{n,n/2,g}$ is a Yes instance or No instance with probability at least $2/3$ and the total number of bits communicated is at most $O(t \log n)$. Recall, $\epsilon = \frac{g}{3 n}$. Then by~\autoref{thm:GHDcommunication-gen},
\[
t \ge \Omega \Big( \frac{n^2}{g^2 \log n} \Big) = \Omega\Big(\frac{1}{\epsilon^2 \log n} \Big).
\]
\end{proof}

Now, as a direct corollary of~\autoref{lem:lbepsilon}, we get the following.
\begin{corollary}
\label{cor:lbepsilon}
Consider any $\alpha \in (0,1/2) $. Any algorithm that, given {\CD} access to a distribution $D$ on $[n]$ and any $\epsilon=n^{-\alpha}$, estimates the support size $|\supp(D)|$ to an additive $\epsilon n$-factor with probability at least $\frac{2}{3}$, must make $\Omega(\frac{1}{\epsilon^2 \log (1/\epsilon)})$ queries to the {\CD} oracle.
\end{corollary}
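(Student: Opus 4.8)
The plan is to obtain \autoref{cor:lbepsilon} as an immediate consequence of \autoref{lem:lbepsilon} via the substitution $\epsilon = n^{-\alpha}$. First I would verify that this choice of $\epsilon$ meets the hypothesis of \autoref{lem:lbepsilon}: since $\alpha \in (0,1/2)$, we have $n^{-\alpha} \ge n^{-1/2}$ for every $n \ge 1$, so the requirement $\epsilon(n) \ge \Omega(n^{-1/2})$ holds (with hidden constant $1$). Consequently \autoref{lem:lbepsilon} applies verbatim, and any algorithm that, given {\CD} access to $D$ on $[n]$, estimates $|\supp(D)|$ to additive error $\epsilon n$ with probability at least $2/3$ must make $\Omega\big(\frac{1}{\epsilon^2 \log n}\big)$ queries to the {\CD} oracle.

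Second, I would translate $\log n$ into $\log(1/\epsilon)$. From $\epsilon = n^{-\alpha}$ we get $\log(1/\epsilon) = \alpha \log n$, hence $\log n = \tfrac{1}{\alpha}\log(1/\epsilon)$. Since $\alpha$ is a fixed constant, $\log n = \Theta(\log(1/\epsilon))$, and therefore $\frac{1}{\epsilon^2 \log n} = \Theta\big(\frac{1}{\epsilon^2 \log(1/\epsilon)}\big)$. Plugging this into the bound from the previous paragraph gives the claimed $\Omega\big(\frac{1}{\epsilon^2 \log(1/\epsilon)}\big)$ lower bound, which finishes the argument.

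There is no real technical obstacle here: all the substance — the reduction from $\GHD_{n,n/2,g}$, the $O(\log n)$-bit two-way simulation of a single {\CD} query, and the application of \autoref{thm:GHDcommunication-gen} — is already carried out in the proof of \autoref{lem:lbepsilon}. The only minor point to be careful about is that the change of variables absorbs the factor $1/\alpha$ into the constant hidden by $\Omega(\cdot)$, so the bound is uniform in $n$ but its implied constant depends on $\alpha$; this is precisely what the phrasing ``Consider any $\alpha \in (0,1/2)$'' permits, so nothing further is needed.
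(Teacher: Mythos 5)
Your proposal is correct and matches the paper's intent exactly: the paper states this as a ``direct corollary'' of \autoref{lem:lbepsilon}, and the substitution $\epsilon = n^{-\alpha}$ together with $\log n = \Theta(\log(1/\epsilon))$ for fixed $\alpha$ is precisely the intended (and only) step. Nothing is missing.
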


\section{Power of  {\CD} model}
\label{sec:power-SEval}
In this section, we formally illustrate the power of {\CD} model. We actually focus on an even weaker model, namely {\SEval}. Recall, a {\SEval} oracle for a distribution $D$ on $[n]$, on input a subset $S \subseteq [n]$, returns the value of $D(S)$. Since using a {\CD} query we can simulate a {\SEval} query (see Section~\ref{sec:prelims}), any upper bound (on the number of queries) in the {\SEval} model also provides the same upper bound on the {\CD} model. 

\subsection*{Equivalence Testing}
As the first example, we consider the standard \emph{equivalence testing} problem. Given access to two (unknown) distributions $D,D'$, this problem asks whether $D=D'$ or they are "far" from each other. More specifically, the goal is to accept if $D = D'$ and reject if their \emph{total variation distance} $||D-D'||_{TV} = \sum_{i \in [n]}|D(i)-D'(i)| > \epsilon$ (both with high probability). It is known that $\Omega(\sqrt{\log \log n})$ queries are necessary in the {\CO} model~\cite{acharya2015chasm}. On the other hand, it is easy to see that $\Omega(1/\epsilon)$ queries must be needed in the {\CE} model. Below we show that for this problem, it suffices to place only two {\SEval} queries to both $D$ and $D'$. In particular, we prove the following.
\begin{theorem}
\label{thm:seteval-equivalence}
There is an algorithm that given {\SEval} access to distributions $D$ and $D'$ on $[n]$, makes at most two queries to both $D,D'$, and if $D=D'$, always accepts; otherwise, rejects with probability at least $3/4$. 
\end{theorem}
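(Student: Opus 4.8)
The plan is to use a trivial two-query algorithm together with a short probabilistic argument. The algorithm I would use: sample two independent uniformly random subsets $S_1,S_2\subseteq[n]$ (each $i\in[n]$ is included in $S_j$ independently with probability $1/2$), make the {\SEval} queries $D(S_1),D(S_2)$ and $D'(S_1),D'(S_2)$ (two queries to each distribution), and \emph{accept} if and only if $D(S_1)=D'(S_1)$ and $D(S_2)=D'(S_2)$. Note that the algorithm does not even need to look at $\epsilon$.

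Completeness is immediate: if $D=D'$ then $D(S)=D'(S)$ for every $S\subseteq[n]$, so the algorithm accepts with probability $1$.

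For soundness, suppose $D\neq D'$ and put $a_i:=D(i)-D'(i)$, so that $a=(a_1,\dots,a_n)$ is a nonzero vector and, for any $S$, $D(S)-D'(S)=\sum_{i\in S}a_i$. The key step is the claim that for a uniformly random $S\subseteq[n]$, $\Pr_S[\sum_{i\in S}a_i=0]\le 1/2$. To see this, fix an index $j$ with $a_j\neq 0$ and condition on the membership in $S$ of all indices other than $j$; let $v$ be the resulting partial sum $\sum_{i\in S\setminus\{j\}}a_i$. Then $\sum_{i\in S}a_i$ equals $v$ with probability $1/2$ and $v+a_j$ with probability $1/2$, and since $a_j\neq 0$ at most one of $v,v+a_j$ is zero; hence the conditional probability of $\sum_{i\in S}a_i=0$ is at most $1/2$, and averaging over the conditioning gives the claim.

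Since $S_1$ and $S_2$ are independent, the events $\{D(S_1)=D'(S_1)\}$ and $\{D(S_2)=D'(S_2)\}$ are independent, each of probability at most $1/2$ by the claim, so the algorithm accepts with probability at most $1/4$, i.e.\ it rejects with probability at least $3/4$. I do not expect any real obstacle here; the only point worth stressing is that a single random subset does \emph{not} suffice: if $D,D'$ differ on exactly two points $j,k$ with $a_j=-a_k$, then $\sum_{i\in S}a_i=0$ exactly when $S$ contains both or neither of $j,k$, which happens with probability exactly $1/2$, so two independent queries, whose error probabilities multiply, are genuinely needed to bring the failure probability down to $1/4$.
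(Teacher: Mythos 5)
Your proposal is correct and uses the same algorithm and the same key fact as the paper (a uniformly random subset $S$ satisfies $D(S)=D'(S)$ with probability at most $1/2$ when $D\neq D'$, and two independent trials drive the acceptance probability down to $1/4$). The only difference is cosmetic: you establish the key fact by conditioning on the membership of all coordinates other than a fixed $j$ with $D(j)\neq D'(j)$, while the paper pairs each subset $A$ of the disagreement set with $A\cup\{j\}$ via an independent-set argument; these are the same pairing idea.
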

\begin{proof}
 Our algorithm generates two sets $S_1,S_2 \subseteq [n]$, both of them by including each element $i \in [n]$ independently with probability $1/2$. The algorithm  accepts if and only if  $D(S_1) = D'(S_1)$ and $D(S_2) = D'(S_2)$. Clearly if $D=D'$ then the algorithm always accepts. Next, we show that with probability at least $3/4$, either $D(S_1) \neq D'(S_1)$ or  $D(S_2) \neq  D'(S_2)$. 
 
 To prove the above, let us consider a random subset $S \subseteq [n]$ constructed by including each element $i \in [n]$ independently with probability $1/2$. Observe, it suffices to show that if $D \neq D'$, we have $D(S) = D'(S)$ with probability at most $1/2$. 

Let $U = \{i \in [n]\mid D(i) \neq D'(i)\}$ and $N = \{T \subseteq U \mid D(T) \neq D'(T)\}$. We claim that $|N| \ge \frac{2^{|U|}}{2}$. Note, if for any $A \subseteq U$, $D(A) = D'(A)$ then for any $j \in U \setminus A$, we have $D(A \cup \{j\}) \neq D'(A \cup \{j\})$. Consider an undirected graph with  $2^{U} :=\{V \mid V \subseteq U\}$ as the set of vertices and $\{\{V_1,V_2\} \mid V_1 \subseteq V_2, |V_2| = |V_1| +1, V_1,V_2 \subseteq U \}$ as the set of edges. Thus  $2^{U}\setminus N$ is an independent set (there is no edge between any  two vertices in $2^{U}\setminus N$)  in the above graph. Therefore, we have  $|N| \ge \frac{2^{|U|}}{2}$. Recall, $S \subseteq [n]$ is a random subset constructed by including each element $i \in [n]$ independently with probability $1/2$. Let $S_U = S \cap U$ and $S_E = S \cap ([n]\setminus U)$. Since  $|N| \ge \frac{2^{|U|}}{2},$ we have $D(S_U) \neq D'(S_U)$ with probability at least $1/2$. Also  note that $D(S_E) = D'(S_E)$ and hence we have $D(S) = D(S_E) + D(S_U) \neq D'(S_E) + D'(S_U) = D'(S)$ with probability at least $1/2$.
\end{proof}

\subsection*{Testing Grained Distributions}
A distribution is called $m$-grained if the probability of each element is an integer multiple of $\frac{1}{m}$. Goldreich and Ron~\cite{goldreich2021lower} show that testing if a  distribution on $[n]$ elements is $m$-grained, where $m = \Theta(n)$ requires $\Omega(n^c)$ {\Samp} queries for any constant $c < 1$. Grained distributions appear in several prior works, either implicitly (e.g.,~\cite{raskhodnikova2009strong}) or as in~\cite{goldreich2020uniform, goldreich2021lower}. 

Surprisingly, we show that just two {\SEval} queries are sufficient to test grained distributions. The algorithm and proof is similar to as that of Equivalence testing.

\begin{theorem}
There is an algorithm that given {\SEval} access to  distributions $D$ on $[n]$,  makes at most two queries, and if $D$ is $m$-grained, always accepts; otherwise, rejects with probability at least $3/4$. 
\end{theorem}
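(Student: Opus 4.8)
The plan is to reuse, essentially verbatim, the template of the equivalence-testing argument (Theorem~\ref{thm:seteval-equivalence}): draw a few random subsets, ask {\SEval} for their mass, and accept exactly when the observed masses are consistent with $m$-grainedness. Concretely, the algorithm generates two independent sets $S_1,S_2\subseteq[n]$, each obtained by putting every element into the set independently with probability $1/2$, queries $D(S_1)$ and $D(S_2)$, and accepts iff both $D(S_1)$ and $D(S_2)$ are integer multiples of $1/m$. Completeness is immediate: if $D$ is $m$-grained, then for any $S$ the value $D(S)=\sum_{i\in S}D(i)$ is a sum of multiples of $1/m$, hence itself a multiple of $1/m$, so the test always accepts.

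For soundness it suffices to show that a single random set $S$ already detects a non-$m$-grained $D$ with probability at least $1/2$; since $S_1$ and $S_2$ are independent, the rejection probability is then at least $1-(1/2)^2=3/4$. I would set $y_i:=m\cdot D(i)$, so that $D$ is $m$-grained iff every $y_i\in\mathbb{Z}$, and $D(S)$ is a multiple of $1/m$ iff $\sum_{i\in S}y_i\in\mathbb{Z}$. Let $U:=\{i\in[n]:y_i\notin\mathbb{Z}\}$, which is nonempty by assumption, and let $N:=\{T\subseteq U:\sum_{i\in T}y_i\notin\mathbb{Z}\}$. The key claim is $|N|\ge 2^{|U|-1}$, proved exactly as in Theorem~\ref{thm:seteval-equivalence}: if $A\subseteq U$ satisfies $\sum_{i\in A}y_i\in\mathbb{Z}$, then for every $j\in U\setminus A$ we have $\sum_{i\in A\cup\{j\}}y_i\notin\mathbb{Z}$ because $y_j\notin\mathbb{Z}$; hence $2^U\setminus N$ is an independent set in the hypercube graph on subsets of $U$ (adjacency $=$ differ in one element), and that graph has a perfect matching, forcing $|2^U\setminus N|\le 2^{|U|-1}$. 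Writing $S\cap U$ and $S\setminus U$ for the two parts of $S$, the mass $\sum_{i\in S\setminus U}y_i$ is an integer, so $\sum_{i\in S}y_i\notin\mathbb{Z}$ iff $S\cap U\in N$; as $S\cap U$ is a uniformly random subset of $U$, this occurs with probability $|N|/2^{|U|}\ge 1/2$, which is exactly what is needed.

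As in the equivalence-testing case, the only conceptual step is the independent-set bound, and I expect that to be the one place worth a careful sentence: one should note that the Boolean-lattice (hypercube) graph on $2^{|U|}$ vertices has a perfect matching (match along any fixed coordinate), which is what yields the clean factor $1/2$ rather than the weaker bound that mere bipartiteness would give. Everything else — completeness, the amplification from one set to two, and the arithmetic reformulation through $y_i=mD(i)$ — is routine bookkeeping, so the proof should be only a few lines longer than the statement.
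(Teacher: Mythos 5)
Your proposal is correct and follows essentially the same route as the paper: the same two-random-subset algorithm, the same reduction to showing a single random $S$ detects non-grainedness with probability $\ge 1/2$, and the same independent-set argument on the subset lattice of $U$. Your explicit justification of the bound $|N|\ge 2^{|U|-1}$ via a perfect matching along a fixed coordinate fills in a small step the paper leaves implicit, but it is the same argument.
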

\begin{proof}
Our algorithm generates two sets $S_1,S_2 \subseteq [n]$, both of them by including each element $i \in [n]$ independently with probability $1/2$. The algorithm accepts if and only if both $D(S_1)$ and $D(S_2)$ are an integer multiple of $\frac{1}{m}$. Clearly if $D$ is $m$-grained then the algorithm always accepts. Below we claim that if $D$ is not $m$-grained, with probability at least $3/4$, either $D(S_1)$ or $D(S_2)$ is not an integer multiple of $\frac{1}{m}$.

Let $S \subseteq [n]$ be a random set  constructed by including each element $i \in [n]$ independently with probability $1/2$. Observe, it suffices to show that if $D$ is not $m$-grained then with probability at least $1/2$, $D(S)$ is not an integer multiple of $\frac{1}{m}$, the proof of which is similar to that in~\autoref{thm:seteval-equivalence}. Let $U = \{i \in [n]\mid D(i) \thinspace \text{is not an integer multiple of} \thinspace \frac{1}{m}\}$ and $N = \{T \subseteq U \mid P(T) \thinspace \text{is not an integer multiple of} \thinspace \frac{1}{m}\}$. We claim that $|N| \ge \frac{2^{|U|}}{2}$. If for any $A \subseteq U$, $D(A)$ is an integer multiple of $\frac{1}{m}$ then for any $j \in U \setminus A$,  $D(A \cup \{j\})$ is not an integer multiple of $\frac{1}{m}$. As argued in~\autoref{thm:seteval-equivalence}, we have  $|N| \ge \frac{2^{|U|}}{2}$.
Recall, $S \subseteq [n]$ is a random subset constructed by including each element $i \in [n]$ independently with probability $1/2$. Let $S_U = S \cap U$ and $S_E = S \cap ([n]\setminus U)$. Since $|N| \ge \frac{2^{|U|}}{2}$, $D(S_U)$ is not an integer multiple of $1/m$ with probability at least $1/2$. Also note that $D(S_E)$ is an integer multiple of $1/m$ and hence $D(S) = D(S_E) + D(S_U)$ is not an integer multiple of $1/m$ with probability at least $1/2$.
\end{proof}


\subsection*{Estimating $L_2$ norm}
The third example we consider is the classical (squared) $L_2$ norm estimation problem for a distribution. The $L_2$ norm of a distribution $D$ over $[n]$, denoted by $\ell_2(D)$, is defined as $\ell_2(D) := (\sum_{j \in [n]} D(j)^2)^{1/2}$. In this problem, given oracle access to a distribution $D$ on $[n]$, we are asked to estimate $\ell_2^2(D)$ up to a multiplicative $(1+\epsilon)$-factor. To the best of our knowledge, this problem has been studied previously only in the {\Samp} model~\cite{goldreich2011testing}, wherein it was shown that $\Omega(\frac{\sqrt{n}}{\epsilon^2})$ queries are required.

Surprisingly, we show that $\Omega(\sqrt{n})$ queries are also required in {\CO} and {\CE} models. Further, we show that in the {\SEval} model (hence {\CD} model too), only $O(1/\epsilon^2)$ queries are sufficient. Our algorithm (and also the analysis) resembles that of estimating $L_2$ norm in the streaming model~\cite{alon1999space}. We also show an almost tight lower bound of $\Omega(\frac{1}{\epsilon^2 \log \frac{1}{\epsilon}})$ for the {\CD} model.
\begin{theorem}
There is an algorithm that given {\SEval} access to  distributions $D$ on $[n]$ and $\epsilon > 0$, estimates $\ell_2^2(D)$ within a multiplicative $(1+\epsilon)$-factor with  probability at least $3/4$, by making at most $O(\frac{1}{\epsilon^2})$ queries.
\end{theorem}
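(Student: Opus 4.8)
The plan is to adapt the Alon--Matias--Szegedy $F_2$ sketch to the {\SEval} model. The key point is that a single {\SEval} query already evaluates a random $\pm 1$ linear combination of the probabilities: sample $\sigma = (\sigma_1,\dots,\sigma_n) \in \{-1,+1\}^n$ uniformly at random, let $S_\sigma := \{ i \in [n] : \sigma_i = +1 \}$, and query $D(S_\sigma)$. Since $D([n]) = 1$, the quantity $Y := 2D(S_\sigma) - 1 = \sum_{i \in [n]} \sigma_i D(i)$ is recovered from one query, and this is exactly the AMS sketch value for the vector $(D(1),\dots,D(n))$.

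First I would carry out the moment computation. Since the $\sigma_i$ are independent with mean $0$, $\E[Y^2] = \sum_{i} D(i)^2 = \ell_2^2(D)$, so $Y^2$ is an unbiased estimator of $\ell_2^2(D)$. Expanding the fourth moment, $\E[Y^4] = \sum_i D(i)^4 + 3\sum_{i \ne j} D(i)^2 D(j)^2 \le 3\big(\sum_i D(i)^2\big)^2 = 3\,\ell_2^2(D)^2$, hence $\mathrm{Var}(Y^2) \le 2\,\ell_2^2(D)^2$. Note this variance bound is \emph{relative}, so it does not degrade when $\ell_2^2(D)$ is small (e.g.\ $1/n$ for the uniform distribution); this is what lets us get a multiplicative guarantee. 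Only $4$-wise independence of the $\sigma_i$ is actually used, although full independence is harmless for query complexity.

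Then the standard averaging argument finishes it: draw $k = \lceil 8/\epsilon^2 \rceil$ independent sign vectors $\sigma^{(1)},\dots,\sigma^{(k)}$, make the corresponding $k$ {\SEval} queries to obtain $Y_1,\dots,Y_k$, and output $\widehat F := \frac1k \sum_{r=1}^k Y_r^2$. Then $\E[\widehat F] = \ell_2^2(D)$ and $\mathrm{Var}(\widehat F) \le 2\,\ell_2^2(D)^2 / k$, so Chebyshev's inequality gives $\Pr\big[\, |\widehat F - \ell_2^2(D)| > \epsilon\, \ell_2^2(D) \,\big] \le 2/(k\epsilon^2) \le 1/4$. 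Hence with probability at least $3/4$ the output lies in $[(1-\epsilon)\ell_2^2(D), (1+\epsilon)\ell_2^2(D)]$, which is a multiplicative $(1+\epsilon)$-approximation (after rescaling $\epsilon$ by a constant if one wants the exact $\ell_2^2(D)/(1+\epsilon) \le \widehat F \le (1+\epsilon)\ell_2^2(D)$ convention), using $k = O(1/\epsilon^2)$ {\SEval} queries.

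I do not anticipate a genuine obstacle here; the only things to get right are the relative fourth-moment bound $\mathrm{Var}(Y^2) \le 2\,\ell_2^2(D)^2$ and the observation that one {\SEval} query per sign vector suffices because $D([n]) = 1$ is known a priori. The resemblance to the equivalence-testing argument (\autoref{thm:seteval-equivalence}) is that both use a uniformly random subset of $[n]$ and exploit the value $D(S)$; here we additionally center, square, and average these values.
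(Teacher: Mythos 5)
Your proposal is correct and is essentially the paper's own argument: both implement the AMS $F_2$ sketch via {\SEval} queries on the random sign-partition of $[n]$, prove the relative variance bound $\mathrm{Var}(Y^2)=O(\ell_2^2(D)^2)$ from the fourth moment, and average $O(1/\epsilon^2)$ repetitions with Chebyshev. The only (harmless) differences are that you save a factor of two by using $D(S^-)=1-D(S^+)$ to get the signed sum from a single query per repetition, whereas the paper queries both $S^+$ and $S^-$, and you use fully independent signs where the paper uses a $4$-wise independent hash.
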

\begin{proof}
Consider a $4$-wise independent hash function $h:[n] \rightarrow \{-1,1\}$. Let $S^+ = \{ j \in [n] \mid h(j) = 1\}$ and $S^-=\{j \in [n]\mid h(j) = -1\}$. Then we compute $X = (D(S_+) - D(S_-))^2$. We repeat the above procedure independently for $t = 4/\epsilon^2$ times and output the mean $\Bar{X}$ of the values of the random variable $X$ in each iteration. (Let $X_i$ denote the value of $X$ at the $i$-th iteration. Then we compute $\Bar{X} = \frac{\sum_{i=1}^t X_i}{t}$.) Note, the total number of {\SEval} queries is $8/\epsilon^2$.

We will show that at any iteration, $\E[X] = \sum_{j \in [n]}  D^2(j)$ and $Var[X] \le (\E[X])^2$. Then clearly we have $\E[\Bar{X}] = \E[X]$ and $Var[\Bar{X}] = \frac{Var[X]}{t}$. Hence, by Chebyshev inequality, we have 
\[
\Pr[|\Bar{X} - \sum_{j \in [n]}D(j)^2| > \epsilon \sum_{j \in [n]}D(j)^2] \le \frac{Var[\Bar{X}]}{\epsilon^2 (\E[\Bar{X}])^2} \le \frac{1}{4}.
\]

So, it only remains to prove that $\E[X] = \sum_{j \in [n]}  D^2(j)$ and $Var[X] \le (\E[X])^2$.
Note that $X = (D(S_+) - D(S_-))^2 = (\sum_{j \in [n]} h(j) D(j))^2 = \sum_{j \in [n]}  D^2(j) + 2 \sum_{i < j}h(i)h(j)D(i)D(j)$ (since $h^2(j) = 1$ for all $j \in [n]$). Therefore, $\E[X] =  \sum_{j \in [n]}  D^2(j) + 2 \sum_{i < j}\E[h(i)h(j)D(i)D(j)]$.  Since $h$ is $4$-wise independent, $\E[h(j)] = 0$ for any $j \in [n]$ and also $\E[h(i)h(j)] = \E[h(i)]\cdot \E[h(j)]$ = 0 for any $i \neq j$. Hence, $\E[X] = \sum_{j \in [n]}  D^2(j)$.  The variance $Var[X]$ can similarly be bounded as follows. Note that since $h$ is $4$-wise independent, $\E[h(i)h(j)h(k)h(l)] = \E[h(i)] \E[h(j)] \E[h(k)] \E[h(l)] = 0$ if any of $i,j,k,l \in [n]$ appears once or thrice.
\begin{align*}
    & Var[X] = \E[X^2] - (\E[X])^2\\
    & = \E[(\sum_{j \in [n]} h(j) D(j))^4] - (\sum_{j \in [n]}  D^2(j))^2 \\
    & = \sum_{i,j,k,l \in [n]^4} D(i)D(j)D(k)D(l) \E[h(i)h(j)h(k)h(l)] - (\sum_{j \in [n]}  D^2(j))^2 \\
    & = \sum_{j \in [n]}D^4(j) + \frac{1}{2} \binom{4}{2} \sum_{i\neq j} D^2(i) D^2(j) - (\sum_{j \in [n]}  D^2(j))^2\\
    & = \sum_{i\neq j} D^2(i) D^2(j) \le (\E[X])^2.
\end{align*}

\end{proof}

Next, we show that the above upper bound result for the squared $\ell_2$ norm estimation is almost tight.
\begin{theorem}
    \label{thm:lb-ceval-l2norm}
    Consider any $\alpha \in (0,1/2) $. Any algorithm that, given {\CD} access to a distribution $D$ on $[n]$ and any $\epsilon=n^{-\alpha}$, estimates $\ell_2^2(D)$ within a multiplicative $(1+\epsilon)$-factor with  probability at least $2/3$, must make $\Omega(\frac{1}{\epsilon^2 \log (1/\epsilon)})$ queries to the {\CD} oracle.
\end{theorem}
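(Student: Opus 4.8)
The plan is to reduce from the Gap-Hamming distance problem, reusing essentially the same construction as in the proof of \autoref{lem:lbepsilon}. Given an instance of $\GHD_{n,n/2,g}$ with strings $x,y\in\{0,1\}^n$ and gap parameter $g:=\lceil 5\epsilon n\rceil$, let $I_x=\{i:x[i]=1\}$, $I_y=\{i:y[i]=1\}$, $s=|I_x|+|I_y|$, and consider the distribution $D$ on $[n]$ that puts mass $2/s$ on $I_x\cap I_y$, mass $1/s$ on $I_x\triangle I_y$, and $0$ elsewhere (if $s=0$ the instance is a trivial No instance, detected by exchanging $|I_x|,|I_y|$ at the start). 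This is exactly the distribution used in \autoref{lem:lbepsilon}, so I would reuse verbatim the $O(\log n)$-bit two-party simulation of a single \CD query on $D$ described there: each query to $T$ costs $O(\log n)$ bits of two-way communication.

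Next I would record the key identity. Writing $a=|I_x\cap I_y|$ and $d_H=|I_x\triangle I_y|=d_H(x,y)$, we have $2a+d_H=|I_x|+|I_y|=s$, hence
\[
\ell_2^2(D)\;=\;a\Big(\tfrac{2}{s}\Big)^2+d_H\Big(\tfrac{1}{s}\Big)^2\;=\;\frac{4a+d_H}{s^2}\;=\;\frac{2s-d_H(x,y)}{s^2}.
\]
Thus $\ell_2^2(D)$ is a known affine function of $d_H(x,y)$, since both parties learn $s$ from an initial $O(\log n)$-bit exchange. If a $q$-query \CD algorithm $T$ (run with parameter $\epsilon=n^{-\alpha}$) returns $\hat\ell$ with $\ell_2^2(D)/(1+\epsilon)\le\hat\ell\le(1+\epsilon)\,\ell_2^2(D)$, then $\hat d:=2s-s^2\hat\ell$ satisfies $|\hat d-d_H(x,y)|=s^2\,|\hat\ell-\ell_2^2(D)|\le s^2\cdot\epsilon\,\ell_2^2(D)=\epsilon(2s-d_H)\le 2\epsilon s\le 4\epsilon n$, using $\ell_2^2(D)\le 2/s$ and $s\le 2n$. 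Since $g>4\epsilon n$, this additive error is strictly smaller than $g$, so outputting ``Yes'' iff $\hat d\ge n/2$ decides the $\GHD$ instance correctly whenever $T$ succeeds, i.e. with probability at least $2/3$.

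Putting the pieces together: Alice and Bob share public randomness, run $T$, and simulate each of its $q$ \CD queries with $O(\log n)$ bits (plus $O(\log n)$ bits at the outset for $|I_x|,|I_y|$), giving a two-way randomized protocol for $\GHD_{n,n/2,g}$ using $O(q\log n)$ bits. Because $\alpha<1/2$ we have $g=\Theta(n^{1-\alpha})=\omega(\sqrt n)$, so \autoref{thm:GHDcommunication-gen} forces the communication to be $\Omega((n/g)^2)=\Omega(1/\epsilon^2)$, whence $q\log n=\Omega(1/\epsilon^2)$, i.e. $q=\Omega\big(\tfrac{1}{\epsilon^2\log n}\big)=\Omega\big(\tfrac{1}{\epsilon^2\log(1/\epsilon)}\big)$ since $\log n=\tfrac1\alpha\log(1/\epsilon)$. (For large $n$, i.e. small $\epsilon$, we also have $g\le n/2$, so Yes instances exist; this is the only place largeness of $n$ is used, exactly as in \autoref{cor:lbepsilon}.)

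The step I expect to need the most care is the error bookkeeping in the second paragraph: unlike the support-size reduction, where $|\supp(D)|$ is an \emph{additive} shift of $d_H(x,y)$ by a quantity known to both parties, here $\ell_2^2(D)$ depends on $d_H$ through a $1/s^2$ scaling with $s=|I_x|+|I_y|$ not fixed. It is for this reason that I keep $s$ explicit, use the bound $\ell_2^2(D)\le 2/s$ together with $s\le 2n$ to turn the multiplicative $\epsilon$-error in $\ell_2^2$ into an $O(\epsilon n)$ additive error in $d_H$, and take the $\GHD$ gap $g=\Theta(\epsilon n)$ with a comfortable constant so that it dominates this error. Everything else — the $O(\log n)$-bit query simulation, the invocation of \autoref{thm:GHDcommunication-gen}, and the $\log n=\Theta(\log(1/\epsilon))$ substitution — is identical to the proof of \autoref{lem:lbepsilon}.
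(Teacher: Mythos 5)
Your proposal is correct and follows essentially the same route as the paper: the paper's proof of this theorem simply observes that for the distribution $D$ from the proof of Lemma~\ref{lem:lbepsilon} one has $\ell_2^2(D)=\frac{2(|I_x|+|I_y|)-d_H(x,y)}{(|I_x|+|I_y|)^2}$ and that the Yes/No instances of $\GHD$ therefore differ by a multiplicative $(1+\epsilon)$ gap, so the same reduction applies. Your only deviation is in the bookkeeping — converting the multiplicative error in $\ell_2^2(D)$ into an additive $O(\epsilon n)$ error in $d_H(x,y)$ with $g=\Theta(\epsilon n)$ chosen with a comfortable constant — which is, if anything, slightly more careful than the paper's ratio argument.
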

\begin{proof}
The proof is based on the observation that, in the proof of the Lemma~\ref{lem:lbepsilon}, we have  $\ell_2^2(D) = \sum_{j}D(j)^2 = \frac{2(|I_x|+|I_y|)-d_H(x,y)}{(|I_x|+|I_y|)^2}$. The ratio of $\ell_2^2(D)$ for Yes and No instances is 
    \[
\frac{2(|I_x|+|I_y|)-(n/2 - g)}{2(|I_x|+|I_y|)-(n/2+g)} \ge 1+\epsilon
    \]
    for the same values of parameters $\epsilon$ and $g$. Thus the lower bound also applies to estimates $\ell_2^2(D)$ within a multiplicative $(1+\epsilon)$-factor.
\end{proof}
\begin{theorem}
    \label{thm:lb-l2norm-conpr}
    Any algorithm that, given {\CE} ({\CO}) access to a distribution $D$ on $[n]$  estimates $\ell_2^2(D)$ within a multiplicative $5/4$-factor with  probability at least $2/3$, must make $\Tilde{\Omega}(\sqrt{n})$ queries to the {\CE} ({\CO}) oracle.
\end{theorem}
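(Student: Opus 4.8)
Since {\CE} simulates {\CO}, it suffices to prove the bound for {\CE}, and by Yao's principle it is enough to fix a distribution over inputs on which every deterministic {\CE}-algorithm making $t$ queries errs with probability $>1/3$ unless $t=\widetilde\Omega(\sqrt n)$. I put mass $1/2$ on the uniform distribution $D_1$ on $[n]$, for which $\ell_2^2(D_1)=1/n$, and mass $1/2$ on a random distribution $D_2$ drawn as follows: pick $j^\star\in[n]$ uniformly and a set $B\subseteq[n]\setminus\{j^\star\}$ with $|B|=\sqrt n-1$ uniformly, and set $D_2(j^\star)=1/\sqrt n$, $D_2(i)=1/n$ for $i\notin B\cup\{j^\star\}$, and $D_2(i)=0$ for $i\in B$. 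Choosing the bulk probability to be exactly $1/n$ is the point: a sample from the bulk of $D_2$ looks \emph{identical} to a sample from $D_1$, including the revealed probability, so revealed probabilities do not help unless the algorithm actually hits the heavy element $j^\star$ or a deleted element of $B$. A one-line computation gives $\ell_2^2(D_2)=1/n+(n-\sqrt n)/n^2=2/n-o(1/n)$, so for large $n$ the two values differ by a factor approaching $2>(5/4)^2$; hence any algorithm that $(1+\tfrac14)$-approximates $\ell_2^2$ with probability $\ge 2/3$ must distinguish $D_1$ from a random $D_2$ with probability $\ge 2/3$. (Note this cannot be routed through a communication reduction of the kind used in Lemma~\ref{lem:lbepsilon} at constant gap: there the two players could sample from the support using their own sets, which is powerful enough to solve the underlying problem in $O(1)$ queries --- so a direct argument is needed.)

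\textbf{The coupling.} Run the algorithm on $D_1$ and on a freshly drawn $D_2$ in lockstep, keeping the two transcripts (sequences of triples $(\text{queried set},\ \text{returned element or ``failure''},\ \text{returned probability})$) identical as long as possible. For a queried set $S$ the two runs can be coupled perfectly unless one of the following \emph{revealing} events occurs: (i) in the $D_2$-run the returned element is $j^\star$ (its revealed probability $1/\sqrt n\neq 1/n$ exposes $D_2$); (ii) in the $D_1$-run the returned element lies in $B$ (impossible under $D_2$, so no coupling can match it); or (iii) $S\subseteq B$, so $D_2$ reports ``failure'' while $D_1$ does not. The heart of the proof is that, conditioned on the coupled transcript so far, each event has probability $O(1/\sqrt n)$ at the current step, for \emph{every} adaptively chosen $S$: for (i), $\Pr[j^\star\in S]=|S|/n$ and, given $j^\star\in S$, $j^\star$ is returned with probability $\tfrac{1/\sqrt n}{D_2(S)}\le\tfrac{1/\sqrt n}{(|S\setminus B|-1)/n}$, and a short case analysis on $|S|$ shows the product is always $O(1/\sqrt n)$; for (ii), $\E\big[|S\cap B|\big]/|S|=|B|/(n-1)=O(1/\sqrt n)$; for (iii), $\Pr[S\subseteq B]\le(\sqrt n/n)^{|S|}\le 1/\sqrt n$, maximized at $|S|=1$. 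A union bound over the $t$ steps bounds the total variation distance of the two transcript distributions by $O(t/\sqrt n)$ (up to the polylog factor discussed next), hence of the outputs by the same; distinguishing with advantage $\ge 1/3$ then forces $t=\widetilde\Omega(\sqrt n)$, and the bound for {\CO} follows a fortiori.

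\textbf{The main obstacle.} The delicate point is the phrase ``conditioned on the coupled transcript so far'': $(j^\star,B)$ starts uniform, but each {\CE} answer tilts its posterior, and an adversary could repeatedly probe a fixed element $i$ via queries $\{i,y_1\},\{i,y_2\},\dots$ hoping to drive the posterior of ``$i\in B$'' toward $1$, after which a single query to $\{i\}$ would separate the two worlds. I would control this by observing that a coupled transcript which probes $i$ many times without ever returning it has probability exponentially small in the number of probes in the $D_1$-run; so on a $1-o(1)$ event the posterior of $(j^\star,B)$ stays within $\mathrm{polylog}(n)$ factors of its prior on every set that enters the estimates above, which is exactly where the $O(\log n)$-type loss (the tilde) appears --- without it one would get the cleaner $\Omega(\sqrt n)$. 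As a sanity check, the algorithm cannot instead shortcut the problem by estimating $|\supp(D)|$, since $|\supp(D_1)|$ and $|\supp(D_2)|$ differ only by a $(1+\Theta(n^{-1/2}))$-factor, which by Lemma~\ref{lem:lbepsilon} already costs $\Omega(n/\log n)\gg\sqrt n$ queries to detect.
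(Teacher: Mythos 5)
Your construction is exactly the paper's hard instance (uniform $D_1$ versus a random $D_{k,G}$ with one element of mass $1/\sqrt{n}$ and about $\sqrt{n}$ zeroed-out elements), and your argument --- Yao's principle plus indistinguishability of the {\CE} transcripts because every revealing event has probability $O(1/\sqrt{n})$ per query --- is essentially the paper's proof. The only difference is bookkeeping: the paper bounds, via Chernoff bounds and a union bound over the queried sets, the fraction of ``bad'' elements in every query, whereas you run an explicit coupling and control the posterior of $(j^\star,B)$ under adaptivity; both routes give the same $\Tilde{\Omega}(\sqrt{n})$ bound.
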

\begin{proof}
 Let $D_1$ be a uniform distribution on $[n]$, i.e, $D_1(j) = 1/n$ for all $j \in [n]$. Clearly,  $\ell_2^2(D_1) = 1/n$. 
For any $k \in [n]$ and any $G \subseteq [n]\setminus \{k\}$ such that $|G| = n-\sqrt{n}$, consider a distribution $D_{k,G}$ on $[n]$, defined as follows.
\begin{align*}
   D_{k,G}(j) =
\begin{cases}
\frac{1}{\sqrt{n}} &\text{ if } j = k\\
\frac{1}{n} &\text{ if } j \in G \\
0&\text{ otherwise}.
\end{cases} 
\end{align*}
Let $\mathcal{D}_2 = \bigcup_{ k,G}  D_{k,G}$ be the set of distributions over all possible values of $k$ and $G$. Note that for any $D_2 \in \mathcal{D}_2$, we have $\ell_2^2(D_2) = 1/n + (n-\sqrt{n})\cdot \frac{1}{n^2} \ge \frac{5}{4n}$.

Consider a distribution $D$ randomly selected from distributions in $D_1 \cup  \bigcup_{ k,G} D_{k,G}$ as follows. We pick $D_1$ with probability $1/2$ and any distribution $D_{k,G} \in \mathcal{D}_2$ with probability $\frac{1}{2 |\mathcal{D}_2|}$. As before, from Yao's theorem, it suffices to show $\Tilde{\Omega}(\sqrt{n})$ lower bound on the number of queries for any deterministic tester $T$ that correctly answers whether $D = D_1$ or $D \in \mathcal{D}_2$ with probability $2/3$, for randomly chosen distribution $D$ as described above.


For any fixed $S \subseteq [n]$, let $N^g_S$ be the random variable denoting the cardinality of the set $|\{j \in S: D(j) = 1/n \}|$ and $N^b_S$ be the random variable denoting the cardinality of the set $|\{j \in S: D(j) \in \{0,1/\sqrt{n}\} \}|$. 
 We want to show that $\frac{N^b_S}{N^b_S+N^g_S} > \frac{(\log n)^2}{\sqrt{n}}$ holds with probability at most $1/n^2$.
This is trivially true when $D = D_1$. Given that $D \in \mathcal{D}_2$, we have $\mu^g_S = \E[N^g_S] = |S|(1-1/\sqrt{n})$ and $\mu^b_S = \E[N^b_S] = |S|/\sqrt{n}.$ Further,  
\begin{enumerate}
    \item Suppose $|S| \ge \frac{\sqrt{n}}{(\log n)^2}$.  By Chernoff's bound,  we have $\Pr[N^b_S \ge (\log n)^2 \mu^b_S] \le exp(-\frac{1}{3}(\log n)^4 \mu^b_S)\le exp(-\frac{1}{3}(\log n)^2) < 1/n^2$ and $\Pr[N^g_S \le \mu^g_S/2] \le exp(-\frac{ \sqrt{n}}{10 (\log n)^2}) < 1/n^2$. Hence we have, $\frac{N^b_S}{N^b_S+N^g_S} > \frac{(\log n)^2}{\sqrt{n}}$ with probability at most $1/n^2$.
    \item If $|S| \le \frac{\sqrt{n}}{(\log n)^2}$ then again by Chernoff bound, we have  $\Pr[N^b_S \ge 1] \le exp(-(1/\mu^b_S -1)^2 \mu^b_S) \le exp(-(\log n)^2/4) < 1/n^2$. Hence , we have $\frac{N^b_S}{N^b_S+N^g_S} > \frac{(\log n)^2}{\sqrt{n}}$ with probability at most $1/n^2$.
\end{enumerate}
If $T$ makes $q < \frac{\sqrt{n}}{(\log n)^2}$ queries then probability that for all $q$ query sets $S$, we have  $\frac{N^b_S}{N^b_S+N^g_S} \le \frac{(\log n)^2}{\sqrt{n}}$  is at least $1 - q/n^2 \ge 1 - 1/n$. Further, probability that any of the $q$ {\CE} (or {\CO}) queries returns an element with probability value $1/\sqrt{n}$ is at most $q \cdot  \frac{(\log n)^2}{\sqrt{n}} < 1/3$ for $q < \frac{\sqrt{n}}{3 (\log n)^2}$. Hence, with probability at least $2/3$, the tester $T$ can not distinguish between $D = D_1$ or $D \in \mathcal{D}_2$ if  $q < \frac{\sqrt{n}}{3 (\log n)^2}$ (as the outcomes at every step will be an element with probability $1/n$ for both distributions).

\end{proof}
\section{Bounded-Set Conditioning}
\label{sec:bounded-cond}
In this section, we discuss the power of conditioning when the conditioned set is of bounded size, in light of the support size estimation problem. A \emph{$k$-bounded conditional}, in short ${\CO}_k$, oracle (similarly ${\CE}_k$ and ${\CD}_k$) allow the input set $S \subseteq [n]$ to be of size at most $k$. Additionally, we also allow algorithms to access {\Samp} oracle (to make sure that it is at least as powerful as the {\Samp} model). Note, this is indeed the case in the literature while allowing \emph{pair-conditioning} ({\PCo}), i.e., when $k=2$ (e.g.~\cite{chakraborty2016power, canonne2014testing, canonne2014aggregate}).

In this bounded conditioning model, we show that to get an estimation of the support size up to a constant factor, we need at least $\Omega(n/k)$ queries, which is in contrast with the $O(\log \log n)$ upper bounded in the standard (unbounded) conditioning model.
\begin{theorem}
\label{thm:boundedsetUB}
Any  algorithm that, given {\Samp} access and ${\CD}_k$ access to a distribution $D$ on $[n]$, approximates the support size $|\supp(D)|$ within a multiplicative $6/5$-factor with probability at least $2/3$, must make at least $\Omega(\frac{n}{k})$ queries.
\end{theorem}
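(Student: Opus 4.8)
The plan is to establish the lower bound via Yao's principle against a hard family of distributions all of \emph{constant} support size, in which the two competing hypotheses differ only by a minuscule amount of probability mass planted at random locations that a sublinear number of size-$k$ conditioning queries cannot locate. Fix a small constant $m$ (one may already take $m=1$). I first dispose of the trivial regime: if $k \ge n/(16m)$ then $n/k = O(1)$ and the bound holds since any algorithm makes at least one query, so assume $k < n/(16m)$ and put $\eta := 8mk/n \in (0,\tfrac12)$.

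The input distribution is as follows. Choose $a\in[n]$ uniformly; $a$ always receives mass $1-\eta$. With probability $\tfrac12$ output $D^{(1)}$: draw a uniformly random $m$-subset $B\subseteq[n]\setminus\{a\}$ and give each element of $B$ mass $\eta/m$, so $|\supp(D^{(1)})|=m+1$. With probability $\tfrac12$ output $D^{(2)}$: draw a uniformly random $2m$-subset $B\subseteq[n]\setminus\{a\}$ and give each element mass $\eta/(2m)$, so $|\supp(D^{(2)})|=2m+1$. Since $\tfrac{2m+1}{m+1}>(6/5)^2$ for every $m\ge1$, the multiplicative $6/5$-approximation intervals around $m+1$ and around $2m+1$ are disjoint, so an algorithm that $6/5$-approximates the support size yields, after thresholding its numerical output at a point between the two intervals, a distinguisher between $D^{(1)}$ and $D^{(2)}$ with the same success probability. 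Hence by Yao's principle it suffices to show that every deterministic algorithm making at most $q:=\lfloor n/(80mk)\rfloor = \Theta(n/k)$ queries errs with probability more than $1/3$ under this input distribution.

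The core of the argument is that with probability at least $9/10$ none of the $q$ queries \emph{touches} a light element, where a {\Samp} query touches iff its returned sample lies in $B$, and a ${\CD}_k$ query on a set $S$ with $|S|\le k$ touches iff $S\cap B\neq\emptyset$ (if neither occurs the {\CD} oracle returns $(a,1-\eta,1)$ when $a\in S$ and ``failure'' when $a\notin S$, identically in $D^{(1)}$ and $D^{(2)}$). I would prove this by a chain of conditional probabilities: conditioned on no touch among the first $i-1$ queries, the transcript so far is a deterministic function of $a$ and of which query sets contained $a$, and the conditional law of $B$ is uniform over $|B|$-subsets of $[n]$ minus $\{a\}$ and minus the (at most $(i-1)k\le qk\le n/2$) elements appearing in earlier ${\CD}_k$ query sets. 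Consequently the $i$-th query touches a light element with conditional probability at most $\max\bigl(\eta,\ 2mk/(n/4)\bigr)\le 8mk/n$, and multiplying over $i=1,\dots,q$ gives a no-touch probability of at least $1-q\cdot 8mk/n\ge 9/10$. I expect the adaptivity bookkeeping here to be the only delicate point: one must verify that conditioning on ``no touch so far'' genuinely keeps $B$ uniform on the untouched part of $[n]$ and does not create a correlation that the algorithm's next adaptively chosen query could exploit.

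To finish, note that conditioned on the no-touch event the algorithm's transcript has identical distributions under $D^{(1)}$ and under $D^{(2)}$: in both it consists solely of the sample $a$ from {\Samp} queries, of $(a,1-\eta,1)$ (equivalently, the value $1-\eta$) from ${\CD}_k$ queries whose set contains $a$, and of ``failure'' from ${\CD}_k$ queries whose set avoids $a$ --- all determined by $a$ alone, whose law is the same in the two cases. Thus, conditioned on no touch, the algorithm's output is the same random variable in both scenarios, so it is wrong on at least one of $D^{(1)},D^{(2)}$ with probability $\ge\tfrac12$; combined with the no-touch probability this yields error at least $\tfrac12\cdot\tfrac{9}{10}=\tfrac{9}{20}>\tfrac13$. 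By Yao's principle, any algorithm with success probability $2/3$ must therefore make more than $q=\Omega(n/(mk))=\Omega(n/k)$ queries, since $m$ is an absolute constant.
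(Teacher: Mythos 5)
Your proposal is correct and follows essentially the same route as the paper's proof: a Yao-style argument against a mixture of constant-support instances (support $2$ vs.\ $3$ for $m=1$) consisting of one dominant atom plus a few hidden light atoms, together with a union bound showing that $o(n/k)$ {\Samp} and ${\CD}_k$ queries fail to touch any light atom, after which the transcripts are identically distributed. The only differences are cosmetic — you randomize the location of the heavy element and unify the {\Samp} and conditioning queries under a single per-query "touch" probability, whereas the paper fixes the heavy element at position $1$ and bounds the two query types separately — and your explicit treatment of the adaptivity conditioning is, if anything, more careful than the paper's.
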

\begin{proof}
Let us consider the following two sets of distributions. Let $\mathcal{D}_1 = \bigcup_{2 \le i \le n} \{D_i\}$, where $D_i$ is defined as follows.
\begin{align*}
   D_i(j) =
\begin{cases}
1 - \frac{2}{n} &\text{ if } j = 1\\
\frac{2}{n} &\text{ if } j=i\\
0&\text{ otherwise}.
\end{cases} 
\end{align*}
Let $\mathcal{D}_2 = \bigcup_{2 \le i < i' \le n} \{ D_{i,i'}\}$, where $D_{i,i'}$ is defined as follows.
\begin{align*}
   D_{i,i'}(j) =
\begin{cases}
1 - \frac{2}{n} &\text{ if } j = 1\\
\frac{1}{n} &\text{ if } j \in \{i,i'\} \\
0&\text{ otherwise}.
\end{cases} 
\end{align*}
Note that the support size of any distribution in $\mathcal{D}_1$ is $2$, whereas that of any distribution in $\mathcal{D}_2$ is $3$. Consider a distribution $\gamma$ that picks $r \in \{1,2\}$ uniformly at random. Then it samples a distribution from $\mathcal{D}_r$ uniformly at random.
Let $T$ be any deterministic algorithm that estimates the support size within a multiplicative $6/5$-factor with probability at least $2/3$ when a distribution $D$ is randomly chosen as per $\gamma$. By Yao's principle, it suffices to show that $T$ will make $\Omega(\frac{n}{k})$ queries. Observe, since $T$ estimates the support size within a multiplicative $6/5$-factor, one can easily determine the value of $r\in \{1,2\}$ from the output of $T$ (with probability at least $2/3$).

Without loss of generality, assume that the algorithm $T$, in the beginning, knows that $D(1) = 1-\frac{2}{n}$ for any $D \in \mathcal{D}_1 \cup \mathcal{D}_2$ (this is because the algorithm can first place a {\Samp} query to get the element 1 and then a ${\CD}_k$ query conditioning on $\{1\}$ to get to know the value of $D(1)$). Note that the outcome of {\Samp} is the first element with probability $1 - \frac{2}{n}$ for any distribution $D \in \mathcal{D}_1 \cup \mathcal{D}_2$. Thus if $T$ makes $t'$ {\Samp} queries then the probability that all of of them is 1 is $(1- \frac{2}{n})^{t'} \ge 1 - \frac{2t'}{n}$. Thus if $t' \le n/40$ then the algorithm $T$ always receives the first element as an outcome of all the {\Samp} queries with probability at least $\frac{19}{20}$. (Note, if $t' \ge n/40$ then the lower bound trivially holds.)

Let the sets queried by $T$ (on the ${\CD}_k$ oracle), in order, be $A_1,\dots,A_t$. Again, without loss of generality, assume that $A_j \subseteq \{2,\dots,n\}$ for all $j \in [t]$. Conditioned on the event that outcome of all the {\Samp} queries are the first element, the algorithm cannot determine $r \in \{1,2\}$ with probability (strictly) greater than $1/2$ if $D(A_j) = 0$ for all $j \in [t]$. We will later show that if $kt \le   \frac{n-2}{100}$ then the probability that $D(A_j) = 0$ for all $j \in [t]$ is at least $\frac{98}{100}$ (irrespective of $r \in \{1,2\}$). Therefore, using $T$, one can determine the correct value of $r$ with probability at most $\frac{1}{20}+\frac{98}{100} \cdot \frac{1}{2} + \frac{2}{100} \cdot 1 = \frac{56}{100} < \frac{2}{3}$, which leads to a contradiction. Hence, $t \ge \Omega(\frac{n}{k})$.

When $r = 1$, the probability that  $D(A_j) = 0$ for all $j \in [t]$ is at least $1 - \frac{kt}{n-1}$. Similarly, when $r = 2$, the probability that  $D(A_j) = 0$ for all $j \in [t]$ is at least $\frac{\binom{n-kt-1}{2}}{\binom{n-1}{2}} \ge (1 - \frac{kt}{n-2})^2$. So, irrespective of $r \in \{1,2\}$, if $kt \le \frac{n-2}{100}$, then the probability that $D(A_j) = 0$ for all $j \in [t]$ is at least $\frac{98}{100}$.
\end{proof}

Next, we show an upper bound of $O(\frac{n \log \log n}{k})$ queries in the ${\CO}_k$ model.

\begin{theorem}
\label{thm:boundedsetLB}
There is an  algorithm that, given ${\CO}_k$ access to a distribution $D$ on $[n]$, estimates the support size $|\supp(D)|$ within a multiplicative $6/5$-factor with  probability at least $2/3$, while making at most $O(\frac{n \log \log n}{k})$ queries.
\end{theorem}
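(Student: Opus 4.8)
The plan is to reduce the problem to the unbounded-conditioning setting by simulating the emptiness oracle of Falahatgar et al.~\cite{falahatgar2016estimating} using $k$-bounded conditional queries. Recall that their result gives an algorithm which, using only $O(\log \log n)$ queries to an oracle that on input $S \subseteq [n]$ reports whether $S \cap \supp(D) = \emptyset$, outputs a multiplicative $(1+\epsilon)$-factor estimate of $|\supp(D)|$ with probability at least $2/3$; the dependence on $\epsilon$ only contributes an additive $O(1/\epsilon^2)$ term, which is absorbed into $O(\log\log n)$ for constant $\epsilon$. Fixing $\epsilon = 1/5$ yields a $6/5$-factor estimator using $O(\log \log n)$ emptiness queries.

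First I would show that a single emptiness query on an arbitrary set $S \subseteq [n]$ can be answered using at most $\lceil |S|/k \rceil \le \lceil n/k \rceil$ queries to the ${\CO}_k$ oracle. Partition $S$ into parts $S_1, \dots, S_m$ with $m = \lceil |S|/k \rceil$, each of size at most $k$, and query the ${\CO}_k$ oracle on each $S_j$. In Canonne et al.'s model the oracle returns "failure" on $S_j$ exactly when $D(S_j) = 0$, i.e.\ when $S_j \cap \supp(D) = \emptyset$; conversely, any non-failure answer certifies that $S_j$, and hence $S$, meets $\supp(D)$. Thus $S \cap \supp(D) = \emptyset$ if and only if all $m$ probes return "failure", which is determined exactly by the $m$ answers. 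Combining the two observations, we run the Falahatgar et al.\ estimator and answer each of its emptiness queries by the above partition-and-probe simulation; the total number of ${\CO}_k$ queries is $O(\log \log n)\cdot \lceil n/k\rceil = O\!\left(\tfrac{n \log \log n}{k}\right)$ (for $k > n$ the ${\CO}_k$ oracle coincides with the unbounded ${\CO}$ oracle and the bound holds trivially), while the output distribution, and hence the $6/5$-factor guarantee with probability at least $2/3$, is inherited verbatim from the simulated algorithm.

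Since the argument is just a black-box invocation of the known $O(\log \log n)$-query algorithm together with an elementary simulation, there is no genuine obstacle here. The one point deserving (minor) care is that the simulation is sound only because we work in Canonne et al.'s variant of the conditional-sampling model, where conditioning on a zero-mass set is reported as "failure" rather than answered by a uniformly random element; in particular, probing a singleton $\{i\}$ reveals whether $D(i)=0$, and this is precisely what makes the emptiness oracle implementable from ${\CO}_k$ access.
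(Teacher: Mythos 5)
Your proof is correct and matches the paper's argument exactly: both invoke the $O(\log\log n)$-query emptiness-oracle algorithm of Falahatgar et al.\ and simulate each emptiness query on $S$ by partitioning it into at most $\lceil|S|/k\rceil$ parts of size $\le k$ and probing each with ${\CO}_k$, using the "failure" behavior of Canonne et al.'s model to detect zero-mass parts. No further comment is needed.
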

\begin{proof}
Falahatgar et al.~\cite{falahatgar2016estimating} show that $O(\log \log n)$ queries are sufficient (with no restriction on size of the set for conditioning) to get a constant approximation, for an oracle access which given any (arbitrary sized) set $S \subseteq [n]$, returns whether $S \cap \supp(D) = \emptyset$ or not. 

Now we argue that oracle access with an arbitrary sized set $S$  can be simulated by $|S|/k$ many oracle access when conditioning on at most $k$-sized sets is allowed (i.e., using $|S|/k$ many ${\CO}_k$ queries). This can be done by partitioning the set $S$ into $|S|/k$ subsets where each subset is of size at most $k$ (in an arbitrary way). Observe, $S \cap \supp(D) = \emptyset$ if and only if all of these subsets do not intersect with $\supp(D)$, which can be determined by {\CO} query to each of these $k$-sized subsets. This immediately gives an upper bound of  $O(\frac{n \log \log n}{k})$.
\end{proof}
It is worth noting that in the above theorem, we argue that an oracle that on input $S$ just returns whether $S \cap \supp(D) = \emptyset$ or not, can be simulated by $|S|/k$ oracle access when conditioning on at most $k$-sized sets is allowed. (Note, using a {\CO} query, as defined in Canonne \emph{et al.}~\cite{canonne2014testing}, it is possible to determine whether $S \cap \supp(D) = \emptyset$ or not.) However, it is not clear whether it is possible to simulate any arbitrary {\CO} query using $|S|/k$ many ${\CO}_k$ queries. We now show that for stronger {\CD} model, a  {\CD} access to a set $S$  can \emph{always} be simulated by $|S|/k$ many oracle access when conditioning on at most $k$-sized sets is allowed.
\begin{lemma}
Given {\CD} access to a distribution $D$, a {\CD} oracle access on a set $S$ can be simulated by $|S|/k$ many ${\CD}_k$ oracle access.
\end{lemma}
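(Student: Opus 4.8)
The plan is to adapt the partitioning idea from the proof of~\autoref{thm:boundedsetLB}, but now we must additionally recover the sampled element together with its exact and conditional probabilities, not merely decide whether $S$ intersects the support. First, I would partition $S$ arbitrarily into $\lceil |S|/k\rceil$ blocks $S_1,\dots,S_{\lceil |S|/k\rceil}$, each of size at most $k$, and issue one ${\CD}_k$ query on each block. For a block $S_\ell$ with $D(S_\ell) > 0$ the oracle returns a triple $(j_\ell, D(j_\ell), D(j_\ell)/D(S_\ell))$, from which we recover $D(S_\ell)$ exactly by dividing $D(j_\ell)$ by $D(j_\ell)/D(S_\ell)$; for a block with $D(S_\ell)=0$ the oracle returns "failure", which tells us $D(S_\ell)=0$. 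Summing then gives $D(S)=\sum_\ell D(S_\ell)$, and if this equals $0$ we output "failure", exactly as a genuine {\CD} query on $S$ would.

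It remains to produce a sample with the correct law. The key step is a two-stage sampling using the simulator's internal randomness: first choose a block index $\ell$ with probability $D(S_\ell)/D(S)$ (all these quantities are now known), and then output the element $j_\ell$ already returned by the ${\CD}_k$ query on $S_\ell$. Because the blocks partition $S$, every $j\in S$ lies in exactly one block, say $S_{\ell(j)}$, and $j_{\ell(j)}$ was sampled with probability $D(j)/D(S_{\ell(j)})$ conditioned on $S_{\ell(j)}$; hence
\[
\Pr[\text{output}=j] \;=\; \frac{D(S_{\ell(j)})}{D(S)}\cdot\frac{D(j)}{D(S_{\ell(j)})} \;=\; \frac{D(j)}{D(S)},
\]
which is precisely the conditional distribution induced by a {\CD} query on $S$. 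Since $D(j)$ was also returned by the block query and $D(S)$ is known, we can output the full triple $(j, D(j), D(j)/D(S))$. The total number of ${\CD}_k$ queries is $\lceil |S|/k\rceil$, as claimed.

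I do not expect a genuine obstacle here; the only points worth checking carefully are that the ${\CD}_k$ calls on distinct blocks use independent oracle randomness (they do, being separate queries), so the composed output has exactly the right joint law, and that the blocks genuinely partition $S$ so that no element is double-counted either in the mass $D(S)$ or in the two-stage sampling step.
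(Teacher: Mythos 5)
Your proposal is correct and matches the paper's own proof essentially verbatim: the same arbitrary partition into at most-$k$-sized blocks, recovery of each $D(S_\ell)$ from the returned triple, and the two-stage sampling that picks block $\ell$ with probability $D(S_\ell)/D(S)$ and outputs that block's sampled element. The only difference is that you spell out the independence of the block-selection randomness from the oracle randomness, which the paper leaves implicit.
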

\begin{proof}
Given a set $S$, we arbitrarily partition it into $|S|/k$ subsets each of size at most $k$. Let theses $|S|/k$ subsets be $S_1,\dots,S_{|S|/k}$. Recall, a {\CD} oracle for a distribution $D$ takes a set $S$ as input and returns a tuple $(j,D(j),\frac{D(j)}{D(S)})$ (where $j \in S$)  with probability $\frac{D(j)}{D(S)}$.  Note that returning a tuple $(j,D(j),\frac{D(j)}{D(S)})$ is equivalent to returning a tuple $(j,D(j),D(S))$. Now let $(j_i,D(j_i),D(S_i))$ be the outcome of ${\CD}_k$ on input the subset $S_i$ for $i \in [k]$. Note that  $D(S) = \sum_{i \in [|S|/k]}D(S_i)$.  If $D(S_i) = 0$ for all $i \in [k]$ then  obviously $D(S)= 0$ (and we are done as we know that $D(S) = 0$). So assume at least one of $D(S_i)$ is non-zero.  We now pick the $i$-th subset with probability $\frac{D(S_i)}{D(S)}$ and then return the tuple $(j_i,D(j_i),D(S))$ as the outcome for oracle access on input set $S$. It is easy to see that each $j_i$ is picked with probability $\frac{D(S_i)}{D(S)} \cdot \frac{D(j_i)}{D(S_i)} = \frac{D(j_i)}{D(S)}$, which concludes the proof.
\end{proof}

As a direct corollary, we get the following.
\begin{corollary}
Any algorithm with {\CD} query access to a distribution $D$ on $[n]$, that makes at most $t(n)$ queries, can be simulated using another algorithm with only ${\CD}_k$ query access to $D$, that makes at most $O(\frac{n}{k} t(n))$ queries.
\end{corollary}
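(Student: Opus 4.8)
The plan is to simply iterate the preceding Lemma over the (at most $t(n)$) queries made by the given {\CD} algorithm $\mathcal{A}$. Run $\mathcal{A}$ step by step; at each step $\mathcal{A}$ specifies some set $S \subseteq [n]$ on which it places a {\CD} query, so in particular $|S| \le n$. By the Lemma, this single {\CD} query can be simulated using at most $|S|/k \le n/k$ many ${\CD}_k$ queries, where the simulation returns a tuple $(j, D(j), D(j)/D(S))$ distributed \emph{exactly} as the true {\CD} oracle would return (and correctly reports ``failure'', i.e., $D(S)=0$, in that case). Feed this outcome back to $\mathcal{A}$ and continue. Since the simulated outcome has exactly the same distribution as a genuine {\CD} response, the resulting algorithm $\mathcal{B}$ (which has only ${\CD}_k$ access) reproduces the behaviour of $\mathcal{A}$ faithfully, including its adaptivity and its success probability.

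It remains to count ${\CD}_k$ queries. The original algorithm makes at most $t(n)$ {\CD} queries, each of which is replaced by at most $n/k$ many ${\CD}_k$ queries, so the total number of ${\CD}_k$ queries made by $\mathcal{B}$ is at most $\frac{n}{k} \cdot t(n) = O\!\left(\frac{n}{k} t(n)\right)$, as claimed.

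There is essentially no obstacle here: the only nontrivial content — that a single {\CD} query on a set $S$ can be realised, with the exact correct output distribution, by $|S|/k$ many ${\CD}_k$ queries — is precisely the statement of the Lemma proved immediately above, and the corollary is just the bookkeeping step of applying it to each of the $t(n)$ queries and summing. The one point worth stating explicitly in the write-up is that the simulation preserves the output distribution at every step, so that composing the per-query simulations over an adaptive run of $\mathcal{A}$ does not introduce any error and the success probability is unchanged.
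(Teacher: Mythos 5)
Your proposal is correct and matches the paper's intent exactly: the paper presents this as a direct corollary of the preceding lemma (with no written proof), and your argument — replace each of the $t(n)$ {\CD} queries by at most $n/k$ many ${\CD}_k$ queries via the lemma, noting that the simulated output distribution is exact so adaptivity and success probability are preserved — is precisely the intended bookkeeping.
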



\paragraph{Acknowledgments} The authors would like to thank anonymous reviewers for their useful suggestions and comments on an earlier version of this paper. Diptarka Chakraborty was supported in part by an MoE AcRF Tier 2 grant (WBS No. A-8000416-00-00).

\bibliography{references}


\end{document}